\title{Is Planted Coloring Easier than Planted Clique?}
\date{}
\author[1]{Pravesh K.\ Kothari\thanks{Email: \textit{praveshk@cs.cmu.edu}. Supported by  NSF CAREER Award \#2047933, Alfred P.\ Sloan Fellowship and a Google Research Scholar Award.}}
\author[2]{Santosh S.\ Vempala\thanks{Email: \textit{vempala@gatech.edu}. Supported in part by NSF awards CCF-2007443 and CCF-2106444.}}
\author[3]{Alexander S.\ Wein\thanks{Email: \textit{aswein@ucdavis.edu}. Part of this work was done while with the Algorithms and Randomness Center at Georgia Tech, supported by NSF awards CCF-2007443 and CCF-2106444.}}
\author[1]{Jeff Xu\thanks{Email: \textit{jeffxusichao@cmu.edu}. Supported in part by NSF CAREER Award \#2047933.}}
\affil[1]{Computer Science Department, Carnegie Mellon University}
\affil[2]{School of Computer Science, Georgia Tech}
\affil[3]{Department of Mathematics, University of California, Davis}
\begin{document}

\maketitle

\begin{abstract}
We study the computational complexity of two related problems: recovering a planted $q$-coloring in $G(n,1/2)$, and finding efficiently verifiable witnesses of non-$q$-colorability (a.k.a.\ refutations) in $G(n,1/2)$. Our main results show hardness for both these problems in a restricted-but-powerful class of algorithms based on computing low-degree polynomials in the inputs.

The problem of recovering a planted $q$-coloring is equivalent to recovering $q$ disjoint planted cliques that cover all the vertices --- a potentially easier variant of the well-studied planted clique problem. Our first result shows that this variant is as hard as the original planted clique problem in the low-degree polynomial model of computation: each clique needs to have size $k \gg \sqrt{n}$ for efficient recovery to be possible. For the related variant where the cliques cover a $(1-\epsilon)$-fraction of the vertices, we also show hardness by reduction from planted clique.

Our second result shows that refuting $q$-colorability of $G(n,1/2)$ is hard in the low-degree polynomial model when $q \gg n^{2/3}$ but easy when $q \lesssim n^{1/2}$, and we leave closing this gap for future work. Our proof is more subtle than similar results for planted clique and involves constructing a non-standard distribution over $q$-colorable graphs. We note that while related to several prior works, this is the first work that explicitly formulates refutation problems in the low-degree polynomial model.

The proofs of our main results involve showing low-degree hardness of hypothesis testing between an appropriately constructed pair of distributions. For refutation, we show \emph{completeness} of this approach: in the low-degree model, the refutation task is precisely as hard as the hardest associated testing problem, i.e., proving hardness of refutation amounts to finding a ``hard" distribution.
\end{abstract}

\section{Introduction}

The \emph{planted clique} problem, introduced by \cite{Jerrum} and \cite{kucera}, asks for a polynomial-time algorithm to find a clique of size $k$ added to an \ER random graph $G(n,1/2)$. The associated task of \emph{refuting} the existence of $k$-cliques in $G \sim G(n,1/2)$ asks for a polynomial-time algorithm to compute a certificate that can be efficiently verified to infer the absence of a $k$-clique in $G$. Despite a long line of work, state-of-the-art polynomial-time algorithms for both problems~\cite{AKS98} only succeed when $k = \Omega(\sqrt{n})$. In contrast, the clique number of $G(n,1/2)$ is at most $ \lceil 2 \log_2 n \rceil+1$ with high probability and thus, an added clique of any size $k > \lceil 2 \log_2 n \rceil+1 $ is uniquely identifiable. A long line of work proving lower bounds in various restricted models such as Markov chains~\cite{Jerrum}, the Statistical Query model~\cite{sq-clique}, convex relaxations~\cite{FeigeK03} and in particular the sum-of-squares hierarchy and the related low-degree polynomial model of computation~\cite{BHKK+16,hopkins-thesis}, suggest that the \emph{algorithmic threshold} for both variants --- the smallest $k$ for which efficient algorithms can find the added $k$-clique or refute the existence of $k$-cliques in $G(n,1/2)$ --- is $\Omega(\sqrt{n})$. In the past two decades, the hypothesis that no polynomial-time procedure can beat the above guarantees of the known algorithms has become a focal point in average-case complexity theory and the root of myriad reductions to average-case problems arising in various domains (e.g.,~\cite{BR-reduction,HWX-reductions,BBH-reductions,hardness-of-Nash-welfare}).

\paragraph{Two motivating problems: recovery and refutation of $q$-colorings.}

In this paper, we study the following innocuous-looking (and ostensibly easier than planted clique) question where, in the \emph{recovery} problem, we study the complexity of exactly recovering $\approx n/k$ disjoint planted $k$-cliques in $G \sim G(n,1/2)$, with high success probability. If the disjoint planted cliques cover all the vertices of the graph, then the complement of the graph has a planted $(n/k)$-coloring. Thus, this version of our problem is tantamount to studying whether recovering a {\em planted $q$-coloring} in $G(n,1/2)$ is easier than recovering a single planted clique. In the associated \emph{refutation} problem, the goal is to find an algorithm that takes as input a graph $G$ and outputs NO or MAYBE with the guarantee that (1) whenever it outputs NO, the graph must not admit a valid $q$-coloring of its vertices, and (2) when $G \sim G(n,1/2)$, the algorithm should output NO with probability $1-o(1)$ over the draw of $G$.

The relation between the recovery and refutation tasks is somewhat subtle: while these two problems appear related, we are not aware of a formal reduction between them in either direction. In this paper, we study the recovery and refutation problems separately, and draw attention to the fact that rather different methods will be needed to prove lower bounds in the two settings. We note that for colorability of \emph{sparse} random regular graphs, there appears to be a constant-factor gap between the recovery and refutation thresholds~\cite{spectral-planting}.

\paragraph{Proof strategy: hypothesis testing.}

One common strategy to understand the complexity of recovery or refutation is to introduce an auxiliary \emph{hypothesis testing} task: given a graph $G$ that is sampled either from some ``null'' distribution $\QQ$ (e.g., $G(n,1/2)$) or some ``planted'' distribution $\PP$ (e.g., some distribution supported on $q$-colorable graphs), design an efficiently computable statistical test that decides which of the two distributions generated a given sample $G$, with high success probability over the draw of $G$. Note that if there is an efficient refutation algorithm for some distribution $\QQ$, then we immediately obtain an efficient distinguisher between $\QQ$ and \emph{any} distribution $\PP$ supported on $q$-colorable graphs. Similarly, if there is an efficient recovery algorithm for some distribution $\PP$, then we immediately obtain a distinguishing algorithm between $\PP$ and \emph{any} distribution $\QQ$ supported on non-$q$-colorable graphs. As a result of this connection, we can conclude: 
\begin{itemize}
    \item [(I)] To show computational hardness of exactly recovering a planted $q$-coloring in a particular planted distribution $\PP$, it suffices to construct a null distribution $\QQ$ such that (i) with high probability, $G \sim \QQ$ is not $q$-colorable and (ii) it is computationally hard to distinguish $\PP$ from $\QQ$.
    \item [(II)] To show computational hardness of refuting $q$-colorability for a particular null distribution $\QQ$, it suffices to construct a planted distribution $\PP$ such that (i) $\PP$ is supported on $q$-colorable graphs and (ii) it is computationally hard to distinguish $\PP$ from $\QQ$.
\end{itemize}
Note that we have flexibility to choose either $\QQ$ (if studying recovery) or $\PP$ (if studying refutation). We will see later that it can be a non-trivial task to construct the right distribution. It need not be the case that the same testing problem arises when studying recovery as when studying refutation.

Strategy (II) has been referred to as constructing a \emph{computationally quiet planted distribution}~\cite{sk-cert}, where ``quiet'' pertains to the fact that the planted structure's presence cannot be detected by an efficient algorithm. Similarly, strategy (I) corresponds to constructing a \emph{computationally quiet null distribution}.

Since proving lower bounds for average-case hypothesis testing problems based on standard hardness assumptions is an elusively difficult goal at present (notwithstanding the recent successes~\cite{BBH-reductions,secret-leakage} that use the hardness of planted clique and its variants as a starting point in certain limited settings), we will obtain evidence of hardness for testing problems by focusing on a restricted but powerful and well-studied family of tests that we next describe.

\paragraph{Low-degree testing.} The low-degree polynomial model of hypothesis testing restricts the class of tests to be  polynomial functions in a natural representation of the input, with the complexity of a test captured by the degree of the polynomial. Specifically, viewing graphs as elements of $\{-1,1\}^{{n \choose 2}}$ with a $\{\pm 1\}$-indicator of presence or absence of every possible edge, the low-degree polynomial tests informally correspond to computing thresholds of arbitrary degree-$D$ polynomials of the edge-indicator variables. Since degree-$D$ polynomials can be computed (when described in the monomial coefficient representation) in time $n^{O(d)}$, constant-degree tests yield polynomial-time distinguishing algorithms. Despite being restricted, these low-degree tests already capture tests based on basic statistics of graphs such as edge counts, triangle counts, and more generally small subgraph counts (the number of edges in the subgraph corresponds to the degree of the polynomial). Various spectral methods (e.g., the leading eigenvalue of the adjacency matrix, or some other symmetric matrix whose entries are low-degree polynomials of the input variables) can also be approximated by polynomial tests of logarithmic degree in the number of variables; see~\cite[Section~4.2.3]{ld-notes}. As a result, low-degree tests (with degree $O(\log n)$) already capture the best known polynomial-time algorithms for a wide variety of high-dimensional statistical testing tasks (although we won't attempt to precisely characterize which tasks here; see e.g.~\cite{sos-detecting,hopkins-thesis,ld-notes,ld-counterexamples,lll} for discussion). As a result, if we manage to establish that all degree-$D$ tests provably fail to solve a particular testing problem for some $D = \omega(\log n)$, we say the problem is ``low-degree hard.'' This can be viewed as evidence suggesting computational hardness of the hypothesis testing problem. This is a widely-applicable and by now, commonly-used framework that originated in a line of work on proving lower bounds against the sum-of-squares hierarchy~\cite{BHKK+16,HS-bayesian,sos-detecting} (see also~\cite{hopkins-thesis,ld-notes} for further exposition).

\paragraph{Summary of results.}

Our main results use strategies (I) and (II) described above to shed light on the computational complexity of recovery and refutation of $q$-coloring. The formal models and statements are presented in the next section, but here we give a brief overview. Throughout, we will implicitly assume an asymptotic regime $n \to \infty$ where other parameters (e.g., $q, k$) may scale with $n$. We say an event occurs ``with high probability (w.h.p.)'' if it has probability $1-o(1)$ as $n \to \infty$. Since our focus is on identifying computational thresholds up to the correct power of $n$, we use the symbol $\ll$ in our informal discussions to hide factors of $n^{o(1)}$.

Our main result for the recovery problem shows that adding $\approx n/k$ disjoint cliques of size $k$ (instead of a single one) does not make the problem of recovering the added planted cliques easier. That is, our lower bounds suggest that each added clique needs to be of size  $\gtrsim \sqrt{n}$ for efficient recovery to be possible. 

In contrast and perhaps surprisingly, it turns out that adding more cliques makes the problem of \emph{distinguishing} the planted graph from $G(n,1/2)$ easier, simply by counting the total number of edges. This reveals a \emph{detection-recovery gap}, in contrast to the single planted clique problem (see Section~\ref{sec:reduction}).

More precisely, our results for recovery are as follows:
\begin{itemize}
    \item In the planted partial-coloring model where some fraction of the vertices are colored (equivalently, many disjoint planted cliques in $G(n,1/2)$ that cover at most a $(1-\epsilon)$-fraction of the graph), we show that:
    \begin{itemize}
    \item[(i)] If each clique has size $k \gg \sqrt{n}$, a simple algorithm can be used to recover them.
    \item[(ii)] If each clique has size $k \ll \sqrt{n}$, it is computationally hard to recover them assuming the Planted Clique Hypothesis. That is, recovering many planted $k$-cliques is as hard as recovering a single planted $k$-clique.
    \end{itemize}
    \item In the full planted coloring model ($q$ planted cliques of size $k = n/q$ partitioning the entire graph), we are unable to show hardness via reduction, but instead give an indirect argument that supports the same conclusion as above:
    \begin{itemize}
    \item[(i)] If each clique has size $k \gg \sqrt{n}$, there is again a simple algorithm to recover them.
    \item[(ii)] If each clique has size $k \ll \sqrt{n}$, we argue that recovery is computationally hard via strategy (I), taking the null distribution $\QQ$ to be a planted $(q+1)$-coloring. In other words, we prove that low-degree tests cannot even distinguish a planted $q$-colorable graph from a planted $(q+1)$-colorable graph. This suggests hardness of recovery via a two-stage argument described in Section~\ref{sec:q-vs-ql}.
    \end{itemize}
\end{itemize}

 For the problem of refuting $q$-colorability in $G(n,1/2)$, it is known that a poly-time algorithm exists when $k \coloneqq n/q \gg \sqrt{n}$ \cite{lovasztheta}. To explore the complexity of this problem, we explicitly formulate the refutation problem in the low-degree polynomial model (for the first time), and show the following:

\begin{itemize}
    \item If $k \gtrsim \sqrt{n}$ (i.e., $q \lesssim \sqrt{n}$), then there is a low-degree polynomial that refutes $q$-colorability in $G(n,1/2)$.
    \item If $k \ll n^{1/3}$ (i.e., $q \gg n^{2/3}$), then no low-degree polynomial refutes $q$-colorability in $G(n,1/2)$. The proof follows strategy (II) and involves constructing a non-trivial planted distribution $\PP$.
    \item We conjecture $k \sim \sqrt{n}$ is the true low-degree refutation threshold, and we leave this to future work. One way to improve the lower bound would be to construct a ``quieter'' planted distribution, i.e., a distribution supported on $q$-colorable graphs that is low-degree hard to distinguish from $G(n,1/2)$ whenever $k \ll \sqrt{n}$. Our final result is a duality argument showing that in fact, the conjecture is \emph{equivalent} to the existence of such a planted distribution.
\end{itemize}

\section{Results}

A central concept in this work will be that of {\em hypothesis testing} between two high-dimensional distributions. We consider two (sequences of) distributions $\PP = \PP_n$ and $\QQ = \QQ_n$. For us, these distributions will always be over $n$-vertex graphs. We use the following asymptotic notion of successful testing.

\begin{definition}[Strong distinguishing]\label{def:strong-det}
For two distributions $\PP_n$ and $\QQ_n$, we say an algorithm $A_n$ \emph{strongly distinguishes} $\PP$ and $\QQ$ if it takes as input a sample drawn from one of the two distributions and correctly determines which distribution it came from with probability $1-o(1)$ as $n \to \infty$. In other words, both type I and type II error probabilities must be $o(1)$.
\end{definition}

We will also be interested in the following class of ``low-degree'' tests. A degree-$D$ test is simply a (multivariate) polynomial in the input variables (or rather a sequence of such polynomials, one for each problem size $n$). In our case, there will be $\binom{n}{2}$ input variables --- one for every possible edge in an $n$-vertex graph --- taking values in $\{\pm 1\}$, where $+1$ indicates the presence of an edge and $-1$ indicates the absence. We use the following standard notion of ``success'' for a polynomial test.

\begin{definition}[Strong/weak separation of distributions]
Suppose $\PP_n$ and $\QQ_n$ are distributions on $\RR^N$ for some $N = N_n$. A polynomial $f_n: \RR^N \to \RR$ is said to \emph{strongly separate} $\PP$ and $\QQ$ if, as $n \to \infty$,
\[ \sqrt{\max\left\{\Var_\QQ[f], \Var_\PP[f]\right\}} = o\left(\left|\EE_\PP[f] - \EE_\QQ[f]\right|\right), \]
and \emph{weakly separate} $\PP$ and $\QQ$ if
\[ \sqrt{\max\left\{\Var_\QQ[f], \Var_\PP[f]\right\}} = O\left(\left|\EE_\PP[f] - \EE_\QQ[f]\right|\right). \]
\end{definition}

Note that strong separation implies that $\PP$ and $\QQ$ can be strongly distinguished by thresholding the value of the polynomial $f$. Weak separation implies that the output of $f$ can be used to distinguish better than random guessing; see~\cite[Proposition~6.1]{fp}.

In our case, the input variables will take values in $\{\pm 1\}$ and so the polynomial $f$ can be multilinear without loss of generality.

If all degree-$D$ polynomials \emph{fail} to strongly separate $\PP$ and $\QQ$ for some $D = \omega(\log n)$, we say the testing problem is ``low-degree hard.'' As explained in the introduction, this can be viewed as evidence for inherent computational hardness of strong distinguishing.

Proofs that rule out strong or weak separation typically proceed by bounding the {\em advantage}, defined below:
\begin{equation}\label{eq:adv}
\Adv_{\le D}(\PP,\QQ) \coloneqq \sup_{f \in \RR[Y]_{\le D}} \frac{\E_\PP[f]}{\sqrt{\E_\QQ[f^2]}},
\end{equation}
where $\RR[Y]_{\le D}$ denotes the set of polynomials $\RR^N \to \RR$ of degree (at most) $D$. It is well known that $\Adv_{\le D}$ also admits a characterization as the \emph{norm of the low-degree likelihood ratio}; see~\cite{hopkins-thesis,ld-notes}. If $\Adv_{\le D} = O(1)$ then strong separation is impossible, and if $\Adv_{\le D} = 1 + o(1)$ then weak separation is impossible (see Lemma~\ref{lem:adv-sep}).

\subsection{Recovery}

\subsubsection{Models}

The primary objective of this section will be to understand the recovery problem in two related models for planted coloring and planted partial-coloring. As explained in the introduction, the complement of a $q$-colorable graph is partitioned into $q$ cliques. To fix notation and compare with the standard planted clique model, we will take the clique perspective here. Thus we study the problem of multiple cliques planted in $G(n,1/2)$.

The first model $\MC(n,q)$ (``multiple cliques'') corresponds to a true planted coloring, i.e., the cliques partition the entire graph.

\begin{definition}\label{def:MC}
In the model $\MC(n,q)$, we observe an $n$-vertex graph where each vertex is independently assigned a uniformly random label from $[q] \coloneqq \{1,2,\ldots,q\}$. Vertices with the same label are always connected, and vertices with different label are connected with probability $1/2$. Given the graph, the goal is to exactly recover the clique partition with probability $1-o(1)$ as $n \to \infty$, where $q = q_n$ may scale with $n$.
\end{definition}

The next model is a variation for partial coloring, i.e., the cliques do not partition the entire graph. For technical convenience, the cliques in this model have exactly the same size, unlike $\MC(n,q)$.

\begin{definition}
In the model $\MC(n,q,\delta)$, we observe an $n$-vertex graph where $(1-\delta)n$ vertices are partitioned into $q$ cliques, each of size exactly $k \coloneqq (1-\delta)n/q$ (which we assume is an integer). Two vertices in the same clique are always connected, and all remaining edges occur independently with probability $1/2$. Given the graph, the goal is to exactly recover the clique partition (and identify the non-clique vertices) with probability $1-o(1)$ as $n \to \infty$, where the parameters $q = q_n$ and $\delta = \delta_n$ may scale with $n$.
\end{definition}

\subsubsection{Hardness of planted partial-coloring via reduction}
\label{sec:reduction}

We now consider the recovery problem in $\MC(n,q,\delta)$. First, we observe that a simple algorithm based on examining degrees and common neighbors can exactly recover the cliques when $k \gg \sqrt{n}$. This matches (up to log factors) the best known algorithms for recovering a single planted $k$-clique in $G(n,1/2)$.

\begin{restatable}[Upper bound]{theorem}{thmrecoveryupper}
\label{thm:exact_MC_algo}
If $q,\delta$ scale with $n$ such that $k \coloneqq (1-\delta)n/q = \omega(\sqrt{n \log n})$ then there is a polynomial-time algorithm achieving exact recovery w.h.p.\ in $\MC(n,q,\delta)$.
\end{restatable}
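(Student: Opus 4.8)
The plan is to give a simple, robust recovery algorithm based on degrees and codegrees (sizes of common neighborhoods), and to argue correctness by a union bound over the relevant ``bad'' events, each controlled by a Chernoff/Hoeffding bound together with the hypothesis $k = \omega(\sqrt{n\log n})$.

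First I would analyze \emph{degrees}. A vertex in a clique of size $k$ has $k-1$ guaranteed neighbors within its clique, plus roughly $\mathrm{Binomial}(n-k, 1/2)$ many others, so its degree concentrates around $(k-1) + (n-k)/2 = n/2 + k/2 - 1 + o(k)$. A non-clique vertex has degree concentrating around $(n-1)/2$. Since the gap is $\approx k/2 = \omega(\sqrt{n\log n})$ while the fluctuations of each degree are $O(\sqrt{n\log n})$ with probability $1 - n^{-\omega(1)}$ by Hoeffding, a union bound over all $n$ vertices lets us \emph{exactly} identify which vertices are clique vertices and which are not, just by thresholding degree at, say, $(n-1)/2 + k/4$.

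Next I would recover the partition among the clique vertices using \emph{codegrees}. For two clique vertices $u,v$: if they lie in the same clique, their common neighborhood contains the other $k-2$ clique-mates for sure, so the codegree concentrates around $(k-2) + (n-k)/4 \approx n/4 + 3k/4$; if they lie in different cliques, the common neighborhood gets a sure contribution only from... actually, I'd be careful here — two vertices $u\in C_i$, $v\in C_j$ with $i\neq j$ have no forced common neighbors beyond what randomness provides among vertices outside $C_i\cup C_j$, plus vertices of $C_i$ are neighbors of $u$ and connect to $v$ w.p.\ $1/2$, and symmetrically for $C_j$; this gives codegree concentrating around $(n-2)/4 + k/4 \cdot 2 \cdot (1/2)\cdot\ldots$, in any case strictly below the same-clique value by $\Omega(k)$. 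Again the separation is $\Omega(k) = \omega(\sqrt{n\log n})$ and each codegree fluctuates by $O(\sqrt{n\log n})$ with probability $1-n^{-\omega(1)}$, so a union bound over all $\binom{n}{2}$ pairs lets us correctly decide ``same clique vs.\ different clique'' for every pair. The ``same clique'' relation then partitions the clique vertices exactly; output this partition together with the identified set of non-clique vertices.

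The main obstacle — really the only place needing care — is pinning down the codegree gap \emph{precisely} enough and confirming it stays $\Omega(k)$ uniformly, since the forced contributions from cliques $C_i$, $C_j$ make the same-clique and different-clique means closer than naive intuition suggests (the difference is $\Theta(k)$, not $\Theta(n)$). One must also handle the case $q$ large (so $k$ small but still $\omega(\sqrt{n\log n})$) versus $q$ small; the argument above is uniform in these regimes since everything is driven by the single quantity $k$. A minor point: when $\delta$ is close to $1$ there are few clique vertices, but this only helps the union bounds. Concentration throughout is via Hoeffding's inequality for sums of independent bounded variables (the only randomness is the $\mathrm{Bernoulli}(1/2)$ non-clique edges), and choosing the failure probability per event to be $n^{-10}$, say, makes all union bounds $o(1)$; this costs only an $O(\sqrt{\log n})$ factor in the required deviations, which is exactly what the $\omega(\sqrt{n\log n})$ hypothesis (as opposed to $\omega(\sqrt n)$) is there to absorb.
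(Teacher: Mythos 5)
Your proposal is correct and follows essentially the same route as the paper: threshold vertex degrees to separate clique from non-clique vertices (means differ by $\approx k/2$), then threshold codegrees among the clique vertices (same-clique mean $\approx n/4 + 3k/4$ versus different-clique mean $\approx n/4 + k/2$, a gap of $k/4$), with Hoeffding/Bernstein concentration and union bounds whose $\sqrt{\log n}$ cost is absorbed by the hypothesis $k = \omega(\sqrt{n\log n})$. The one step you left fuzzy, the different-clique codegree, works out exactly as you anticipated (forced contribution $2(k-1)\cdot\tfrac12$ plus $\mathrm{Bin}(n-2k,1/4)$), so nothing is missing.
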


We next show a matching lower bound: computational hardness of recovering the cliques when $k \ll \sqrt{n}$. This result will be conditional on the \emph{Planted Clique Hypothesis}, a conjecture that is commonly used as the basis for deducing average-case hardness results. In the \emph{planted clique model} $\PC(N,K)$, an $N$-vertex graph has a clique on $K$ vertices, and all other edges occur independently with probability $1/2$. The following version of the conjecture appears, for instance, as Conjecture~2.1 in~\cite{BBH-reductions}.

\begin{conjecture}[Planted Clique Hypothesis]
\label{conj:pc}
If $K = K_N$ scales as $K \le N^{1/2 - \Omega(1)}$ then no sequence of randomized polynomial-time algorithms $B_N$ can strongly distinguish (Definition~\ref{def:strong-det}) between $\PC(N,K)$ and $G(N,1/2)$.
\end{conjecture}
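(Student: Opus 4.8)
Strictly speaking, the statement labeled Conjecture~\ref{conj:pc} is not something one can prove: it asserts average-case hardness against \emph{every} randomized polynomial-time algorithm, and an unconditional proof would imply complexity separations (such as $\mathsf{NP} \not\subseteq \mathsf{BPP}$, and in fact much more) that are far out of reach of current techniques. So the realistic plan is not to prove the conjecture outright but to (a) record the evidence that justifies adopting it as a working hypothesis, and (b) indicate the form in which ``proofs'' of planted-clique hardness actually exist, namely impossibility within increasingly powerful \emph{restricted} models of computation.

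The cleanest such result, and the one most directly aligned with the framework of this paper, is low-degree hardness. Here the plan is: (i) instantiate the advantage $\Adv_{\le D}(\PC(N,K),G(N,1/2))$ from~\eqref{eq:adv}; (ii) use the standard identity expressing $\Adv_{\le D}^2$ as a sum, over Fourier characters $\chi_S$ of degree at most $D$, of the squared Fourier coefficients of the likelihood ratio $d\PP/d\QQ$ — for planted clique each term is indexed by a vertex subset $S$ and, after a short computation, the sum collapses to roughly $\sum_{s \le D} \binom{N}{s}(K/N)^{2s}$, i.e.\ $\sum_{s \le D} (K^2/N)^{s}$ up to lower-order factors; (iii) observe that when $K \le N^{1/2-\Omega(1)}$ we have $K^2/N \le N^{-\Omega(1)}$, so this geometric-type series is $1+o(1)$ even for $D$ as large as $N^{\Omega(1)}$. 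This shows that no polynomial of degree $N^{\Omega(1)}$ — in particular none of the $O(\log N)$-degree polynomials that capture spectral methods and subgraph counts — can strongly, or even weakly, separate $\PC(N,K)$ from $G(N,1/2)$, which is precisely the low-degree analogue of the conjecture. Parallel arguments in other restricted models — Markov chains~\cite{Jerrum}, the Statistical Query model~\cite{sq-clique}, the Lovász–Schrijver and Sherali–Adams hierarchies~\cite{FeigeK03}, and the sum-of-squares hierarchy~\cite{BHKK+16} (where the core step is exhibiting a degree-$D$ pseudo-expectation, via pseudo-calibration, certifying a large ``fake'' clique in $G(N,1/2)$) — all yield the same threshold $K \sim \sqrt{N}$ and constitute the bulk of the evidence.

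The main obstacle is exactly the gap between these restricted models and the class of all polynomial-time algorithms: no available technique for proving lower bounds against general Turing machines (diagonalization, circuit lower bounds, and so on) is known to apply to average-case problems of this kind, and conversely there is no reduction from a worst-case-hard problem to planted-clique detection. Consequently the honest status of the statement is that of an \emph{assumption}, used here (as throughout the average-case complexity literature) as a conjectural hardness hypothesis, supported by the failure of a long line of algorithmic efforts to beat $K = \Theta(\sqrt{N})$ together with the matching unconditional lower bounds in every restricted model listed above. Its role in this paper is to serve as the starting point for the reduction establishing hardness of recovery in $\MC(n,q,\delta)$ when $k \ll \sqrt{n}$.
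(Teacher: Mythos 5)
You are right that this statement is not provable by current techniques: it is a conjecture, and the paper treats it exactly that way, importing it verbatim (as Conjecture~2.1 of~\cite{BBH-reductions}) and using it only as a hypothesis from which Theorem~\ref{thm:reduction} is derived by reduction, with the restricted-model lower bounds (\cite{Jerrum,sq-clique,FeigeK03,BHKK+16,hopkins-thesis}) cited as supporting evidence rather than proof. So your framing --- record the evidence, prove hardness only in restricted models --- matches the paper's stance.

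One concrete error in your low-degree sketch, though: the advantage is a sum over \emph{edge} subsets $\alpha \subseteq \binom{N}{2}$ with $|\alpha| \le D$, not over vertex subsets, and when you group terms by $s = |V(\alpha)|$ you pick up a combinatorial factor counting the graphs on $s$ vertices with at most $D$ edges (roughly $\min\{2^{s^2}, s^{2D}\}$, as in Lemma~\ref{lem:count-graphs}). Your claim that the series is $1+o(1)$ ``even for $D$ as large as $N^{\Omega(1)}$'' drops this factor and is false: the bound $\Adv_{\le D} = 1+o(1)$ for $K \le N^{1/2-\Omega(1)}$ is only known (and only true) for $D = o(\log N/\log\log N)^2$, which is exactly the degree range quoted in the paper's remark following Proposition~\ref{prop:strat-II} and in~\cite[Section~2.4]{hopkins-thesis}. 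At degree $\Theta(\log^2 N)$ the advantage blows up --- polynomials of that degree can count cliques of size $O(\log N)$ and thereby detect any planted clique of size $(2+\epsilon)\log_2 N$, mirroring the quasipolynomial-time brute-force algorithm --- so the low-degree evidence for the hypothesis is inherently limited to polylogarithmic degree, and claiming $N^{\Omega(1)}$-degree hardness would actually contradict known algorithms.
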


Assuming this conjecture, we have the following hardness result for $\MC(n,q,\delta)$.

\begin{restatable}[Lower bound]{theorem}{thmrecoverylower}
\label{thm:reduction}
Assume the Planted Clique Hypothesis (Conjecture~\ref{conj:pc}). If $q,\delta$ scale with $n$ such that $k \coloneqq (1-\delta)n/q$ satisfies $(2+\Omega(1)) \log_2 n \le k \le (\delta n)^{1/2 - \Omega(1)}$ then no sequence of randomized polynomial-time algorithms $A_n$ achieves exact recovery w.h.p.\ in $\MC(n,q,\delta)$.
\end{restatable}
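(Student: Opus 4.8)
The plan is to reduce from the planted clique problem via a ``plant-and-pad'' argument. Assume for contradiction that $A_n$ is a randomized polynomial-time algorithm achieving exact recovery w.h.p.\ in $\MC(n,q,\delta)$; from it I would build an algorithm strongly distinguishing $\PC(N,K)$ from $G(N,1/2)$ with $N\coloneqq\delta n$ and $K\coloneqq k$, contradicting Conjecture~\ref{conj:pc} (note $K=k\le(\delta n)^{1/2-\Omega(1)}=N^{1/2-\Omega(1)}$). Given an $N$-vertex input $G_0$, form an $n$-vertex graph by adjoining $n-N=(1-\delta)n$ fresh vertices, planting $q-1$ vertex-disjoint $k$-cliques on $(q-1)k$ of them, letting every remaining pair (within the fresh vertices, and between fresh and original vertices) be an independent fair coin, and finally relabeling all $n$ vertices by a uniformly random permutation. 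A direct check shows this map sends $\PC(N,k)$ exactly to $\MC(n,q,\delta)$ --- the planted clique of $G_0$ becomes the $q$-th clique, and the $N-k$ other vertices of $G_0$ become uncolored vertices --- and sends $G(N,1/2)$ exactly to $\MC(n,q-1,\delta')$ where $(1-\delta')n=(q-1)k$; the reduction runs in polynomial time and $N\to\infty$ (the latter using $k\le(\delta n)^{1/2-\Omega(1)}$), and $q=1$ is the degenerate case with no extra cliques planted.

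The distinguisher then runs $A_n$ on the produced graph $G$ and declares ``planted'' iff the output of $A_n$ is an honest partition of $(1-\delta)n$ vertices into $q$ cliques, each of size exactly $k$ --- a polynomial-time check. On the planted side $G\sim\MC(n,q,\delta)$, so $A_n$ returns the true partition w.h.p.\ and the check passes. On the null side $G\sim\MC(n,q-1,\delta')$, and everything reduces to the following combinatorial fact: \emph{if $k\ge(2+\Omega(1))\log_2 n$ and $qk\le n$, then with probability $1-o(1)$ a sample from $\MC(n,q-1,\delta')$ contains no $q$ pairwise-disjoint $k$-cliques.} Granting this, on the null side no output of $A_n$ can pass the check w.h.p., so the distinguisher rejects w.h.p., and both error probabilities are $o(1)$, completing the reduction.

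To prove the combinatorial fact, I would condition on the positions $P_1,\dots,P_{q-1}$ of the planted cliques, so that every vertex pair not contained in a single $P_j$ is an independent fair coin. If $C_1,\dots,C_q$ were disjoint $k$-cliques, then since $|P_j|=k$ each $P_j$ can contain more than $k/2$ vertices of at most one of the $C_i$; hence some clique, say $C_q$, satisfies $|C_q\cap P_j|\le k/2$ for every $j$. It thus suffices to show that w.h.p.\ \emph{no} $k$-clique $C$ has $\max_j|C\cap P_j|\le k/2$, which I would do by a first-moment bound organized by the intersection profile $(|C\cap P_j|)_j$: the number of pairs inside $C$ still governed by a fresh coin is $R=\binom{k}{2}-\sum_j\binom{|C\cap P_j|}{2}$, so $\Pr[C\text{ is a clique}]=2^{-R}$. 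For ``spread-out'' profiles (every part small, so $R=(1-o(1))\binom{k}{2}$) the binding inequality is $R\gtrsim k\log_2 n$, which is exactly the classical first-moment calculation showing $G(n,1/2)$ has no clique on $(2+\Omega(1))\log_2 n$ vertices --- this is what calibrates the threshold. For ``concentrated'' profiles (the extreme being $C$ split half-and-half across two $P_j$'s, where $R$ is as small as $\approx k^2/4$) the count of candidate sets drops off sharply, and $qk\le n$ is what makes it cheap to specify which planted cliques $C$ meets, so $2^{-R}$ still wins; summing over all profiles leaves the total expectation $o(1)$.

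The step I expect to be the crux is precisely this union bound. The naive estimate --- treating $C$ as an arbitrary $k$-subset and using only its worst-case value $R\ge k^2/4$ --- proves the fact only for $k\gtrsim 4\log_2 n$; pushing the threshold down to $(2+\Omega(1))\log_2 n$, the uniqueness threshold for cliques in $G(n,1/2)$, requires carefully trading off the randomness deficit $R$ of a profile against the number of sets realizing it, and it genuinely uses that there are not too many planted cliques ($qk\le n$). The remaining ingredients --- the exact distributional identities under the reduction, the polynomial-time and $N\to\infty$ bookkeeping, and the verification step --- are routine.
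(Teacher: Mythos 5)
Your reduction is essentially the paper's: pad the planted-clique instance with $q-1$ freshly planted $k$-cliques and fresh coin-flip edges, so that the planted side maps exactly to $\MC(n,q,\delta)$, and the parameter bookkeeping $K=k\le(\delta n)^{1/2-\Omega(1)}=N^{1/2-\Omega(1)}$ is the same (the paper takes $N=k+\delta n$ rather than your $N=\delta n$, which is immaterial). The difference --- and the source of a genuine gap --- is your verification step. You throw away the knowledge of which vertices came from the original instance and accept iff $A_n$'s output is \emph{some} valid partition into $q$ disjoint $k$-cliques; this forces you to prove that w.h.p.\ the null graph $\MC(n,q-1,\delta')$ contains no $q$ pairwise-disjoint $k$-cliques at all, down to $k=(2+\Omega(1))\log_2 n$. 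Your pigeonhole step (some clique $C$ must satisfy $\max_j|C\cap P_j|\le k/2$) is correct, and the profile-by-profile first-moment bound does appear to go through using $q\le n/k$ (e.g.\ the cost of specifying hit cliques and vertices inside them can be absorbed via $t\log_2 q+s\log_2 k\le s\log_2 n$, so the surplus $\binom{k}{2}-k\log_2 n=\Omega(\epsilon k\log_2 n)$ survives), but this is exactly the step you flag as the crux and do not carry out; the naive bound only reaches $k\gtrsim 4\log_2 n$, and closing the factor of $2$ is a delicate calculation that your proposal leaves as a sketch. So as written the proof is incomplete at precisely the point where all the difficulty sits.

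The paper sidesteps this lemma entirely: its distinguisher remembers which vertices are original and outputs ``planted'' iff $A_n$ returns $q$ disjoint $k$-cliques \emph{one of which lies entirely inside the original vertices}. On the planted side the true $q$-th clique does lie there, so the check passes w.h.p.; on the null side the check can only pass if $G(N,1/2)$ itself contains a $k$-clique, which fails w.h.p.\ by the classical one-line first-moment bound since $k\ge(2+\Omega(1))\log_2 N$. The same fix works verbatim in your construction with $N=\delta n$: accept only if a recovered clique is contained in the original $\delta n$ vertices. With that modification your argument is complete and coincides with the paper's; without it, you owe the reader the full union-bound lemma over intersection profiles.
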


\noindent The condition $(2+\Omega(1)) \log_2 n \le k$ is natural because $2\log_2 n$ is the size of the maximum clique in $G(n,1/2)$. To satisfy the condition $k \le (\delta n)^{1/2-\Omega(1)}$, it suffices to have $k = n^{\frac{1}{2} - \Omega(1)}$ and $\delta = n^{-o(1)}$.

The reduction which proves Theorem~\ref{thm:reduction} is very simple but (to our knowledge) has not appeared before in the literature. Intuitively, the idea is the following: in the multiple cliques model, even if an oracle were to reveal the positions of all cliques but one, the remaining problem is still a hard instance of planted clique.

\begin{remark}
We note that the Planted Clique Hypothesis also implies hardness of detecting a constant number of planted $k$-cliques in $G(n,1/2)$ when $k \ll \sqrt{n}$. The idea is to first show by reduction from planted clique that distinguishing between $q$ planted cliques and $(q+1)$ planted cliques is hard; the reduction is simply to add $q$ new cliques (on new vertices). Then the classical ``hybrid argument'' implies that distinguishing between $0$ and $q$ cliques is hard for any constant $q$. (We thank Guy Bresler for pointing out this argument.)
\end{remark}

\paragraph{Detection-recovery gap.}

In the standard planted clique model (with a single clique), $k \sim \sqrt{n}$ is the best known threshold for both efficiently recovering the clique and efficiently ``detecting'' it, i.e., distinguishing the planted clique model from $G(n,1/2)$. While we have shown that adding more cliques does not make recovery any easier, it certainly does make detection easier. For instance, in the extreme case where the cliques cover the whole graph, the total edge count strongly distinguishes $\MC(n,q)$ from $G(n,1/2)$ provided $q = o(n)$. Thus, the multiple cliques problem exhibits a ``detection-recovery gap'' that is not present in the single clique case.
 
We remark that our reduction is a rare (perhaps unique?)\ example where a detection-recovery gap has been established based on the Planted Clique Hypothesis. For instance, the prior work~\cite{BBH-reductions} on various planted matrix and graph problems was only able to establish hardness of recovery in a regime where detection is easy if reducing from some starting problem (not planted clique) that is already conjectured to have a detection-recovery gap. While~\cite{CLR-reduction} claims to overcome this by reducing from planted clique to planted submatrix recovery, the argument is incorrect.\footnote{On pg 21-22 of~\cite{CLR-reduction} (arXiv~v2), the bootstrapping construction in Eq.~(42) does not actually produce an instance of the submatrix model because the entries of the noise matrix are not mutually independent. An issue occurs near the top of pg 22, where pairwise independence does not imply mutual independence. The reduction does show hardness of some non-standard submatrix model where the noise entries are not mutually independent.}
 
Finally we note that the notion of a ``detection-recovery gap'' is arguably somewhat artificial in that it assumes we have chosen one ``canonical'' testing problem to associate with the recovery problem (a perspective we are avoiding in this paper).

\subsubsection{Testing $q$-colorability versus $(q+\ell)$-colorability}
\label{sec:q-vs-ql}

The results of the previous section do not quite cover the case of a true coloring, i.e., where the cliques partition the entire graph. In this case, exact recovery remains easy when $k \coloneqq n/q \gg \sqrt{n}$, and we expect it to be hard when $k \ll \sqrt{n}$; however, we do not know how to establish this via reduction from planted clique. We will instead follow strategy (I) from the introduction: we fix $\PP = \MC(n,q)$ and our goal is to design a null distribution $\QQ$ such that w.h.p.\ $G \sim \QQ$ is not $q$-colorable (or rather, its complement is not), and distinguishing $\PP$ versus $\QQ$ is low-degree hard. Once we have achieved this goal, this gives an indirect two-stage argument for hardness of recovery: the low-degree hardness leads us to conjecture that no poly-time algorithm can distinguish $\PP$ from $\QQ$, and this conjecture (if true) formally implies that no poly-time algorithm can recover the cliques in $\PP$.

Perhaps the first natural attempt is to choose $\QQ = G(n,1/2)$. However, this will not suffice, as $G(n,1/2)$ is too easy to distinguish from $\MC(n,q)$ due to the detection-recovery gap discussed in the previous section. Instead, we will choose $\QQ = \MC(n,q+1)$, which w.h.p.\ is not $q$-colorable for $q \leq \Omega(n/\log n)$; see Appendix~\ref{app:coloring}.
We will show that testing $\PP = \MC(n,q)$ versus $\QQ = \MC(n,q+1)$ is low-degree hard when $k \coloneqq n/q \ll \sqrt{n}$. As discussed above, this suggests hardness of exact recovery in $\MC(n,q)$ when $k \ll \sqrt{n}$.

We will in fact consider a slightly more general testing problem: $\PP = \MC(n,q)$ versus $\QQ = \MC(n,q+\ell)$ for some $\ell \ge 1$ (which may scale with $n$). This generality will not cost us much, and we feel it is a question of possible independent interest. The following results establish that (in the low-degree framework) this problem is easy when $q^2 \ll \ell n$ and hard when $q^2 \gg \ell n$.

\begin{restatable}[Upper bound]{theorem}{thmtestingupper}
\label{thm:testing-upper}
If $q,\ell$ scale with $n$ such that $1 \le q < q+\ell \le n$ and $q^2 = o(\ell n)$ then there is a degree-1 polynomial achieving strong separation between $\PP = \MC(n,q)$ and $\QQ = \MC(n,q+\ell)$.
\end{restatable}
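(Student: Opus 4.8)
The plan is to use the simplest possible degree-one statistic, the signed edge count $f(Y) = \sum_{1 \le i < j \le n} Y_{ij}$, and to show that it strongly separates $\PP = \MC(n,q)$ from $\QQ = \MC(n,q+\ell)$ under the hypothesis $q^2 = o(\ell n)$. Write $c_i \in [m]$ for the random label of vertex $i$ in $\MC(n,m)$. For the gap in means: since same-label pairs are always adjacent and different-label pairs are adjacent with probability $1/2$, we have $\E[Y_{ij} \mid c_1,\dots,c_n] = \mathbf 1[c_i = c_j]$, hence $\E_{\MC(n,m)}[Y_{ij}] = \Pr[c_i = c_j] = 1/m$ and $\E_{\MC(n,m)}[f] = \binom n2/m$. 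Therefore
\[ \bigl|\E_\PP[f] - \E_\QQ[f]\bigr| \;=\; \binom n2\Bigl(\tfrac1q - \tfrac1{q+\ell}\Bigr) \;=\; \binom n2\cdot\frac{\ell}{q(q+\ell)} \;=\; \Theta\!\left(\frac{n^2\ell}{q(q+\ell)}\right). \]

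Next I would bound the variance. The claim is that in $\MC(n,m)$ every pair of \emph{distinct} edge variables has covariance zero, so $\Var_{\MC(n,m)}[f] = \sum_{i<j}\Var[Y_{ij}] = \binom n2(1 - 1/m^2) \le \binom n2$. If two edges are vertex-disjoint, the labels of their four endpoints are independent and, conditioned on the labels, the edges are independent coin flips, so the two edge variables are independent. If two edges $\{i,j\}$ and $\{i,k\}$ share the vertex $i$, then conditioning on all labels gives $\E[Y_{ij}Y_{ik}\mid c_1,\dots,c_n] = \mathbf 1[c_i=c_j]\,\mathbf 1[c_i=c_k]$, and taking expectations (using that $c_j$ and $c_k$ are independent given $c_i$) yields $\E[Y_{ij}Y_{ik}] = 1/m^2 = \E[Y_{ij}]\E[Y_{ik}]$, so the covariance again vanishes.

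To conclude, combine the two steps: $\sqrt{\max\{\Var_\PP[f],\Var_\QQ[f]\}} \le \sqrt{\binom n2} = O(n)$, while the gap in means is $\Theta(n^2\ell/(q(q+\ell)))$, so $f$ strongly separates $\PP$ and $\QQ$ precisely when $q(q+\ell) = o(n\ell)$. The constraint $q + \ell \le n$ forces $\ell \le n$, so $q^2 = o(\ell n)$ implies $q^2 = o(n^2)$, i.e.\ $q = o(n)$, whence $q\ell = o(n\ell)$; together with $q^2 = o(\ell n)$ this gives $q(q+\ell) = q^2 + q\ell = o(n\ell)$, as required, and strong separation follows by thresholding $f$.

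The whole argument is elementary and I do not anticipate a real obstacle; the one step that deserves care is the variance bound, where the edge variables are genuinely dependent — same-label pairs are deterministically adjacent — and yet the pairwise covariances cancel exactly, so the edge count concentrates just as tightly as it does under $G(n,1/2)$. If this exact cancellation failed one would instead need a slightly more careful second-moment bound, but as shown above it does hold.
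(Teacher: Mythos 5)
Your proposal is correct and follows essentially the same route as the paper: the signed edge count, the mean gap $\binom{n}{2}\ell/(q(q+\ell))$, the observation that distinct edge variables are uncorrelated (including pairs sharing a vertex) so the variance is at most $\binom{n}{2}$, and the same final manipulation of the condition $q^2 = o(\ell n)$. No gaps.
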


\vspace{-3pt}

\begin{restatable}[Lower bound]{theorem}{thmtestinglower}\label{thm:testing-lower}
Fix an arbitrary constant $\epsilon > 0$, not depending on $n$. If $q,\ell$ scale with $n$ such that $1 \le q < q+\ell \le n$ and $q^2 \ge \ell n^{1+\epsilon}$ then there is no degree-$o(\log n/\log\log n)^2$ polynomial achieving weak separation between $\PP = \MC(n,q)$ and $\QQ = \MC(n,q+\ell)$.
\end{restatable}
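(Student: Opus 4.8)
The plan is to bound the low-degree advantage $\Adv_{\le D}(\PP,\QQ)$ and show it is $1+o(1)$ for $D = o(\log n/\log\log n)^2$; by Lemma~\ref{lem:adv-sep} this rules out weak separation. Since both $\PP = \MC(n,q)$ and $\QQ = \MC(n,q+\ell)$ are ``planted'' relative to the common reference $G(n,1/2)$, the natural approach is to work with likelihood ratios against $\QQ_0 \coloneqq G(n,1/2)$. Write $L_\PP = d\PP/d\QQ_0$ and $L_\QQ = d\QQ/d\QQ_0$ and expand both in the Fourier basis $\{\chi_S\}_{S \subseteq \binom{[n]}{2}}$ of $L^2(\QQ_0)$. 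The key identity is that the squared $L^2(\QQ_0)$-norm of the degree-$\le D$ projection of $L_\PP - L_\QQ$ controls $\Adv_{\le D}(\PP,\QQ)$ up to lower-order terms (this is the standard reduction when comparing two planted models through a shared null; see~\cite{ld-notes}). So the heart of the matter is to compute $\widehat{L_\PP}(S)$ and $\widehat{L_\QQ}(S)$ and show that, for all $S$ with $|S| \le D$, these Fourier coefficients \emph{agree up to a $(1+o(1))$ multiplicative factor}, so that the difference is negligible in the low-degree regime.

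First I would derive a clean formula for the Fourier coefficients of $L_{\MC(n,q)}$. Conditioned on the color assignment $\sigma \in [q]^n$, the graph is $G(n,1/2)$ except that monochromatic pairs are forced to be edges; a short computation gives $\widehat{L_{\MC(n,q)}}(S) = \E_\sigma\bigl[\prod_{\{u,v\}\in S} \mathbbm{1}[\sigma_u = \sigma_v] \bigr] = \Pr_\sigma[\text{every edge of } S \text{ is monochromatic}]$. If the edge set $S$, viewed as a graph on its vertex set, has connected components of sizes $c_1,\dots,c_m$ (with $\sum(c_i-1) = |S|$ when $S$ is a forest, and $\sum(c_i - 1)$ replaced by $|S|$ minus the cycle rank in general — but only forests contribute interestingly), then this probability is exactly $\prod_i q^{-(c_i - 1)} = q^{-(v(S) - m)}$ where $v(S)$ is the number of vertices touched and $m$ the number of components. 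The analogous formula for $\MC(n,q+\ell)$ replaces $q$ by $q+\ell$. Thus the ratio of Fourier coefficients is $\bigl((q+\ell)/q\bigr)^{-(v(S)-m)} = (1 + \ell/q)^{-(v(S)-m)}$, which is $1 + O(\ell |S|/q)$ as long as $\ell|S|/q = o(1)$.

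The main obstacle — and the reason the hypothesis $q^2 \ge \ell n^{1+\epsilon}$ appears — is that the agreement of \emph{ratios} of Fourier coefficients is not by itself enough; I need to control the $L^2$ \emph{mass} $\sum_{|S| \le D} (\widehat{L_\PP}(S) - \widehat{L_\QQ}(S))^2$ and show it is $o(1)$, which requires the contribution of large-support $S$ to be small even though there are many such terms. The counting goes as follows: the number of edge sets $S$ on $\le D$ edges with a given ``excess'' $v(S) - m = j$ is at most roughly $n^{j} \cdot (\text{poly in } D)^{|S|}$ (choosing $j$ ``new'' vertices at a time and attaching edges), while each such term contributes $\approx q^{-2j}\bigl((1+\ell/q)^{-j} - 1\bigr)^2 \lesssim q^{-2j} \cdot (\ell j/q)^2$ when $\ell j / q$ is small, and is simply $\le q^{-2j}$ otherwise. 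Summing, the total is dominated by the $j=1$ term, of order $n \cdot q^{-2} \cdot (\ell/q)^2 \cdot \mathrm{poly}(D)$ in the small-$\ell/q$ regime, which is $o(1)$ precisely when $q^2 \gg \ell n$, with the extra $n^{\epsilon}$ slack absorbing the $\mathrm{poly}(D) = n^{o(1)}$ factor for $D = o(\log n/\log\log n)^2$. I would need to handle carefully the regime where $\ell/q$ is not small (then $(1+\ell/q)^{-j}-1$ is $\Theta(1)$ and the bound $q^{-2j} n^j \mathrm{poly}(D)$ still gives $o(1)$ when $q^2 \gg n^{1+\epsilon}$, which follows from $q^2 \ge \ell n^{1+\epsilon} \ge n^{1+\epsilon}$), and also verify that non-forest $S$ (with cycles) only decrease the coefficients and hence cannot hurt. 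Finally, I would also need the side condition that $L_\QQ$ itself has bounded low-degree norm (so the reduction to comparing Fourier coefficients is valid); this is the statement that $G(n,1/2)$ and $\MC(n,q+\ell)$ are not low-degree distinguishable when $q^2 \gg \ell n$, i.e. $\sum_{1 \le |S| \le D}(q+\ell)^{-2(v(S)-m)} = o(1)$ — wait, this would be false since edge-counting distinguishes them — so in fact the correct framing is to bound $\Adv_{\le D}(\PP,\QQ)$ \emph{directly} via the conditional-second-moment / pseudo-calibration style computation $\Adv_{\le D}^2 = \max_{f} \E_\PP[f]^2/\E_\QQ[f^2]$, expanding $f$ in an orthonormal basis for $L^2(\QQ)$ rather than $L^2(G(n,1/2))$, and showing $\sum_{1\le|S|\le D} \bigl(\widehat{L_{\PP/\QQ}}(S)\bigr)^2 = o(1)$ where $L_{\PP/\QQ} = d\PP/d\QQ$; the self-bounding structure of the color model makes this tractable, and the same component-counting argument applies with $q,q+\ell$ playing symmetric roles. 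The delicate point throughout is uniform control of the combinatorial sums over $S$ in the stated degree range, and that is where I expect the bulk of the technical work to lie.
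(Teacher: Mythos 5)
There is a genuine gap, and it sits exactly at the point you gloss over. Your first framing---expanding both likelihood ratios against the common reference $G(n,1/2)$ and showing the low-degree Fourier coefficients of $L_\PP$ and $L_\QQ$ nearly agree---does not bound the relevant quantity. The advantage $\Adv_{\le D}(\PP,\QQ)$ (and the weak-separation condition in Lemma~\ref{lem:adv-sep}) measures the mean gap $\EE_\PP[f]-\EE_\QQ[f]$ against the fluctuations of $f$ \emph{under $\PP$ and $\QQ$ themselves}, not under $G(n,1/2)$. A Cauchy--Schwarz bound of the form $|\EE_\PP[f]-\EE_\QQ[f]| \le \|f\|_{L^2(G(n,1/2))}\cdot\|(L_\PP-L_\QQ)^{\le D}\|_{L^2(G(n,1/2))}$ compares the signal to the wrong yardstick: a degree-$D$ polynomial can have variance under $\PP$ and $\QQ$ far smaller than under $G(n,1/2)$ (both planted models are heavily structured), so smallness of the projected difference of likelihood ratios does not rule out weak separation. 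You half-recognize this when you note that edge counting distinguishes $\QQ$ from $G(n,1/2)$, and you then retreat to the ``correct framing'': expand $f$ in an orthonormal polynomial basis for $L^2(\QQ)$ and bound the coefficients of $d\PP/d\QQ$. But that is precisely the step the paper identifies as the obstacle---no convenient orthogonal basis for $\MC(n,q+\ell)$ is known---and your sketch supplies no construction of such a basis, no formula for the corresponding coefficients (the monochromatic-probability formula $q^{-(|V(S)|-m)}$ you derive is a coefficient with respect to the $G(n,1/2)$ basis, not a $\QQ$-orthonormal one), and no argument for the asserted ``self-bounding structure.'' So the heart of the proof is missing.

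The paper circumvents the missing basis differently: writing $\QQ$ as $Y = X \vee Z$ with $Z$ i.i.d.\ Rademacher and $X$ the clique-edge indicator, it applies Jensen's inequality over $X$ only and inverts the resulting upper-triangular system, yielding the bound $\Adv_{\le D}^2 \le \sum_{|\alpha|\le D} w_\alpha^2$ with recursively defined $w_\alpha$ (Proposition~\ref{prop:adv-bound}); the analysis then hinges on the multiplicativity of $\hat w_\alpha$ over connected components (Lemma~\ref{lem:multiplicative}) and an inductive bound $|\hat w_\alpha| \le (\sqrt{\ell}/q)^{|V(\alpha)|}(|\alpha|+1)^{|\alpha|}$ (Lemma~\ref{lem:w-bound}), fed into the combinatorial sum of Proposition~\ref{prop:loglog}. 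Nothing playing the role of this recursion appears in your proposal. Two smaller quantitative issues also need repair if you pursue your counting heuristic: the number of edge sets with excess $|V(S)|-m = j$ and at most $D$ edges scales like $n^{|V(S)|} = n^{j+m}$, not $n^j$ (e.g., single edges already number $\binom{n}{2}$), and with the correct count the single-edge term contributes $\asymp n^2\ell^2/q^4$, which is what actually produces the threshold $q^2 \sim \ell n$---your stated $n\ell^2/q^4$ would misleadingly suggest hardness below that threshold.
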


\paragraph{Testing planted versus planted.}

On a technical level, this result differs from nearly all existing low-degree lower bounds because here we are testing between two different ``planted'' distributions. In contrast, most prior work has considered testing between some planted distribution and an i.i.d.\ null distribution, which is much easier to analyze. The first ``planted-versus-planted'' low-degree lower bounds were given recently by~\cite{planted-planted}, based on a technique developed by~\cite{SW-recovery}. Our proof is based on similar ideas, but differs from~\cite{planted-planted} on a technical level; the bounds for dense subgraph problems in~\cite{planted-planted} do not work when the subgraph is extremely dense (e.g., a clique), and so we use a somewhat different variation of the argument.

The key technical challenge is that, since $\QQ$ is not i.i.d., we do not know an orthogonal basis of polynomials (w.r.t.\ $\QQ$) that is convenient to work with. Proposition~\ref{prop:adv-bound} overcomes this, showing that it suffices to control certain recursively-defined quantities $w_\alpha$. This generalizes the standard approach; see Remark~\ref{rem:gen}. Similarly to~\cite{planted-planted}, the quantities $w_\alpha$ turn out to have a convenient multiplicative property (Lemma~\ref{lem:multiplicative}) which helps in the analysis.

We note that an alternative form of evidence for hardness of our original recovery problem would be to directly formulate a low-degree recovery question in the style of~\cite{SW-recovery}, but we have chosen to instead investigate the quiet planting approach.

\subsection{Refutation}

A common framework for studying the average-case complexity of refutation problems is to prove lower bounds against the sum-of-squares (SoS) hierarchy, a powerful class of methods based on semi-definite programming. For the problem of refuting $q$-colorability, a particular SoS formulation is known to fail when $q \gg \sqrt{n}$~\cite{KM21}; however, it remains open to characterize the more canonical (and potentially stronger) SoS SDP which has equality constraints instead of inequalities (see Section~1.5 of~\cite{KM21}).

In this paper, we formulate an alternative type of refutation lower bound based directly on low-degree polynomials, which complements the SoS approach. Some advantages of the new formulation are its simplicity, and the fact that (unlike SoS) there is no ambiguity in the choice of SDP relaxation; we only need to specify how our input is encoded as real-valued variables. To our knowledge, there are no formal implications in either direction between SoS lower bounds and our new framework. Like SoS, our framework captures spectral methods (as illustrated by the proof of Theorem~\ref{thm:ref-upper} below), a powerful class of refutation algorithms which give the best known poly-time algorithms for a wide variety of average-case refutation tasks.

We note that some prior work has used low-degree lower bounds to give evidence for hardness of refutation, via a two-stage argument that first gives a polynomial-time reduction from a testing problem to refutation~\cite{sk-cert,nonneg-pca}. Our new framework is similar in spirit but more direct, as we define for the first time a notion of what it means for a polynomial to solve a refutation problem (Definition~\ref{def:ref-sep}).

\subsubsection{Framework for low-degree refutation}

We will now define a notion (Definition~\ref{def:ref-sep}) of what it means for a polynomial to refute a property $\mathcal{R} \subseteq \RR^N$ (e.g., the set of $q$-colorable graphs $X \in \{\pm 1\}^{\binom{n}{2}}$) over a distribution $\QQ$ (e.g., $G(n,1/2)$). We will later argue that this definition is reasonable in that it indeed implies a solution to the refutation problem (Proposition~\ref{prop:sep-implies-ref}). We also illustrate that our definition captures spectral methods, a powerful class of refutation algorithms (see the proof of Theorem~\ref{thm:ref-upper}).

\begin{definition}[Strong/weak separation of a distribution and property]
\label{def:ref-sep}
Suppose $\QQ_n$ is a distribution on $\RR^N$ for some $N = N_n$, and suppose $\mathcal{R} = \mathcal{R}_n \subseteq \RR^N$. A polynomial $f_n: \RR^N \to \RR$ is said to \emph{strongly separate} $\QQ$ and $\mathcal{R}$ if
\[ f(X) \ge 1 \;\;\; \forall X \in \mathcal{R} \qquad\text{and}\qquad \EE_\QQ[f^2] = o(1), \]
and \emph{weakly separate} $\QQ$ and $\mathcal{R}$ if
\[ f(X) \ge 1 \;\;\; \forall X \in \mathcal{R} \qquad\text{and}\qquad \EE_\QQ[f] = 0, \;\; \EE_\QQ[f^2] = O(1). \]
\end{definition}

\begin{remark}
The requirement $\EE_\QQ[f] = 0$ can optionally be added to the definition of strong separation: if $f = f_n$ satisfies the original definition it can be shifted and scaled to satisfy the modified one.

More generally, one could define separation to mean there exists $B = B_n > \EE_\QQ[f]$ such that $f(X) \ge B$ for all $X \in \mathcal{R}$, and $\sqrt{\Var_\QQ[f]}$ is either $o(B - \EE_\QQ[f])$ (for strong separation) or $O(B - \EE_\QQ[f])$ (for weak separation). This is equivalent in the sense that if $f = f_n$ satisfies the original definition it also satisfies the new one with $B=1$, and if $f$ satisfies the new definition it can be shifted and scaled to satisfy the original one.
\end{remark}

As we see next, strong and weak separation are natural sufficient conditions for refuting $\mathcal{R}$ with high probability or constant probability (respectively) by evaluating $f$.

\begin{proposition}
\label{prop:sep-implies-ref}
Suppose $f$ strongly (or weakly, respectively) separates $\QQ$ and $\mathcal{R}$. Define a refutation algorithm that, on input $X \in \RR^N$, outputs NO if $f(X) < 1$ and outputs MAYBE otherwise. Then this algorithm has the guarantee that (1) whenever it outputs NO, $X \notin \mathcal{R}$, and (2) when $X \sim \QQ$, the output is NO with probability $1-o(1)$ (or $\Omega(1)$, respectively).
\end{proposition}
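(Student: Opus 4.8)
The plan is to verify the two claimed guarantees of the refutation algorithm directly from the definition of separation. The algorithm outputs NO when $f(X) < 1$ and MAYBE otherwise, so establishing property (1) is immediate: by Definition~\ref{def:ref-sep}, whether $f$ strongly or weakly separates $\QQ$ and $\mathcal{R}$, we have $f(X) \ge 1$ for every $X \in \mathcal{R}$. Hence if $f(X) < 1$ then $X \notin \mathcal{R}$, which is exactly the soundness condition. This step has no obstacle.

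For property (2), the point is to show that $X \sim \QQ$ lands in the region $\{f(X) < 1\}$ with the claimed probability. In the strong separation case, I would apply Markov's inequality to the nonnegative random variable $f(X)^2$: since $\EE_\QQ[f^2] = o(1)$, we get $\Pr_{X \sim \QQ}[f(X)^2 \ge 1] \le \EE_\QQ[f^2] = o(1)$, so $\Pr_{X \sim \QQ}[f(X) < 1] \ge \Pr_{X \sim \QQ}[f(X)^2 < 1] = 1 - o(1)$ (using that $f(X)^2 < 1$ implies $f(X) < 1$). In the weak separation case, we have $\EE_\QQ[f] = 0$ and $\EE_\QQ[f^2] = O(1)$, hence $\Var_\QQ[f] = \EE_\QQ[f^2] = O(1)$; by Chebyshev's inequality, $\Pr_{X \sim \QQ}[f(X) \ge 1] = \Pr_{X \sim \QQ}[f(X) - \EE_\QQ[f] \ge 1] \le \Var_\QQ[f] = O(1)$, and by rescaling we may assume this constant is bounded away from $1$ (indeed, if $f$ weakly separates then so does $cf$ for any constant $c > 1$, and choosing $c$ large enough makes $\Var_\QQ[cf] = c^{-2}\cdot\text{(something)}$... more carefully: replace $f$ by $f/B$ for a suitable constant, or simply observe that the definition allows us to pick $f$ with $\EE_\QQ[f^2]$ as small a constant as we like relative to the threshold by the scaling remark following Definition~\ref{def:ref-sep}). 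Thus $\Pr_{X \sim \QQ}[f(X) < 1] = \Omega(1)$.

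I expect the only mild subtlety to be the bookkeeping in the weak case: one must be a little careful that ``$O(1)$ variance'' translates into ``NO with probability $\Omega(1)$'' rather than probability that could be vanishing, which is handled by invoking the rescaling observation in the remark after Definition~\ref{def:ref-sep} (weak separation is preserved under scaling $f$ by a constant, and under such scaling we may push the failure probability below any fixed constant $< 1$). No genuine obstacle arises; the proof is a two-line application of Markov/Chebyshev once the definitions are unpacked.
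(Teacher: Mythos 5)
Your handling of guarantee (1) and of the strong-separation case matches the paper exactly (Markov on $f^2$). But the weak-separation case has a genuine gap. Chebyshev with $\EE_\QQ[f]=0$ and $\EE_\QQ[f^2]=O(1)$ gives only $\Pr_\QQ[f \ge 1] \le \EE_\QQ[f^2]$, and the implicit constant may well exceed $1$, in which case the bound is vacuous and says nothing about $\Pr_\QQ[f<1]$. The rescaling you invoke cannot repair this: multiplying $f$ by $c>1$ gives $\Var_\QQ[cf]=c^2\Var_\QQ[f]$ (larger, not smaller), while dividing by $B>1$ shrinks the variance but also shrinks the guaranteed value on $\mathcal{R}$ to $1/B$, so the event you must control becomes $\{f/B \ge 1/B\}$ and Chebyshev returns exactly the same vacuous bound. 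The relevant quantity $\sqrt{\Var_\QQ[f]}\,/\,(\text{threshold}-\EE_\QQ[f])$ is scale-invariant, and the remark after Definition~\ref{def:ref-sep} only says weak separation is preserved under such normalizations --- it does not let you drive this ratio below $1$. Shifting $f$ is also unavailable, since the mean-zero condition pins down the offset.

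The missing idea is to exploit $\EE_\QQ[f]=0$ more finely than through the variance. The paper's proof splits the mean over the two events: with $p\coloneqq\Pr_\QQ\{f<1\}$ and $\EE_\QQ[f^2]\le C$,
\[
0 \;=\; \EE_\QQ[f] \;=\; \EE_\QQ[f\,\One_{f\ge 1}] + \EE_\QQ[f\,\One_{f<1}] \;\ge\; (1-p) \;-\; \sqrt{\EE_\QQ[f^2]}\,\sqrt{p} \;\ge\; 1 - p - C\sqrt{p},
\]
where the first term is bounded below by $\Pr\{f\ge1\}$ because $f\ge 1$ on that event, and the second by Cauchy--Schwarz. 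This forces $p \ge 1/(C+1)^2 = \Omega(1)$ no matter how large the constant $C$ is. Intuitively: a mean-zero variable with bounded second moment cannot put all but a vanishing fraction of its mass at values $\ge 1$, because the small remaining mass cannot supply enough negative contribution to cancel the mean --- this is precisely what Chebyshev fails to see when $C\ge 1$. Replacing your Chebyshev step with this decomposition (or an equivalent Paley--Zygmund-style argument on the negative part) closes the gap.
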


\begin{proof}
Guarantee (1) is immediate from the property $f(X) \ge 1$ for all $X \in \mathcal{R}$. For strong separation, (2) follows because by Markov's inequality, $\EE[f^2] = o(1)$ implies that $|f(X)| < 1$ with probability $1-o(1)$. It remains to verify (2) for weak separation: letting $\EE_\QQ[f^2] \le C$ and $p \coloneqq \Pr_\QQ\{f(X) < 1\}$,
\[ 0 = \EE[f] \ge 1 \cdot \Pr\{f \ge 1\} + \EE[f \cdot \One_{f < 1}] \ge (1-p) - \sqrt{\EE[f^2]} \cdot \sqrt{p} \ge 1-p-C\sqrt{p} \ge 1 - (C+1)\sqrt{p}, \]
implying $p \ge 1/(C+1)^2$.
\end{proof}

In line with strategy (II) from the introduction, one way to rule out strong (or weak) separation is to construct a planted distribution and bound the quantity $\Adv_{\le D}$ defined in~\eqref{eq:adv}.

\begin{proposition}\label{prop:strat-II}
Suppose that on an infinite subsequence of $n$ values we have a distribution $\PP = \PP_n$ supported on $\mathcal{R}$. If $\Adv_{\le D}(\PP,\QQ) = O(1)$ (respectively, $1+o(1)$) for some $D = D_n$, then no degree-$D$ polynomial strongly (resp., weakly) separates $\QQ$ and $\mathcal{R}$.
\end{proposition}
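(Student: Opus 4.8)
The plan is to argue in the contrapositive via a one-line duality computation: a degree-$D$ polynomial that separates $\QQ$ from $\mathcal{R}$, evaluated against the planted distribution $\PP$ (which is supported on $\mathcal{R}$), would witness a large value of $\Adv_{\le D}(\PP,\QQ)$. Concretely, for the strong-separation case I would assume a degree-$D$ polynomial $f = f_n$ with $f(X) \ge 1$ for all $X \in \mathcal{R}$ and $\EE_\QQ[f^2] = o(1)$. Working along the infinite subsequence of $n$ on which $\PP$ is defined and supported on $\mathcal{R}$, averaging the pointwise inequality $f \ge 1$ over $\PP$ gives $\EE_\PP[f] \ge 1$, so plugging $f$ into the definition~\eqref{eq:adv} yields $\Adv_{\le D}(\PP,\QQ) \ge \EE_\PP[f]/\sqrt{\EE_\QQ[f^2]} \ge 1/\sqrt{\EE_\QQ[f^2]} \to \infty$ along that subsequence, contradicting $\Adv_{\le D}(\PP,\QQ) = O(1)$. (The degenerate case $\EE_\QQ[f^2]=0$ only makes the ratio larger.)

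The weak-separation case is where I expect the only real subtlety, because $\Adv_{\le D}(\PP,\QQ) \ge 1$ holds trivially (take $f \equiv 1$), so ruling out weak separation requires exhibiting a \emph{constant multiplicative gap} above $1$, and feeding the raw polynomial into $\Adv_{\le D}$ does not do this: from $f(X)\ge 1$ on $\mathcal{R}$, $\EE_\QQ[f]=0$, $\EE_\QQ[f^2]\le C$ one only gets $\Adv_{\le D}(\PP,\QQ)\ge 1/\sqrt{C}$, which may be below $1$. The fix is to test $\Adv_{\le D}$ not with $f$ but with the affine shift $g \coloneqq 1 + t f$, still of degree at most $D$, for a small constant $t>0$. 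Using $\EE_\QQ[f]=0$ one computes $\EE_\QQ[g^2] = 1 + t^2\,\EE_\QQ[f^2] \le 1 + Ct^2$ and $\EE_\PP[g] = 1 + t\,\EE_\PP[f] \ge 1+t$, hence $\Adv_{\le D}(\PP,\QQ) \ge (1+t)/\sqrt{1+Ct^2}$. A routine choice such as $t = 1/(C+1)$ makes $(1+t)^2 > 1+Ct^2$ strictly, so $\Adv_{\le D}(\PP,\QQ) \ge 1+c$ for a constant $c=c(C)>0$, contradicting $\Adv_{\le D}(\PP,\QQ) = 1+o(1)$.

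Everything else is bookkeeping: I would state at the outset that the hypotheses ``$=O(1)$'' and ``$=1+o(1)$'' (holding as $n\to\infty$) in particular hold along the infinite subsequence on which $\PP$ is available, which is all the argument uses; no properties of $\mathcal{R}$ beyond $\mathrm{supp}(\PP)\subseteq\mathcal{R}$, and no structure of $\QQ$, enter. The main obstacle, as noted, is simply recognizing that in the weak regime one must symmetrize the candidate polynomial against the constant function rather than plug it in directly.
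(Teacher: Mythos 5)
Your proof is correct and follows essentially the same route as the paper: use $\mathrm{supp}(\PP)\subseteq\mathcal{R}$ to get $\EE_\PP[f]\ge 1$, and in the weak case exploit $\EE_\QQ[f]=0$ by shifting the polynomial by a constant to force $\Adv_{\le D}(\PP,\QQ)$ strictly above $1$ — this is exactly the shift trick in the paper's Lemma~\ref{lem:adv-sep}, which its proof of Proposition~\ref{prop:strat-II} invokes (your $1+tf$ with $t=1/(C+1)$ is, up to scaling, the paper's $g+C$). The only cosmetic difference is that in the strong case you plug $f$ in directly rather than routing through the lemma, which is fine since $\EE_\PP[f]/\sqrt{\EE_\QQ[f^2]}\ge 1/\sqrt{o(1)}\to\infty$ already contradicts $\Adv_{\le D}=O(1)$.
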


\begin{proof}
Since $\PP$ is supported on $\mathcal{R}$, the separation condition implies $\EE_\PP[f] \ge 1$. The proof is now nearly identical to that of Lemma~\ref{lem:adv-sep}.
\end{proof}

\begin{remark}
We note that for the well-studied problem of refuting a single $k$-clique in $G(n,1/2)$, existing work implies sharp upper and lower bounds in our new framework. For the lower bound, let $\PP$ be the standard planted $k$-clique model and combine Proposition~\ref{prop:strat-II} with the low-degree analysis of planted clique~\cite[Section~2.4]{hopkins-thesis} to conclude: if $k \le n^{1/2-\epsilon}$ for a constant $\epsilon > 0$ then no degree-$o(\log n/\log \log n)^2$ polynomial weakly separates $G(n,1/2)$ from the property of containing a $k$-clique. The upper bound follows from the proof of Theorem~\ref{thm:ref-upper} below: if $k \ge 2.1 \sqrt{n}$ then there is an $O(\log n)$-degree polynomial that strongly separates $G(n,1/2)$ from the property of containing a $k$-clique.
\end{remark}

\subsubsection{Low-degree refutation of $q$-colorability}

We now apply the framework from the previous section to the problem of refuting $q$-colorability in $G(n,1/2)$. Throughout, we represent graphs as elements of $\{\pm 1\}^{\binom{n}{2}}$ as usual, take $\QQ = G(n,1/2)$, and use $\mathcal{R}_q \subseteq \{\pm 1\}^{\binom{n}{2}}$ to denote the property of $q$-colorability (i.e., the set of graphs that are $q$-colorable).

First, we give an upper bound: low-degree polynomials can refute $q$-colorability for $q \lesssim \sqrt{n}$. The proof proceeds by taking a standard spectral refutation algorithm (based on the maximum eigenvalue of the adjacency matrix) and approximating it by a polynomial.

\begin{restatable}[Upper bound]{theorem}{thmrefupper}
\label{thm:ref-upper}
Suppose $q \le b \sqrt{n}$ for a constant $b < 1/2$ (not depending on $n$). Then there exists a constant $C = C(b) > 0$ and a polynomial $f = f_n$ of degree at most $C \log n$ that strongly separates $G(n,1/2)$ and $\mathcal{R}_q$.
\end{restatable}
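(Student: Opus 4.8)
The plan is to implement, as a single low-degree polynomial, the standard spectral refutation of colorability: a $q$-colorable graph has a color class of size $\ge n/q$, which is an independent set and hence forces the signed adjacency matrix to have an eigenvalue of magnitude $\ge n/q-1$, whereas $G(n,1/2)$ has spectral radius only $\approx 2\sqrt n$; since $q\le b\sqrt n$ with $b<1/2$ these two scales differ by a constant factor, and we capture the gap by replacing the extreme eigenvalue with a trace power. Concretely, given $X\in\{\pm1\}^{\binom n2}$ encoding a graph $G$, form the symmetric matrix $Y=Y(X)\in\RR^{n\times n}$ with $Y_{ij}=X_{\{i,j\}}$ for $i\ne j$ and $Y_{ii}=0$. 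If $G$ is $q$-colorable then some color class $V^\star$ has $m\coloneqq|V^\star|\ge n/q$ vertices and is independent, so $\mathbf 1_{V^\star}^\top Y\mathbf 1_{V^\star}=-m(m-1)$, whence $\lambda_{\min}(Y)\le-(m-1)\le-(n/q-1)$. Since $Y^{2D}$ is positive semidefinite, for every integer $D\ge1$,
\[ \mathrm{Tr}(Y^{2D})=\sum_i\lambda_i(Y)^{2D}\ge|\lambda_{\min}(Y)|^{2D}\ge(n/q-1)^{2D}. \]
I would therefore take $f(X)\coloneqq\mathrm{Tr}(Y(X)^{2D})/(n/q-1)^{2D}$, a degree-$2D$ polynomial in the $\binom n2$ edge variables, which satisfies $f(X)\ge1$ pointwise on $\mathcal R_q$.

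It then remains to choose $D=\Theta(\log n)$ so that $\EE_{\QQ}[f^2]=o(1)$ for $\QQ=G(n,1/2)$, where $Y$ is a zero-diagonal $\pm1$ Wigner matrix. I would combine the deterministic inequality $\mathrm{Tr}(Y^{2D})^2\le n^2\,\mathrm{Tr}(Y^{4D})$ (using $\mathrm{Tr}(Y^{2D})\le n\|Y\|^{2D}$ and $\|Y\|^{4D}\le\mathrm{Tr}(Y^{4D})$, where $\|Y\|$ is the spectral norm) with the classical closed-walk (Füredi–Komlós) trace-moment estimate, which in the regime $D=O(\log n)$ gives $\EE_{\QQ}[\mathrm{Tr}(Y^{4D})]\le(1+o(1))\,n\,(2\sqrt n)^{4D}$. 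Hence
\[ \EE_{\QQ}[f^2]\le(1+o(1))\,n^3\left(\frac{2\sqrt n}{n/q-1}\right)^{4D}=(1+o(1))\,n^3\left(\frac{2q/\sqrt n}{1-q/n}\right)^{4D}. \]
Because $q\le b\sqrt n$ with $b<1/2$, the base is at most a constant $\theta=\theta(b)\in(2b,1)$ for all large $n$; choosing $D=C(b)\log n$ with $4C(b)\log(1/\theta)>3$ makes the right-hand side $n^{-\Omega(1)}=o(1)$. This is precisely strong separation of $G(n,1/2)$ and $\mathcal R_q$ in the sense of Definition~\ref{def:ref-sep}, with $\deg f=2D=O(\log n)$.

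The only genuinely technical ingredient is the Wigner trace-moment bound, and it is needed only for $D=O(\log n)$ — well inside the range where it is classical and where the Dyck-walk (Catalan) contribution, with its tight constant $2\sqrt n$, dominates; everything else is elementary linear algebra. I would emphasize that the hypothesis $b<1/2$ enters in exactly one place: it is what makes the spectral radius $\approx 2\sqrt n$ of a $G(n,1/2)$ signed adjacency matrix strictly smaller, by a constant factor, than the eigenvalue of magnitude $\approx n/q\ge\sqrt n/b$ forced by a color class of a $q$-colorable graph, and this constant-factor gap is exactly what drives the geometric decay $\theta^{4D}$ of $\EE_{\QQ}[f^2]$.
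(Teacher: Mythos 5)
Your proposal is correct and essentially identical to the paper's proof: since the complement's $\pm1$ adjacency matrix is the negation of your $Y$, your polynomial $\Tr(Y^{2D})/(n/q-1)^{2D}$ is literally the paper's $f$, and the pointwise bound $f\ge1$ on $\mathcal{R}_q$ via the Rayleigh quotient of a largest color class is the same argument. The only (minor) difference is in bounding $\EE_{\QQ}[f^2]$: you use $\Tr(Y^{2D})^2\le n^2\Tr(Y^{4D})$ plus the classical F\"uredi--Koml\'os trace-moment estimate for bounded-entry Wigner matrices, while the paper uses $\Tr(Y^{2D})^2\le n^2\|Y\|^{4D}$, compares Rademacher to Gaussian moments, and invokes a Gaussian spectral-norm moment bound; both are valid in the regime $D=O(\log n)$ and yield the same conclusion.
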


\noindent We also give a lower bound: no low-degree polynomial can refute $q$-colorability for $q \gg n^{2/3}$. Note there is a gap between our upper and lower bounds, and we leave closing this gap as an interesting direction for future work.

\begin{restatable}[Lower bound]{theorem}{thmreflower}\label{thm:ref-lower}
If $q \ge n^{2/3 + \epsilon}$ for a constant $\epsilon > 0$, then no degree-$o(\log n/\log\log n)^2$ polynomial weakly separates $G(n,1/2)$ and $\mathcal{R}_q$.
\end{restatable}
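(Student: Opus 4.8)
The plan is to apply strategy~(II): by Proposition~\ref{prop:strat-II} it suffices to construct, for some $D = o(\log n/\log\log n)^2$, a distribution $\PP = \PP_n$ supported on $q$-colorable graphs with $\Adv_{\le D}(\PP,\QQ) = 1+o(1)$, where $\QQ = G(n,1/2)$. Since $\QQ$ is a product measure on $\{\pm1\}^{\binom n2}$, the monomials $\chi_S(X) = \prod_{e\in S}X_e$ (indexed by edge sets $S$) are orthonormal, so $\Adv_{\le D}(\PP,\QQ)^2 = \sum_{|S|\le D}\EE_\PP[\chi_S]^2$, and it is enough to show $\sum_{1\le|S|\le D}\EE_\PP[\chi_S]^2 = o(1)$.

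I would take $\PP$ to be a \emph{planted $q$-coloring}: draw i.i.d.\ uniform labels $\sigma:[n]\to[q]$, declare every monochromatic pair a non-edge (so the colour classes are independent sets and the graph is deterministically $q$-colorable), and include each bichromatic pair $\{u,v\}$ as an edge independently with probability $\tfrac12(1+M_{\sigma(u)\sigma(v)})$, for a symmetric matrix $M\in\RR^{q\times q}$ with $M_{aa}=-1$ and all $M_{ab}\in[-1,1]$. Then $\EE_\PP[\chi_S] = \EE_\sigma[\prod_{\{u,v\}\in S}M_{\sigma(u)\sigma(v)}]$. Requiring $M\mathbf 1 = 0$ makes the edge density exactly $1/2$ and, more importantly, forces $\EE_\PP[\chi_S]=0$ whenever $S$ has a degree-one vertex (average over that vertex's label to get a factor $M\mathbf 1$); hence $\EE_\PP[\chi_S]$ is nonzero only when every connected component of $S$ has minimum degree $\ge 2$, and it is multiplicative over components. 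For connected $S$ with $j$ vertices it equals the tensor-network contraction $q^{-j}\sum_{\phi:V(S)\to[q]}\prod_e M_{\phi(u)\phi(v)}$; in particular for a triangle this is $\mathrm{tr}(M^3)/q^3$ and for a $4$-cycle it is $\mathrm{tr}(M^4)/q^4$.

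The crux is the choice of $M$, and I expect this to be the main obstacle. The ``obvious'' choice $M_{ab}=1/(q-1)$ for $a\ne b$ (boosting the cross-density so the total density is $1/2$) has $\EE_\PP[\chi_{K_3}] = \mathrm{tr}(M^3)/q^3 = -1/(q-1)^2$, so $\Adv_{\le 3}(\PP,\QQ)^2 \ge 1 + \binom n3/(q-1)^4 \sim 1 + n^3/(6q^4)$, which blows up unless $q\gg n^{3/4}$ --- far from the target. The fix is to perturb $M$ so that it additionally satisfies $\mathrm{tr}(M^3)=0$ while retaining $\|M\|_{\mathrm{op}}=O(1)$ (and still being symmetric with $M_{aa}=-1$, $M_{ab}\in[-1,1]$, $M\mathbf 1=0$). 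Such an $M$ exists but takes some work to build: $\mathrm{tr}(M)=-q$ together with the zero eigenvalue on $\mathbf 1$ pins the bulk of the spectrum near $-1$, so cancelling the negative cube-sum forces $\Theta(q)$ eigenvalues up to magnitude $\Theta(1)$; this can be realized by a full-rank perturbation of $\tfrac1{q-1}(J-I)-I$ built from a sufficiently delocalized orthonormal eigenbasis (a random one works), keeping the entrywise size of the perturbation $O(q^{-1/2})$ so that $M_{ab}$ remains in $[-1,1]$; and having $M\mathbf 1$ and $\mathrm{tr}(M^3)$ only polynomially close to $0$, rather than exactly $0$, would cost only $o(1)$. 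Once $\mathrm{tr}(M^3)=0$, the triangle term vanishes and the leading obstruction becomes the $4$-cycles: $\mathrm{tr}(M^4)/q^4 = \Theta(q^{-3})$ --- which \emph{cannot} be removed, since $\mathrm{tr}(M^4)=\sum_i\lambda_i^4>0$ --- contributing $\sim n^4/q^6$, which is $o(1)$ precisely when $q\gg n^{2/3}$. This is where the exponent $2/3$ comes from (and, as remarked before Theorem~\ref{thm:ref-lower}, improving it would need a genuinely different, ``quieter'' planted distribution, not of this stochastic-block-model-on-colours form).

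It remains to sum $\sum_{1\le|S|\le D}\EE_\PP[\chi_S]^2$. By multiplicativity over components it suffices to bound the sum of $\EE_\PP[\chi_S]^2$ over connected $S$ with minimum degree $\ge 2$ and at most $D$ edges; organizing by the vertex count $j$, the $j=3$ term is $0$, the $j=4$ term is $O(n^4/q^6) = o(1)$ since $q\ge n^{2/3+\epsilon}$, and for $j\ge 5$ one uses the estimate $|\EE_\PP[\chi_S]| \lesssim q^{1-j}$ (up to factors controlled by $\|M\|_{\mathrm{op}}=O(1)$; the $q^{1-j}$ appears because $M\mathbf 1=0$ kills the spanning-tree part of the contraction) together with a count of the connected minimum-degree-$\ge 2$ graph shapes on $j$ labelled vertices with $\le D$ edges. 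The remaining technical point --- the second place I expect to need care --- is to make this bookkeeping converge: the number of graph shapes grows with $j$, but since $q\ge n^{2/3+\epsilon}$ forces $n/q^2 = n^{-1/3-\Omega(1)}$, the geometric decay $q^{2-2j}$ beats the combinatorial growth for all $j$ up to $D$, \emph{provided} $D = o(\log n/\log\log n)^2$ --- exactly the regime in which the analogous planted-clique low-degree sums converge, and the computation is similar in spirit to those.
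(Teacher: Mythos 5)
Your overall skeleton is the same as the paper's: invoke Proposition~\ref{prop:strat-II}, use the exact Fourier formula $\Adv_{\le D}^2=\sum_{|S|\le D}\EE_\PP[\chi_S]^2$ for the i.i.d.\ null, and sum over shapes in the regime $D=o(\log n/\log\log n)^2$; you also correctly identify the phenomenology (edge count and signed triangles must be cancelled, signed $4$-cycles are the unremovable obstruction that pins the exponent $2/3$). However, your planted distribution is genuinely different from the paper's, and that is where the gaps lie. The paper does not tune a $q\times q$ coupling matrix; it plants with a two-coordinate label $(a,b)\in[q]\times[q]$ (Definition~\ref{def:planting}), where the $a$-classes are independent sets and the $b$-classes are near-cliques. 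The clique/ind-set sign symmetry makes the edge and triangle coefficients automatically $O(q^{-2})$ and $O(q^{-3})$ without any exact cancellation conditions, and --- crucially --- conditional on labels every edge bias lies in $\{0,\pm1\}$, so the general coefficient bound (Lemma~\ref{lem:lambda-bound}, $|\lambda_\alpha|\le(2/q)^{|V(\alpha)|-1}$ for connected $\alpha$) is a short breadth-first-search counting argument. Your construction trades this combinatorial structure for spectral conditions on $M$, and the two places you flag as ``needing care'' are exactly where the proof is currently missing.

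Concretely: (1) the heart of the theorem in your route is the uniform bound $|\EE_\PP[\chi_S]|\le C^{|S|}q^{1-|V(S)|}$ (or $(|S|+1)^{O(|S|)}q^{1-|V(S)|}$) over \emph{all} connected min-degree-$\ge 2$ shapes up to $D$ edges, and your justification --- ``factors controlled by $\|M\|_{\mathrm{op}}=O(1)$, since $M\mathbf 1=0$ kills the spanning-tree part'' --- is not a proof. Such tensor-network contractions are not controlled by the operator norm in any off-the-shelf way; each small shape ($C_j$, $K_{2,m}$, $K_4$, \dots) admits an ad hoc Cauchy--Schwarz argument, but a uniform induction over all shapes needs entrywise/delocalization control of $M$ (and of its powers), not just $\|M\|_{\mathrm{op}}$, together with bookkeeping of the combinatorial constants. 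The budget is unforgiving: losing even one factor of $q$ at $|V(S)|=4$ or $5$ turns $n^4/q^6$ into $n^4/q^4$ or similar, which is \emph{not} $o(1)$ at $q\approx n^{2/3}$, so this lemma cannot be waved through. (2) The matrix $M$ itself is only sketched. The spectral constraints ($\mathrm{tr}(M)=-q$, $M\mathbf 1=0$, $\mathrm{tr}(M^3)=0$, $\|M\|=O(1)$) are jointly feasible, but any construction must keep the diagonal \emph{exactly} $-1$ --- otherwise $\PP$ is not supported on $\mathcal{R}_q$ and Proposition~\ref{prop:strat-II} does not apply --- and after forcing the diagonal and clipping entries into $[-1,1]$, the conditions $M\mathbf 1=0$ and $\mathrm{tr}(M^3)=0$ hold only approximately; you then also need coefficient bounds for shapes \emph{with} pendant vertices, with quantitative tolerances (these tolerances are in fact generous, but nothing is verified). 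So the proposal is a plausible alternative program, but as written it omits the analogue of the paper's key Lemma~\ref{lem:lambda-bound}, and the paper's explicit clique-plus-independent-set planting is precisely the device that makes that lemma elementary.
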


\noindent The proof of the lower bound will use Proposition~\ref{prop:strat-II}, which is a rigorous incarnation of strategy (II) from the introduction. In other words, our goal is to construct a planted distribution $\PP$ supported on $q$-colorable graphs that is hard to distinguish from $\QQ = G(n,1/2)$ in the sense $\Adv_{\le D}(\PP,\QQ) = 1+o(1)$.

Constructing this planted distribution is non-trivial. The naive choice would be the ``canonical'' planted model $\MC(n,q)$ (or rather, its complement), but this is not a good choice because it can be easily distinguished from $G(n,1/2)$ by counting the total number of edges whenever $q \ll n$. A next attempt is to modify $\MC(n,q)$ to have a slightly lower probability for non-clique edges so as to correct the total edge count. This gives a quieter planting that is hard to distinguish from $\QQ$ when $q \gg n^{3/4}$, but easy when $q \ll n^{3/4}$ by counting signed triangles (each of the $\binom{n}{3}$ triangles in the complete graph counts for $+1$ if an even number of its edges are present or $-1$ if an odd number are present). Our final construction, defined below, that reaches the threshold $q \sim n^{2/3}$, is more complicated and involves planting both cliques and independent sets.

\begin{definition}[Quiet planting for $q \gg n^{2/3}$]
\label{def:planting}
Suppose $n,q$ are positive integers. To each of the $n$ vertices, independently assign a label $(a,b) \in [q] \times [q]$ uniformly at random. Conditioned on the labels, do the following independently for each pair of distinct vertices $\{u,v\}$: denote the two vertex labels by $(a_1,b_1)$ and $(a_2,b_2)$; if $a_1 = a_2$ then do not include the edge $(u,v)$; if $a_1 \ne a_2$ and $b_1 = b_2$ then include the edge $(u,v)$; otherwise include the edge $(u,v)$ with probability $1/2$.
\end{definition}

\noindent Note that all the vertices with a given $a$ value form an independent set, and thus the distribution is supported on $q$-colorable graphs. Also, the vertices with a given $b$ value nearly form a clique, aside from the non-edges required for the independent sets. In the proof of Theorem~\ref{thm:ref-lower}, we show that this distribution is low-degree indistinguishable from $G(n,1/2)$ when $q \gg n^{2/3}$. Our analysis of this distribution is tight, as the count of signed 4-cycles distinguishes it from $G(n,1/2)$ when $q \ll n^{2/3}$.

Although we have not proven it, we expect the true threshold for low-degree refutation of colorability to be $q \sim \sqrt{n}$.

\begin{conjecture}\label{conj:quiet}
Fix an arbitrary $\epsilon > 0$, not depending on $n$. If $q \ge n^{1/2 + \epsilon}$ then no degree-$D$ polynomial weakly separates $\QQ = G(n,1/2)$ and $\mathcal{R}_q$, for some $D = \omega(\log n)$.
\end{conjecture}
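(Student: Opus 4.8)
The plan is to follow strategy (II): construct a planted distribution $\PP = \PP_n$ supported on $q$-colorable graphs such that $\Adv_{\le D}(\PP, G(n,1/2)) = 1 + o(1)$ for some $D = \omega(\log n)$ whenever $q \ge n^{1/2+\epsilon}$, and then invoke Proposition~\ref{prop:strat-II}. The key difficulty, as the excerpt already flags, is that each successive natural candidate is defeated by the next signed-subgraph statistic: $\MC(n,q)$ dies to edge counts ($q \ll n$), the edge-corrected version dies to signed triangles ($q \ll n^{3/4}$), and the clique-plus-independent-set construction of Definition~\ref{def:planting} dies to signed 4-cycles ($q \ll n^{2/3}$). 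The pattern suggests that the ``right'' construction should simultaneously kill \emph{all} signed-cycle counts (indeed all bounded-size subgraph counts) down to $q \sim \sqrt{n}$, and the first step is to design it. A promising route is to iterate the idea behind Definition~\ref{def:planting}: assign each vertex a label in $[q]^t$ for some slowly growing $t = t_n$ (say $t = \Theta(\log n / \log\log n)$ or similar), so the vertex labels live in a product of $t$ coordinate-blocks; use the coordinates to plant a nested/interleaved family of independent sets (guaranteeing $q$-colorability through one designated coordinate) and near-cliques (to pump the edge density back to $1/2$), tuning the marginal edge probability to be \emph{exactly} $1/2$ and, crucially, arranging cancellations so that the expectation under $\PP$ of $\prod_{e \in H} X_e$ is $n^{-\omega(1)}$ for every fixed connected subgraph $H$ with $|E(H)| \le D$. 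This is the main obstacle: finding a distribution whose low-order Fourier coefficients (in the $\pm 1$ basis) all vanish to high enough order is essentially a combinatorial design problem, and verifying it requires a clean closed form for $\EE_\PP[\prod_{e \in H} X_e]$.

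Once the construction is fixed, the analysis should mirror the machinery developed elsewhere in this paper for the planted-versus-planted and $\MC(n,q)$-versus-$\MC(n,q+\ell)$ bounds. Since $\QQ = G(n,1/2)$ is i.i.d.\ here (unlike the $\MC$-versus-$\MC$ setting), the situation is actually simpler: the parity monomials $\{\chi_S(X) = \prod_{e \in S} X_e\}_{|S| \le D}$ form an orthonormal basis with respect to $\QQ$, and $\Adv_{\le D}(\PP,\QQ)^2 = \sum_{1 \le |S| \le D} \widehat{\PP}(S)^2$ where $\widehat{\PP}(S) = \EE_\PP[\chi_S]$. So the second step is to show $\sum_{1 \le |S| \le D} \widehat{\PP}(S)^2 = o(1)$. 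I would decompose $S$ (a set of potential edges, i.e.\ a graph on $[n]$) into connected components; by independence of the label assignment across vertices and the near-product structure of the edge law, $\widehat{\PP}(S)$ should factor (up to lower-order terms) over connected components, and $\widehat{\PP}(S) = 0$ whenever any component has an odd or otherwise ``unbalanced'' structure — this is the analogue of the signed-subgraph vanishing that kills the earlier constructions. Summing over the surviving $S$, one bounds the contribution by a standard counting argument: the number of graphs $S$ on $[n]$ with $c$ connected components, $v$ non-isolated vertices and $e = |S|$ edges is at most $n^v \cdot e^{O(e)}$, and each nonzero $\widehat{\PP}(S)^2$ should carry a factor like $q^{-2(v-c)} \cdot (\text{poly})$ coming from the requirement that the endpoints of $S$'s edges agree in the relevant label-coordinates. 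The target inequality then reduces to something of the form $\sum_{v,e} n^{v} q^{-2(v - O(e/t))} (\text{small})^{e} = o(1)$, which holds for $q \ge n^{1/2+\epsilon}$ once $t$ is chosen so the $q^{O(e/t)}$ slack is absorbed — this is exactly where the multi-coordinate ($[q]^t$) design buys the improvement from $n^{2/3}$ to $n^{1/2}$.

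Concretely, the steps in order are: (1) define the planted distribution $\PP$ with vertex labels in $[q]^t$, specify the conditional edge law, and verify it is supported on $q$-colorable graphs and has marginal edge-probability $1/2$; (2) derive a closed-form (or clean upper bound) for the Fourier coefficient $\widehat{\PP}(S) = \EE_\PP[\prod_{e \in S} X_e]$ in terms of the component structure of $S$ and the label-agreement constraints, showing in particular a strong vanishing for all but a sparse family of $S$; (3) perform the subgraph-counting sum $\sum_{1 \le |S| \le D}\widehat{\PP}(S)^2$ and show it is $o(1)$ for $q \ge n^{1/2+\epsilon}$ and $D = \omega(\log n)$, tuning $t$ appropriately; (4) conclude via Proposition~\ref{prop:strat-II} that no such polynomial weakly separates $\QQ$ and $\mathcal{R}_q$. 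I expect step (1)--(2) — engineering a $q$-colorable planting whose bounded-order Fourier mass vanishes down to the $\sqrt{n}$ scale, and computing it cleanly — to be the genuine obstacle; step (3) should then be a routine, if delicate, counting argument of the type already executed in the proof of Theorem~\ref{thm:testing-lower}, and step (4) is immediate.
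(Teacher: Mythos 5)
There is a fundamental gap here: the statement you are addressing is Conjecture~\ref{conj:quiet}, which the paper explicitly leaves open --- it proves only the weaker threshold $q \ge n^{2/3+\epsilon}$ (Theorem~\ref{thm:ref-lower}, via the planting of Definition~\ref{def:planting}) and, separately, a duality result showing the conjecture is \emph{equivalent} to the existence of a quiet planted distribution, while stating that no explicit construction is known. Your proposal is a research plan rather than a proof: steps (2)--(4) are indeed routine given a suitable distribution (with $\QQ = G(n,1/2)$ i.i.d., the identity $\Adv_{\le D}^2 = \sum_{|S|\le D} \EE_\PP[Y^S]^2$ and the counting argument of Proposition~\ref{prop:loglog} do all the work), but step (1), the construction itself, is precisely the open problem, and you acknowledge you have not carried it out. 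Nothing in the proposal verifies that a $[q]^t$-labeled planting with the required properties exists, so no part of the conjecture is actually established.

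Two concrete reasons the sketched construction cannot be accepted on faith. First, the progression of thresholds for the known constructions ($n$ for $\MC(n,q)$, $n^{3/4}$ after edge-correction, $n^{2/3}$ for the two-coordinate planting, with each analysis tight against signed $k$-cycle counts) suggests that each additional label coordinate buys only an incremental improvement in the exponent, approaching $1/2$ only as $t \to \infty$; so a growing $t$ is forced, as you anticipate. But with labels in $[q]^t$ the exploration bound analogous to~\eqref{eq:generic-lambda} degrades to roughly $(t/q)^{|V(\alpha)|-1}$ per connected component, which for $q = n^{1/2+\epsilon}$ is insufficient for \emph{every} connected subgraph of constant size: one would need exact (or near-exact) cancellation of $\EE_\PP[Y^\alpha]$ for all connected $\alpha$ with $|V(\alpha)| = O(1/\epsilon)$, not merely for edges, triangles, and $4$-cycles. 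Arranging these cancellations simultaneously, while keeping the distribution supported on $q$-colorable graphs and the marginal edge probability exactly $1/2$, is the combinatorial design problem the paper identifies as unsolved; your sketch names it as ``the genuine obstacle'' but offers no candidate edge law, no closed form for the coefficients, and no argument that the cancellations can coexist. Second, even granting such a construction, the claimed bound of the form $q^{-2(v - O(e/t))}$ would need justification, since the $q$-colorability constraint ties one coordinate to genuine independent sets whose contribution does not obviously enjoy the extra $t$-fold averaging. In short, the proposal correctly reproduces the paper's strategy (II) and its completeness motivation, but the conjecture remains unproven by it.
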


\subsubsection{Completeness of the quiet planting approach}

A natural approach to prove Conjecture~\ref{conj:quiet} would be to construct a quieter planted distribution $\PP$ that is supported on $q$-colorable graphs but hard to distinguish from $G(n,1/2)$ when $q \gg \sqrt{n}$. One might worry, however, that this may not even be possible: conceivably, such a planted distribution might not exist, even if the true low-degree refutation threshold is at $q \sim \sqrt{n}$ like we expect. If this were the case, we would need to find an alternative approach to prove the conjecture without relying on quiet planting.

We show in high generality that the hypothetical scenario above actually cannot occur: for every low-degree hard refutation problem, there is a planted distribution that can be used to prove its hardness. Put another way, Conjecture~\ref{conj:quiet} is \emph{equivalent} to the existence of a quiet planted distribution for $q \gg \sqrt{n}$.

\begin{restatable}{theorem}{thmrefduality}
Fix sequences $N = N_n$, $D = D_n$, $\QQ = \QQ_n$ a distribution on $\RR^N$, and $\mathcal{R} = \mathcal{R}_n \subseteq \RR^N$. Assume that for each $n$, $\QQ$ is supported on a finite set and $\mathcal{R}$ is a finite set (but the cardinality of these sets may depend on $n$).
The following are equivalent:
\begin{itemize}
\item[(1)] No degree-$D$ polynomial strongly separates $\QQ$ and $\mathcal{R}$.
\item[(2)] For an infinite subsequence of $n$ values, there exists a distribution $\PP = \PP_n$ supported on $\mathcal{R}$ such that $\Adv_{\le D}(\PP,\QQ) = O(1)$.
\end{itemize}
Similarly, the following are equivalent:
\begin{itemize}
\item[(1)] No degree-$D$ polynomial weakly separates $\QQ$ and $\mathcal{R}$.
\item[(2)] For an infinite subsequence of $n$ values, there exists a distribution $\PP = \PP_n$ supported on $\mathcal{R}$ such that $\Adv_{\le D}(\PP,\QQ) = 1+o(1)$.
\end{itemize}
\end{restatable}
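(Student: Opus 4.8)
The plan is to prove the stronger \emph{quantitative} statement that, for each $n$,
\[ \inf_{\PP}\, \Adv_{\le D}(\PP,\QQ) \;=\; V_n \;:=\; \sup\left\{ \min_{X \in \mathcal{R}} f(X) \;:\; f \in \RR[Y]_{\le D},\ \EE_\QQ[f^2] \le 1 \right\}, \]
where the infimum is over distributions $\PP$ supported on $\mathcal{R}$ and is attained, together with the analogous identity $\inf_\PP \Adv_{\le D}(\PP,\QQ) = \sqrt{1 + W_n^2}$, where $W_n$ is defined exactly like $V_n$ but with the additional constraint $\EE_\QQ[f] = 0$ imposed on the optimizing polynomial. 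Both asserted equivalences then follow by reading off what these identities say as $n \to \infty$. The implication $(2)\Rightarrow(1)$ in either case is already essentially Proposition~\ref{prop:strat-II}; the substance is $(1)\Rightarrow(2)$, but the identities above deliver both directions at once.

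First I would reduce to finite dimensions: since $\QQ$ is supported on a finite set and $\mathcal{R}$ is finite, it suffices to regard degree-$\le D$ polynomials as elements of the finite-dimensional space $\RR^S$ of real functions on $S := \mathrm{supp}(\QQ) \cup \mathcal{R}$ (the image of $\RR[Y]_{\le D}$ under restriction to $S$). Next I would record two elementary facts. (a) For any $\PP$ supported on $\mathcal{R}$, $\Adv_{\le D}(\PP,\QQ) = \sup\{ \EE_\PP[f] : \deg f \le D,\ \EE_\QQ[f^2] \le 1 \}$ --- pure homogeneity, with the degenerate case $\EE_\QQ[f^2]=0$ checked separately. (b) Writing $f = c + g$ with $c = \EE_\QQ[f]$ and $\EE_\QQ[g]=0$, so that $\EE_\QQ[f^2] = c^2 + \EE_\QQ[g^2]$, a one-variable optimization over $c$ gives $\Adv_{\le D}(\PP,\QQ)^2 - 1 = \sup\{ \EE_\PP[g]^2 : \EE_\QQ[g]=0,\ \deg g \le D,\ \EE_\QQ[g^2] \le 1 \}$. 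This replaces the ``norm of the low-degree likelihood ratio'' characterization, and crucially it does not require $\PP \ll \QQ$, which matters because $\mathcal{R}$ need not lie inside $\mathrm{supp}(\QQ)$.

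Now the main step: apply Sion's minimax theorem to the bilinear map $(\PP, f) \mapsto \EE_\PP[f]$, with $\PP$ over the compact convex simplex $\Delta(\mathcal{R})$ of distributions supported on $\mathcal{R}$ and $f$ over the convex set $K := \{ f \in \RR[Y]_{\le D} : \EE_\QQ[f^2] \le 1 \}$ (respectively its mean-zero slice $K_0 := K \cap \{\EE_\QQ[f]=0\}$). Since $\inf_{\PP}\EE_\PP[f] = \min_{X\in\mathcal{R}} f(X)$ and $\sup_{f\in K}\EE_\PP[f] = \Adv_{\le D}(\PP,\QQ)$ by (a), this yields $\inf_\PP \Adv_{\le D}(\PP,\QQ) = \sup_{f\in K}\min_{X\in\mathcal{R}}f(X) = V_n$; the same argument over $K_0$ combined with (b) yields $\inf_\PP \Adv_{\le D}(\PP,\QQ) = \sqrt{1+W_n^2}$ (using that $\Adv_{\le D}(\PP,\QQ)\ge 1$ always, via $f \equiv 1$). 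Lower semicontinuity of $\PP \mapsto \Adv_{\le D}(\PP,\QQ)$ --- a supremum of affine functions --- on the compact simplex gives that both infima are attained. Finally I would translate separation into these quantities by scaling. A sequence of degree-$D$ polynomials strongly separates $\QQ$ and $\mathcal{R}$ if and only if $V_n \to \infty$: a strong separator rescaled to unit $\QQ$-second-moment witnesses $V_n \to \infty$, and conversely a near-optimizer of $V_n$ rescaled to be $\ge 1$ on $\mathcal{R}$ has $\QQ$-second-moment $\to 0$. Likewise a weak separator exists if and only if $\liminf_n W_n > 0$ (note $W_n \ge 0$ since the zero polynomial is feasible). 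Hence ``no degree-$D$ polynomial strongly (resp.\ weakly) separates'' is exactly $\liminf_n V_n < \infty$ (resp.\ $\liminf_n W_n = 0$), i.e.\ the existence of an infinite subsequence on which $V_n = O(1)$ (resp.\ $W_n \to 0$); by the two identities and attainment of the infimum, this is precisely statement $(2)$ (resp.\ its weak-separation analogue).

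The main obstacle, I expect, is not a single deep step but getting the boundary details consistent: (i) Sion's theorem needs one of the feasible sets compact --- the simplex is, so one may keep $K$ (which can be \emph{unbounded}, e.g.\ if some degree-$\le D$ polynomial vanishes on $\mathrm{supp}(\QQ)$) on the sup side, although one could alternatively exhaust $K$ by Euclidean balls, apply the classical minimax theorem, and pass to the limit using a compactness argument on $\Delta(\mathcal{R})$; (ii) the degenerate cases $\EE_\QQ[f^2]=0$ and $\Adv_{\le D}(\PP,\QQ) = +\infty$ must be tracked uniformly, which is exactly where the elementary derivation of fact (b) --- rather than the likelihood-ratio one --- earns its keep; and (iii) the asymptotic bookkeeping must line up with the quantifier structure of the theorem, since condition $(1)$ is a statement about full sequences while $(2)$ quantifies over a subsequence, and it is precisely the $\liminf$ characterizations above that reconcile the two. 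In the main application ($\QQ = G(n,1/2)$ with $\mathcal{R} \subseteq \{\pm1\}^{\binom{n}{2}}$), $K$ is a genuine bounded ellipsoid and none of these degeneracies arise.
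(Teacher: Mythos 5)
Your proposal is correct and follows essentially the same route as the paper: apply Sion's min-max theorem to the bilinear form $(\PP,f)\mapsto \EE_\PP[f]$ over the compact simplex of distributions on $\mathcal{R}$ and a convex set of degree-$D$ polynomials normalized in $L^2(\QQ)$, then convert the resulting value to $\Adv_{\le D}$ via the identity relating the centered optimum to $\Adv_{\le D}^2-1$ (the paper's Lemma~\ref{lem:adv-mean}, your fact (b)). The only differences are cosmetic — you run two parallel identities ($V_n$ uncentered for strong separation, $W_n$ centered for weak) where the paper works only with the mean-zero slice and shifts/rescales strong separators, and you are somewhat more explicit about attainment and the $\liminf$/subsequence bookkeeping.
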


\noindent Note that we have already shown that (2) implies (1); see Proposition~\ref{prop:strat-II}. The proof that (1) implies (2) uses von Neumann's min-max principle.

\begin{remark}
We have assumed $\mathrm{supp}(\QQ)$ and $\mathcal{R}$ are finite (the relevant setting for $q$-coloring) to simplify the analytic conditions needed for the min-max principle, but these assumptions can be relaxed; see Remark~\ref{rem:compact}.
\end{remark}

Adapted to the context of $q$-coloring, while formally we do not know whether there exists a low-degree polynomial to refute $q$-coloring when $q \gg \sqrt{n}$, it would be surprising in light of the sum-of-squares lower bound~\cite{KM21} for refuting $\tilde{O}(\sqrt{n})$-colorability of $G(n,1/2)$. Hence, we interpret this argument as suggesting the existence of a computationally quiet planted $q$-coloring for $G(n,1/2)$ when $q \approx \sqrt{n}$ even though we do not know an explicit construction of such a distribution. If this construction were known, it may allow for SoS lower bounds in stronger SDP formulations to be proved via the \emph{pseudo-calibration}~\cite{BHKK+16} approach.

\section{Recovering Multiple Cliques}

\subsection{Upper Bound}

We restate the theorem for the reader's convenience.

\thmrecoveryupper*

\begin{proof}
We will use the following standard version of Bernstein's inequality: for independent random variables $X_1,\ldots,X_n$ satisfying $\EE[X_i] = 0$ and $|X_i| \le M$ almost surely, we have for any $t \ge 0$ that
\[ \Pr\left(\sum_{i=1}^n X_i \ge t\right) \le \exp\left(-\frac{\frac{1}{2}t^2}{\sum_{i=1}^n \Var(X_i) + \frac{1}{3}Mt}\right). \]

Fix an arbitrary sequence $\alpha_n = \omega(1)$. The degree $d_i$ of a non-clique vertex $i$ has a binomial distribution $d_i \sim \Bin(n-1,1/2)$, which by Bernstein's inequality satisfies $d_i \le \frac{n}{2} + \alpha\sqrt{n \log n}$ with probability $1-n^{-\omega(1)}$. On the other hand, a clique vertex $i$ has degree $d_i \sim (k-1) + \Bin(n-k,1/2)$, which by Bernstein's inequality satisfies $d_i \ge \frac{n+k}{2} - \alpha\sqrt{n \log n}$ with probability $1-n^{-\omega(1)}$. By thresholding degrees, this lets us perfectly classify the non-clique vertices with probability $1-o(1)$, provided $k = \omega(\sqrt{n \log n})$.

It remains to partition the clique vertices. If vertices $i,j$ are in different cliques, their number of common neighbors is $d_{ij} \sim \Bin(2(k-1),1/2) + \Bin(n-2k,1/4)$, which satisfies $d_{ij} \le \frac{n}{4} + \frac{k}{2} + \alpha\sqrt{n \log n}$ with probability $1-n^{-\omega(1)}$. If vertices $i,j$ instead belong to the same clique, their number of common neighbors is $d_{ij} \sim (k-2) + \Bin(n-k,1/4)$, which satisfies $d_{ij} \ge \frac{n}{4} + \frac{3k}{4} - \alpha\sqrt{n \log n}$ with probability $1-n^{-\omega(1)}$. By thresholding common neighbors, this allows us to exactly recover the clique partition with probability $1-o(1)$, again provided $k = \omega(\sqrt{n \log n})$.
\end{proof}

\subsection{Lower Bound via Reduction}

We restate the theorem for the reader's convenience.

\thmrecoverylower*

\begin{proof}
Let $q,\delta$ scale as prescribed. Assume for the sake of contradiction that an algorithm $A_n$ achieves exact recovery in $\MC(n,q,\delta)$. Let $K = k = (1-\delta)n/q$ and $N = K + \delta n$. Note that as $n \to \infty$ we have $N \to \infty$ because
\[ N \ge K = k \ge (2+\Omega(1)) \log_2 n \to \infty, \]
and also $K \le N^{1/2 - \Omega(1)}$ because
\[ K = k \le (\delta n)^{\frac{1}{2}-\Omega(1)} \le N^{\frac{1}{2}-\Omega(1)}. \]
We will give an algorithm $B_N$ achieving strong detection between $G(N,1/2)$ and $\PC(N,K)$, contradicting the planted clique conjecture.

The algorithm $B_N$ works as follows. Given an $N$-vertex graph, add $(q-1)k$ additional vertices (bringing the total to $n$), partitioned into $q-1$ cliques each of size $k$. Add all other edges (both among the new vertices and between the old and new vertices) independently with probability $1/2$. Now run $A_n$ on the resulting graph. If it finds $q$ disjoint cliques of size $k$ and one of these cliques lies within the original $N$ vertices, output ``$\PC(N,K)$''; otherwise, output ``$G(N,1/2)$.''

To argue correctness of $B_N$, first suppose the input came from $\PC(N,K)$. Then the $n$-vertex graph produced is exactly a sample from $\MC(n,q,\delta)$, and so $A_n$ must correctly identify all the cliques with probability $1-o(1)$, leading $B_N$ to correctly answer ``$\PC(N,k)$.'' Now suppose instead that the input to $B_N$ came from $G(N,1/2)$. Due to the assumption $k \ge (2+\Omega(1)) \log_2 n \ge (2+\Omega(1)) \log_2 N$, with probability $1-o(1)$ there is no $k$-clique within the original $N$ vertices, in which case $B_N$ must correctly answer ``$G(N,1/2)$.''
\end{proof}

\section{Testing the Number of Cliques}

\subsection{Upper Bound}

We restate the theorem for the reader's convenience.

\thmtestingupper*

\begin{proof}
Let $f$ be the degree-1 polynomial that counts the total number of signed edges in the graph: $f(Y) = \sum_{1 \le i < j \le n} Y_{ij}$, where recall $Y_{ij} \in \{\pm 1\}$. Using linearity of expectation,
\[ \EE_{Y \sim \MC(n,q)} f(Y) = \binom{n}{2} \frac{1}{q} \]
and so
\begin{equation}\label{eq:first-moment-diff}
\left|\EE_{Y \sim \PP} f(Y) - \EE_{Y \sim \QQ} f(Y)\right| = \binom{n}{2}\left(\frac{1}{q} - \frac{1}{q+\ell}\right) = \binom{n}{2} \frac{\ell}{q(q+\ell)}.
\end{equation}
For the second moment,
\[ \EE_{Y \sim \MC(n,q)} f(Y)^2 = \sum_{i < j} \sum_{i' < j'} \EE[Y_{ij} Y_{i'j'}]. \]
There are a few different terms to consider depending on how the edges $(i,j)$ and $(i',j')$ interact.
\begin{itemize}
\item If $(i,j) = (i',j')$ then $\EE[Y_{ij}Y_{i'j'}] = \EE[Y_{ij}^2] = 1$.
\item If $(i,j)$ and $(i',j')$ have no vertices in common then $Y_{ij}$ and $Y_{i'j'}$ are independent, and so $\EE[Y_{ij}Y_{i'j'}] = \EE[Y_{ij}]\EE[Y_{i'j'}] = \frac{1}{q^2}$.
\item If $(i,j)$ and $(i',j')$ have one vertex in common then we again have that $Y_{ij}$ and $Y_{i'j'}$ are independent: if say $i = i'$ then the event that $i,j$ have the same label is independent from the event that $i,j'$ have the same label, due to symmetry among the possible labels for $i$. Therefore $\EE[Y_{ij}Y_{i'j'}] = \frac{1}{q^2}$.
\end{itemize}
Putting it together,
\[ \EE_{Y \sim \MC(n,q)} f(Y)^2 = \binom{n}{2} \cdot 1 + \binom{n}{2}\left[\binom{n}{2}-1\right] \cdot \frac{1}{q^2} \]
and so
\begin{align}
\Var_{Y \sim \MC(n,q)} f(Y) &= \binom{n}{2} \cdot 1 + \binom{n}{2}\left[\binom{n}{2}-1\right] \cdot \frac{1}{q^2} - \left[\binom{n}{2} \frac{1}{q}\right]^2 \nonumber \\
&= \binom{n}{2}\left(1 - \frac{1}{q^2}\right) \nonumber \\
&\le \binom{n}{2}.
\label{eq:variance}
\end{align}
Combining~\eqref{eq:first-moment-diff} and~\eqref{eq:variance}, $f$ achieves strong separation provided
\[ \sqrt{\binom{n}{2}} = o\left(\binom{n}{2} \frac{\ell}{q(q+\ell)}\right), \qquad \text{i.e.,} \qquad q\left(\frac{q}{\ell}+1\right) = o(n). \]
It therefore suffices to have $q = o(n)$ and $q^2 = o(\ell n)$. Note that $q=o(n)$ is implied by $q^2 = o(\ell n)$ together with $\ell \le n$.
\end{proof}

\subsection{Lower Bound}
\label{sec:pf-testing-lower}

We restate the theorem for the reader's convenience.

\thmtestinglower*

\subsubsection{Proof overview}

We first perform a standard manipulation, showing that it suffices to bound the quantity $\Adv_{\le D}$.

\begin{lemma}\label{lem:adv-sep}
Let $\PP = \PP_n$ and $\QQ = \QQ_n$ be distributions on $\RR^N$ for some $N = N_n$. For some $D = D_n$, let $\RR[Y]_{\le D}$ denote the set of polynomials $\RR^N \to \RR$ of degree (at most) $D$. If
\[ \Adv_{\le D}(\PP,\QQ) \coloneqq \sup_{f \in \RR[Y]_{\le D}} \frac{\E_\PP[f]}{\sqrt{\E_\QQ[f^2]}} = 1+o(1), \]
then no degree-$D$ polynomial $f: \RR^N \to \RR$ weakly separates $\PP$ and $\QQ$. Similarly, if $\Adv_{\le D}(\PP,\QQ) = O(1)$ then no degree-$D$ polynomial strongly separates $\PP$ and $\QQ$.
\end{lemma}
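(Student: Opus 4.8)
The plan is to prove the contrapositive in both cases: from a degree-$\le D$ polynomial $f = f_n$ that weakly (resp.\ strongly) separates $\PP$ and $\QQ$, I will construct a degree-$\le D$ test witnessing $\Adv_{\le D}(\PP,\QQ) \ge 1+\Omega(1)$ (resp.\ $\Adv_{\le D}(\PP,\QQ) = \omega(1)$), which contradicts the hypothesis $\Adv_{\le D}(\PP,\QQ) = 1+o(1)$ (resp.\ $O(1)$). Since $\Adv_{\le D}$ is defined as a supremum over $\RR[Y]_{\le D}$, it suffices to name one good polynomial.

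First I would normalize: replacing $f$ by $-f$ if necessary (this preserves both variances and the separation property), assume $\Delta \coloneqq \E_\PP[f] - \E_\QQ[f] \ge 0$, so that weak (resp.\ strong) separation is exactly $\sqrt{\max\{\Var_\QQ[f],\Var_\PP[f]\}} = O(\Delta)$ (resp.\ $o(\Delta)$); in particular $\Var_\QQ[f] = O(\Delta^2)$ (resp.\ $o(\Delta^2)$). Set $\bar f \coloneqq f - \E_\QQ[f]$, a degree-$\le D$ polynomial with $\E_\QQ[\bar f] = 0$, $\E_\QQ[\bar f^2] = \Var_\QQ[f]$, and $\E_\PP[\bar f] = \Delta$. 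The key point is that plugging $\bar f$ directly into the advantage ratio only yields $\Delta/\sqrt{\Var_\QQ[f]}$, which under weak separation is merely $\Omega(1)$ and need not exceed $1$; instead I optimize over affine shifts. For a parameter $t > 0$ let $g_t \coloneqq 1 + t\bar f \in \RR[Y]_{\le D}$. Then $\E_\QQ[g_t^2] = 1 + t^2\Var_\QQ[f]$ and $\E_\PP[g_t] = 1 + t\Delta$, hence
\[ \Adv_{\le D}(\PP,\QQ) \;\ge\; \sup_{t>0}\ \frac{1+t\Delta}{\sqrt{1+t^2\Var_\QQ[f]}} \;=\; \sqrt{1 + \frac{\Delta^2}{\Var_\QQ[f]}}, \]
where the last equality is a one-line calculus optimization (maximizer $t = \Delta/\Var_\QQ[f]$ when $\Var_\QQ[f] > 0$; letting $t \to \infty$ makes the ratio diverge when $\Var_\QQ[f] = 0$).

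It then remains to read off the conclusion. If $f$ strongly separates, $\Var_\QQ[f] \le \max\{\Var_\QQ[f],\Var_\PP[f]\} = o(\Delta^2)$, so $\Delta^2/\Var_\QQ[f] \to \infty$ and $\Adv_{\le D}(\PP,\QQ) = \omega(1)$, contradicting $\Adv_{\le D} = O(1)$. If $f$ only weakly separates, $\Var_\QQ[f] \le C^2\Delta^2$ for a constant $C$ (for all large $n$), so $\Adv_{\le D}(\PP,\QQ) \ge \sqrt{1 + C^{-2}} = 1+\Omega(1)$, contradicting $\Adv_{\le D} = 1+o(1)$. I do not expect a substantive obstacle: the lemma is a routine manipulation, and the only genuinely conceptual step is using the optimized affine combination $1 + t\bar f$ rather than the naive centering $\bar f$. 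The remaining points needing only minor care are the degenerate case $\Var_\QQ[f] = 0$ (then either $\Delta = 0$, so $f$ does not separate at all, or $\Delta > 0$ and the family $g_t$ already forces $\Adv_{\le D} = \infty$) and the standard bookkeeping about whether the $O(\cdot)$ and $o(\cdot)$ in the separation definitions are read along the full sequence or a subsequence, which does not affect the contradiction either way.
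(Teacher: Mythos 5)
Your proof is correct and follows essentially the same route as the paper: argue the contrapositive, center the separating polynomial, add a constant, and plug the resulting affine combination into the advantage ratio. The only difference is cosmetic—you optimize the shift parameter $t$ to get the exact value $\sqrt{1+\Delta^2/\Var_\QQ[f]}$, whereas the paper fixes the shift at the variance bound $C$ and obtains the (sufficient) constant $\sqrt{1+1/C}>1$.
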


\noindent It is always the case that $\Adv_{\le D} \ge 1$, by taking $f = 1$.

\begin{proof}
Assume for the sake of contradiction that some degree-$D$ polynomial $g: \RR^N \to \RR$ weakly separates $\PP$ and $\QQ$. By shifting and scaling, we can assume without loss of generality that $\EE_\QQ[g] = 0$ and $\EE_\PP[g] = 1$. For sufficiently large $n$, weak separation guarantees $\Var_\QQ[g] = \EE_\QQ[g^2] \le C$ for some constant $C > 0$. Define $f = g + C$ and compute
\[ \frac{\EE_\PP[f]}{\sqrt{\EE_\QQ[f^2]}} = \frac{1+C}{\sqrt{\EE_\QQ[g^2]+C^2}} \ge \frac{1+C}{\sqrt{C+C^2}} = \sqrt{\frac{1+C}{C}}, \]
which is a constant strictly greater than $1$, contradicting $\Adv_{\le D} = 1+o(1)$. The proof for strong separation is similar, now with $C = o(1)$.
\end{proof}

A key ingredient in the proof will be an upper bound on $\Adv_{\le D}$ in the following generic setting (of which our problem is a special case). Suppose $\QQ$ takes the form $Y = X \vee Z$ where $X, Z \in \{\pm 1\}^N$ with ``noise'' $Z$ i.i.d.\ Rademacher and ``signal'' $X$ having an arbitrary distribution (independent from $Z$), and $\vee$ denotes entrywise maximum. (In our case $N = \binom{n}{2}$ and $X$ is the $\pm 1$-valued indicator for clique edges.)

\begin{proposition}\label{prop:adv-bound}
Suppose $\QQ$ takes the form $Y = X \vee Z$ as described above and $\PP$ is any distribution on $\{\pm 1\}^N$. For $\alpha,\beta \subseteq [N]$, define
\[ c_\alpha = \EE_{Y \sim \PP}[Y^\alpha] \coloneqq \EE_{Y \sim \PP} \prod_{i \in \alpha} Y_i \]
and
\[ M_{\beta\alpha} = \Pr_X(\alpha \setminus X = \beta). \]
Here and throughout, we abuse notation and use $X$ to refer to the set $\{i \in [N] \,:\, X_i = 1\}$.

Suppose $M_{\alpha\alpha} > 0$ for all $|\alpha| \le D$. Then
\begin{equation}\label{eq:adv-bound}
\Adv_{\le D}^2 \le \sum_{\alpha \subseteq [N],\, |\alpha| \le D} w_\alpha^2
\end{equation}
where $w_\alpha$ is defined recursively by
\[ w_\alpha = \frac{1}{M_{\alpha\alpha}}\left(c_\alpha - \sum_{\beta \subsetneq \alpha} w_\beta M_{\beta\alpha}\right). \]
No explicit base case is needed for the recursion above, but one can think of $w_\emptyset = 1$ as the base case.
\end{proposition}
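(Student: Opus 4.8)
The plan is to prove the bound on $\Adv_{\le D}$ by expressing the supremum in a basis adapted to the structure $Y = X \vee Z$ and recognizing the recursion for $w_\alpha$ as a Gram--Schmidt-type orthogonalization with respect to $\QQ$. First I would recall that $\Adv_{\le D}^2 = \|L^{\le D}\|_{\QQ}^2$ where $L$ is the likelihood ratio $d\PP/d\QQ$ and $L^{\le D}$ is its projection onto degree-$\le D$ polynomials in $L^2(\QQ)$; equivalently, writing $f = \sum_{|\alpha| \le D} \hat f_\alpha \phi_\alpha$ in \emph{any} orthonormal basis $\{\phi_\alpha\}$ of the degree-$\le D$ polynomials under $\QQ$, we have $\Adv_{\le D}^2 = \sum_\alpha \EE_\PP[\phi_\alpha]^2$ by Cauchy--Schwarz (this is the standard computation; cf.\ \cite{hopkins-thesis,ld-notes}). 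So the whole task reduces to: (a) produce a convenient (not necessarily orthonormal, but triangular) basis for $L^2(\QQ)_{\le D}$, and (b) compute $\EE_\PP$ of the basis elements and bound the resulting sum by $\sum_\alpha w_\alpha^2$.

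The key structural observation is that under $Y = X \vee Z$ with $Z$ i.i.d.\ Rademacher, the monomials $\{Y^\alpha\}$ satisfy a clean conditional formula: conditioning on $X$, the coordinates $Y_i$ with $i \in X$ are deterministically $+1$, and the rest are independent Rademacher, so $\EE_\QQ[Y^\alpha \mid X] = \One[\alpha \setminus X = \emptyset]$, and more generally $\EE_\QQ[Y^\alpha Y^\gamma \mid X] = \One[(\alpha \triangle \gamma) \setminus X = \emptyset]$. This means $\EE_\QQ[Y^\alpha Y^\gamma] = \Pr_X((\alpha \triangle \gamma) \subseteq X)$, which is lower-triangular-friendly: $\EE_\QQ[Y^\alpha Y^\gamma] = 0$ unless $\alpha \triangle \gamma \subseteq X$ with positive probability, and in particular the Gram matrix restricted to subsets of a fixed $\alpha$ has a natural triangular structure once we order by inclusion. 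The matrix $M_{\beta\alpha} = \Pr_X(\alpha \setminus X = \beta)$ is exactly the object that tracks ``which part of $\alpha$ survives after deleting the clique-forced coordinates $X$.'' I would define polynomials $\psi_\alpha$ by the triangular system $\EE_\QQ[\psi_\alpha Y^\gamma] = \One[\gamma = \alpha] \cdot M_{\alpha\alpha}$ (or some normalization thereof) for $\gamma \subseteq \alpha$, show this system is solvable because $M_{\alpha\alpha} > 0$, and identify the coefficients/expectations with the recursion: writing $f = \sum w_\alpha$-weighted combination, the condition that $f$ matches the $\PP$-moments $c_\alpha$ forces precisely $c_\alpha = \sum_{\beta \subseteq \alpha} w_\beta M_{\beta\alpha}$, which rearranges to the stated recursion for $w_\alpha$. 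Then $\Adv_{\le D}^2 = \sum_\alpha w_\alpha^2 M_{\alpha\alpha} \le \sum_\alpha w_\alpha^2$ using $M_{\alpha\alpha} \le 1$ (with a bit of care about the exact normalization so the final constant is $\le 1$ rather than something larger).

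Concretely the steps are: (1) state the variational identity $\Adv_{\le D}^2 = \sup_f \EE_\PP[f]^2 / \EE_\QQ[f^2]$ and reduce to finding the projection of $L$ onto $\RR[Y]_{\le D}$ in $L^2(\QQ)$; (2) compute the $\QQ$-Gram matrix of monomials via the $Y = X \vee Z$ conditioning, getting $\langle Y^\alpha, Y^\gamma\rangle_\QQ = \Pr_X((\alpha \triangle \gamma) \subseteq X)$; (3) set up the triangular change of basis indexed by the ``survival'' map $\beta = \alpha \setminus X$, with transfer matrix $M$; (4) solve the triangular linear system defining the optimal $f$, obtaining the recursion for $w_\alpha$ and the identity $\Adv_{\le D}^2 = \sum_{|\alpha|\le D} w_\alpha^2 M_{\alpha\alpha}$ (or directly an inequality, if one only projects onto the span of monomials $Y^\alpha$ with $|\alpha|\le D$ restricted appropriately); (5) bound $M_{\alpha\alpha} \le 1$ to conclude~\eqref{eq:adv-bound}.

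The main obstacle I anticipate is step (3)--(4): making the change of basis precise. The monomials $\{Y^\alpha : |\alpha|\le D\}$ are not orthogonal under $\QQ$ and the Gram matrix does not factor over coordinates (since $X$ correlates the coordinates), so one cannot just use a product Fourier basis as in the i.i.d.\ case. The resolution is to \emph{not} diagonalize fully but only exploit the partial order: $\langle Y^\alpha, Y^\gamma \rangle_\QQ$ depends only on $\alpha \triangle \gamma$, and for the purpose of computing $\EE_\PP[f]$ for the optimal $f$ supported on $\{Y^\beta : \beta \subseteq \alpha^*\}$ for each ``target'' $\alpha^*$, the relevant sub-Gram-matrix is governed by $M$. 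One must be careful that the $w_\alpha$ recursion as written is the correct inversion of the triangular system and that summing contributions across different $\alpha$ does not double-count --- this is where the precise claim ``it suffices to control $w_\alpha$'' gets its teeth, and where I'd expect to spend most of the effort verifying that the recursion indeed yields an \emph{upper} bound (an equality would require also checking degree truncation interacts correctly, whereas an inequality is cleaner and all that is needed). A secondary technical point is confirming the base case $w_\emptyset = 1$ is consistent with $c_\emptyset = 1$ and $M_{\emptyset\emptyset} = 1$, which is immediate.
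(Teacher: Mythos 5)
Your setup is fine through the variational identity and the computation $\EE_\QQ[Y^\alpha Y^\gamma] = \Pr_X\bigl((\alpha \triangle \gamma) \subseteq X\bigr)$, but the heart of the argument is missing, and the central identity you assert is false. The $\QQ$-Gram matrix $G_{\alpha\gamma} = \Pr_X(\alpha\triangle\gamma \subseteq X)$ is \emph{not} triangular with respect to inclusion (it is generically nonzero for incomparable $\alpha,\gamma$), and it is not the matrix governed by $M$: one has $(M^\top M)_{\alpha\gamma} = \Pr_{X,X'}(\alpha\setminus X = \gamma\setminus X')$ for independent copies $X,X'$, which differs from $G_{\alpha\gamma}$. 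Consequently, a genuine Gram--Schmidt in $L^2(\QQ)$ produces coefficients built from minors of $G$, not the stated $w_\alpha$, and your claimed equality $\Adv_{\le D}^2 = \sum_\alpha w_\alpha^2 M_{\alpha\alpha}$ is wrong. Already for $N=1$ with $\Pr(1\in X)=p$ and $c_{\{1\}} = \EE_\PP[Y_1]$: the exact value is $\Adv_{\le 1}^2 = 1 + (c_{\{1\}}-p)^2/(1-p^2)$, whereas $\sum_\alpha w_\alpha^2 M_{\alpha\alpha} = 1 + (c_{\{1\}}-p)^2/(1-p)$. Your $\psi_\alpha$, defined to be orthogonal only to monomials $Y^\gamma$ with $\gamma\subseteq\alpha$, are not mutually orthogonal, so summing $\EE_\PP[\psi_\alpha]^2$ does not compute the advantage --- this is exactly the ``double counting'' issue you flag, and you do not supply the idea that resolves it.

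The missing ingredient, which is the paper's key step, is a Jensen inequality over $X$ conditional on $Z$: $\EE_\QQ[f^2] = \EE_Z\,\EE_X\bigl[f(X\vee Z)^2\bigr] \ge \EE_Z\bigl(\EE_X f(X\vee Z)\bigr)^2 = \|\hat g\|^2$, where $g(Z) \coloneqq \EE_X f(X\vee Z)$ and the last equality is Parseval for the i.i.d.\ Rademacher $Z$. The matrix $M$ enters through the coefficient map $\hat f \mapsto \hat g$: since $\EE_X (X\vee Z)^\alpha = \sum_{\beta\subseteq\alpha} Z^\beta \Pr_X(\alpha\setminus X=\beta)$, we get $\hat g = M\hat f$, and $M$ is upper triangular (because $\alpha\setminus X\subseteq\alpha$) with positive diagonal, hence invertible. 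Then $\Adv_{\le D} \le \sup_{\hat f} \langle c,\hat f\rangle/\|M\hat f\| = \|c^\top M^{-1}\| = \|w\|$, and solving the triangular system $w^\top M = c^\top$ yields exactly the stated recursion. Equivalently, your route could be repaired by proving that $G - M^\top M$ is positive semidefinite and then bounding $c^\top G^{-1} c \le c^\top (M^\top M)^{-1} c = \|w\|^2$; but that comparison \emph{is} the Jensen step, so without it your argument does not reach the stated bound, and the inequality in the proposition does not come from $M_{\alpha\alpha}\le 1$ as you suggest.
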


We pause to give some remarks on the origin of the above formula. The proof (given in Section~\ref{sec:pf-adv-bound}) follows a strategy based on~\cite{SW-recovery}: apply Jensen's inequality to $X$ (but not $Z$) and then the result can be explicitly calculated by solving an upper-triangular linear system. The original work~\cite{SW-recovery} gave a similar formula in the setting of \emph{estimation}, and more recently~\cite{planted-planted} was first to demonstrate that related techniques can also be used for testing between two ``planted'' distributions (which is also the setting of the current work). In contrast, previous low-degree lower bounds for testing problems had always required the ``null'' distribution $\QQ$ to have independent coordinates; see the remark below for comparison.

\begin{remark}\label{rem:gen}
We note that Proposition~\ref{prop:adv-bound} generalizes a well known formula for low-degree testing between ``signal'' and ``pure noise.'' Specifically, consider the case where $X = -\One$ so that $\QQ$ is i.i.d.\ Rademacher, and $\PP$ is any distribution on $\{\pm 1\}^N$. In this case $M_{\beta\alpha} = \One_{\beta=\alpha}$ and so Proposition~\ref{prop:adv-bound} reduces to the bound
\[ \Adv_{\le D}^2 \le \sum_{|\alpha| \le D} \left(\EE_{Y \sim \PP}[Y^\alpha]\right)^2, \quad Y^\alpha \coloneqq \prod_{i \in \alpha} Y_i,\]
which is standard (and in fact holds with equality); see Section~2.3 of~\cite{hopkins-thesis}.
\end{remark}

Returning to the proof, a more convenient parametrization for $w_\alpha$ will be $\hat{w}_\alpha = M_{\alpha\alpha} w_\alpha$. In this case, since $M_{\emptyset \alpha} = \E_\QQ[Y^\alpha]$, the recurrence can be written as
\[ \hat{w}_\emptyset = 1, \]
\begin{equation}\label{eq:w-hat-recur}
\hat{w}_\alpha \;=\; c_\alpha - \sum_{\beta \subsetneq \alpha} \hat{w}_\beta \frac{M_{\beta\alpha}}{M_{\beta\beta}} \;=\; \EE_\PP[Y^\alpha] - \EE_\QQ[Y^\alpha] - \sum_{\emptyset \subsetneq \beta \subsetneq \alpha} \hat{w}_\beta \frac{M_{\beta\alpha}}{M_{\beta\beta}} \qquad \text{for } |\alpha| \ge 1.
\end{equation}
The ratio of $M$'s can be thought of as a conditional probability:
\begin{equation}\label{eq:R}
R_{\beta\alpha} \coloneqq \frac{M_{\beta\alpha}}{M_{\beta\beta}} = \frac{\Pr_X(\alpha \setminus X = \beta)}{\Pr_X(\beta \cap X = \emptyset)} = \Pr_X(\alpha \setminus X = \beta \;|\; \beta \cap X = \emptyset).
\end{equation}

From this point onward, we specialize to our testing problem of interest: $\PP = \MC(n,q)$ versus $\QQ = \MC(n,q+\ell)$. As discussed above, our goal is to show $\Adv_{\le D} = 1+o(1)$ by bounding the formula in~\eqref{eq:adv-bound}. The ``1'' comes from the $\alpha = \emptyset$ term, and we need to show that the rest of the sum is $o(1)$.

The following property of $\hat w$ will be key to the analysis; it is used crucially in the proof of Lemma~\ref{lem:w-bound}. Note that we can think of $\alpha$ as a subset of edges of the complete graph on $n$ vertices, and in this sense we can talk about $\alpha$ being connected or having connected components.

\begin{lemma}\label{lem:multiplicative}
If $\alpha$ has connected components $\alpha_1,\ldots,\alpha_t$ then $\hat w_\alpha = \prod_{i=1}^t \hat w_{\alpha_i}$.
\end{lemma}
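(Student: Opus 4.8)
The plan is to prove the identity by induction on the number of edges $|\alpha|$, using the recurrence~\eqref{eq:w-hat-recur} for $\hat{w}$ together with the product structure of $\PP = \MC(n,q)$ and $\QQ = \MC(n,q+\ell)$ across vertex-disjoint parts of the graph. The base cases are immediate: $\hat{w}_\emptyset = 1$ is the empty product, and if $\alpha$ is connected there is nothing to prove.

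First I would record the ``factorization over connected components'' facts that drive the argument. Write $\alpha_1,\dots,\alpha_t$ for the connected components of $\alpha$; they span pairwise vertex-disjoint vertex sets $V_1,\dots,V_t$. For any $\beta \subseteq \alpha$, put $\beta_i \coloneqq \beta \cap \alpha_i$, so that $\beta = \beta_1 \sqcup \cdots \sqcup \beta_t$ and (by this bijection) summing over $\beta \subseteq \alpha$ is the same as summing over tuples $(\beta_1,\dots,\beta_t)$ with $\beta_i \subseteq \alpha_i$. The facts are: (a) $c_\alpha = \EE_\PP[Y^\alpha] = \prod_{i=1}^t c_{\alpha_i}$, because in $\MC(n,q)$ the vertex labels are i.i.d.\ and the non-clique edge coins are independent, so $Y^{\alpha_i}$ depends only on the labels/coins attached to $V_i$, and these are mutually independent across $i$; (b) $M_{\beta\alpha} = \Pr_X(\alpha \setminus X = \beta) = \prod_i M_{\beta_i \alpha_i}$ and $M_{\beta\beta} = \prod_i M_{\beta_i \beta_i}$, because in $\QQ = \MC(n,q+\ell)$ whether an edge of $\alpha_i$ is a clique (forced) edge is determined by the labels of $V_i$ alone; hence by~\eqref{eq:R}, $R_{\beta\alpha} = \prod_i R_{\beta_i \alpha_i}$. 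I would also observe that every connected component of $\beta$ lies inside a single $\alpha_i$ (a connected edge set cannot straddle two components of $\alpha \supseteq \beta$), so the components of $\beta$ are exactly those of the $\beta_i$; combined with the induction hypothesis applied to $\beta$ (legitimate since $|\beta| < |\alpha|$ whenever $\beta \subsetneq \alpha$), this yields $\hat{w}_\beta = \prod_i \hat{w}_{\beta_i}$.

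With these in hand the inductive step for $t \ge 2$ is a bookkeeping computation. Starting from $\hat{w}_\alpha = c_\alpha - \sum_{\beta \subsetneq \alpha} \hat{w}_\beta R_{\beta\alpha}$, reindex the sum over $\beta \subsetneq \alpha$ as a sum over tuples $(\beta_1,\dots,\beta_t)$ with $\beta_i \subseteq \alpha_i$ and not all $\beta_i = \alpha_i$. Using (a), (b), and $\hat{w}_\beta = \prod_i \hat{w}_{\beta_i}$, the distributive law gives $\sum_{\beta \subsetneq \alpha} \hat{w}_\beta R_{\beta\alpha} = \prod_i \left( \sum_{\beta_i \subseteq \alpha_i} \hat{w}_{\beta_i} R_{\beta_i \alpha_i} \right) - \prod_i \hat{w}_{\alpha_i} R_{\alpha_i \alpha_i}$, where the subtracted term accounts for the omitted tuple $(\alpha_1,\dots,\alpha_t)$. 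Now $R_{\alpha_i \alpha_i} = 1$, and applying the recurrence~\eqref{eq:w-hat-recur} to the connected set $\alpha_i$ and rearranging shows $\sum_{\beta_i \subseteq \alpha_i} \hat{w}_{\beta_i} R_{\beta_i \alpha_i} = (c_{\alpha_i} - \hat{w}_{\alpha_i}) + \hat{w}_{\alpha_i} = c_{\alpha_i}$. Hence $\sum_{\beta \subsetneq \alpha} \hat{w}_\beta R_{\beta\alpha} = \prod_i c_{\alpha_i} - \prod_i \hat{w}_{\alpha_i} = c_\alpha - \prod_i \hat{w}_{\alpha_i}$, and substituting back gives $\hat{w}_\alpha = c_\alpha - \left( c_\alpha - \prod_i \hat{w}_{\alpha_i} \right) = \prod_i \hat{w}_{\alpha_i}$, completing the induction.

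The main obstacle is not a hard estimate but getting the bookkeeping exactly right: justifying the reindexing of $\sum_{\beta \subsetneq \alpha}$ as a product over components (which hinges on grouping the components of $\beta$ into the $\alpha_i$ and hence on the induction hypothesis), correctly isolating the single excluded tuple $(\alpha_1,\dots,\alpha_t)$, and invoking the identity $\sum_{\beta_i \subseteq \alpha_i} \hat{w}_{\beta_i} R_{\beta_i \alpha_i} = c_{\alpha_i}$, which is just the recurrence for $\alpha_i$ together with $R_{\alpha_i \alpha_i} = 1$. The factorization facts (a) and (b) are routine consequences of the i.i.d.\ vertex labeling in the $\MC$ models, but they should be stated carefully, since they are precisely where the specific structure of $\PP$ and $\QQ$ (rather than an arbitrary planted pair) enters.
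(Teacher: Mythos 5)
Your proposal is correct and takes essentially the same route as the paper's proof: induction on $|\alpha|$, factorization of $c_\alpha$ and $R_{\beta\alpha}$ across vertex-disjoint parts, the recurrence~\eqref{eq:w-hat-recur}, and the fact $R_{\alpha\alpha}=1$. The only difference is organizational --- you treat all $t$ components simultaneously via the identity $\sum_{\beta_i \subseteq \alpha_i} \hat w_{\beta_i} R_{\beta_i\alpha_i} = c_{\alpha_i}$, while the paper splits $\alpha$ into two vertex-disjoint blocks and inducts, which amounts to the same cancellation.
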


\begin{proof}
It suffices to prove the claim in the case where $\alpha$ is comprised of two non-empty disjoint edge sets $\alpha_1, \alpha_2$ with no vertices in common (i.e., each $\alpha_i$ is a union of connected components). Once we establish $\hat w_\alpha = \hat w_{\alpha_1} \hat w_{\alpha_2}$ in this case, the general statement follows by induction.

Note that due to independence across connected components, $c_\alpha = c_{\alpha_1} c_{\alpha_2}$. Any $\beta \subseteq \alpha$ can be uniquely decomposed as $\beta = \beta_1 \cup \beta_2$ with $\beta_1 \subseteq \alpha_1$ and $\beta_2 \subseteq \alpha_2$. Again by independence, $R_{\beta\alpha} = R_{\beta_1 \alpha_1} R_{\beta_2 \alpha_2}$. We will also need the fact $R_{\alpha\alpha} = 1$. We proceed by induction on $|\alpha|$. If either $\alpha_1$ or $\alpha_2$ is empty, the result follows immediately because $\hat{w}_\emptyset = 1$. Otherwise, assume by induction that $\hat w_\beta = \hat w_{\beta_1} \hat w_{\beta_2}$ for any $\beta \subsetneq \alpha$. We have
\begin{align*}
\hat w_\alpha &= c_\alpha - \sum_{\beta \subsetneq \alpha} \hat w_\beta R_{\beta\alpha} \\
&= c_{\alpha_1} c_{\alpha_2} - \sum_{\substack{\beta_1 \subsetneq \alpha_1 \\ \beta_2 \subsetneq \alpha_2}} \hat w_{\beta_1} \hat w_{\beta_2} R_{\beta_1 \alpha_1} R_{\beta_2 \alpha_2} - \sum_{\substack{\beta_1 \subsetneq \alpha_1 \\ (\beta_2 = \alpha_2)}} \hat w_{\beta_1} \hat w_{\alpha_2} R_{\beta_1 \alpha_1} R_{\alpha_2 \alpha_2} - \sum_{\substack{\beta_2 \subsetneq \alpha_2 \\ (\beta_1 = \alpha_1)}} \hat w_{\alpha_1} \hat w_{\beta_2} R_{\alpha_1 \alpha_1} R_{\beta_2 \alpha_2} \\
&= c_{\alpha_1} c_{\alpha_2} - \left(\sum_{\beta_1 \subsetneq \alpha_1} \hat w_{\beta_1} R_{\beta_1 \alpha_1}\right)\left(\sum_{\beta_2 \subsetneq \alpha_2} \hat w_{\beta_2} R_{\beta_2 \alpha_2}\right) - \hat w_{\alpha_2} \sum_{\beta_1 \subsetneq \alpha_1} \hat w_{\beta_1} R_{\beta_1 \alpha_1} - \hat w_{\alpha_1} \sum_{\beta_2 \subsetneq \alpha_2} \hat w_{\beta_2} R_{\beta_2 \alpha_2}.
\end{align*}
Using the recurrence~\eqref{eq:w-hat-recur}, this becomes
\[ \hat{w}_\alpha = c_{\alpha_1} c_{\alpha_2} - (c_{\alpha_1} - \hat{w}_{\alpha_1})(c_{\alpha_2} - \hat{w}_{\alpha_2}) - \hat w_{\alpha_2} (c_{\alpha_1} - \hat{w}_{\alpha_1}) - \hat w_{\alpha_1} (c_{\alpha_2} - \hat{w}_{\alpha_2}), \]
which simplifies to $\hat{w}_{\alpha_1} \hat{w}_{\alpha_2}$ as desired.
\end{proof}

\subsubsection{Bounding $\hat{w}_\alpha$}

In the remainder of the proof we need to bound the values $w_\alpha$ and plug this into~\eqref{eq:adv-bound}. Recall that when $\alpha$ is thought of as a graph, $|\alpha|$ is the number of edges. We also define $V(\alpha)$ to be the set of vertices of $\alpha$, i.e., the vertices $i \in [n]$ incident to at least one edge of $\alpha$.

\begin{lemma}\label{lem:M-diag}
For any $\alpha$ we have $M_{\alpha\alpha} \ge 1 - \frac{|\alpha|}{q+\ell}$.
\end{lemma}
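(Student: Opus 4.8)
The plan is to first unwind the definition of $M_{\alpha\alpha}$ in terms of the labeling in the model $\QQ = \MC(n,q+\ell)$, and then apply a union bound over the edges of $\alpha$. Recall that in the decomposition $Y = X \vee Z$ used for $\QQ$, the ``signal'' $X \in \{\pm 1\}^{\binom{n}{2}}$ is the indicator of clique edges: assigning each vertex $i \in [n]$ an independent uniform label $L_i \in [q+\ell]$, we have $X_e = +1$ for an edge $e = \{u,v\}$ exactly when $L_u = L_v$. Since $M_{\beta\alpha} = \Pr_X(\alpha \setminus X = \beta)$ (viewing $X$, $\alpha$, $\beta$ as subsets of edges), taking $\beta = \alpha$ gives
\[
M_{\alpha\alpha} = \Pr_X(\alpha \setminus X = \alpha) = \Pr_X(\alpha \cap X = \emptyset) = \Pr\bigl(\text{no edge of } \alpha \text{ is monochromatic under } L\bigr),
\]
i.e.\ $M_{\alpha\alpha}$ is precisely the probability that a uniformly random $(q+\ell)$-coloring of $V(\alpha)$ is a proper coloring of the graph $\alpha$.

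Next I would bound the complementary event by a union bound. For a fixed edge $e = \{u,v\}$, the labels $L_u, L_v$ are independent and uniform on $[q+\ell]$, so $\Pr(L_u = L_v) = 1/(q+\ell)$. Summing over the $|\alpha|$ edges of $\alpha$,
\[
\Pr\bigl(\exists\, e \in \alpha : e \text{ monochromatic}\bigr) \;\le\; \sum_{e \in \alpha} \Pr(e \text{ monochromatic}) \;=\; \frac{|\alpha|}{q+\ell},
\]
and hence $M_{\alpha\alpha} = 1 - \Pr(\exists e \in \alpha \text{ monochromatic}) \ge 1 - \frac{|\alpha|}{q+\ell}$, which is the claim.

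This argument is entirely elementary; there is no real obstacle. The only point that requires a moment of care is the translation in the first paragraph — correctly identifying that $\alpha \cap X = \emptyset$ is the proper-coloring event and that the per-edge monochromatic probability is exactly $1/(q+\ell)$ (independent of how the endpoints of $e$ overlap with other edges of $\alpha$, since we only union-bound and never need joint control). As a side benefit, this also shows $M_{\alpha\alpha} > 0$ whenever $|\alpha| < q+\ell$, which covers the hypothesis ``$M_{\alpha\alpha} > 0$ for all $|\alpha| \le D$'' of Proposition~\ref{prop:adv-bound} in the relevant parameter regime.
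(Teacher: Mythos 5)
Your proof is correct and is essentially the same argument as the paper's: identify $M_{\alpha\alpha}$ as the probability (under $\QQ$) that no edge of $\alpha$ is a clique edge, note each edge is a clique edge with probability $1/(q+\ell)$, and apply a union bound. You have simply spelled out the translation step in more detail.
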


\begin{proof}
Recall that $M_{\alpha\alpha}$ is the probability (under $\QQ$) that $\alpha$ contains no clique edges. The probability that any specific edge is a clique edge is $1/(q+\ell)$, so the result follows by a union bound.
\end{proof}

\begin{lemma}\label{lem:moment-diff}
If $|\alpha| \ge 1$ and $\alpha$ is connected then
\[ 0 \le \EE_\PP[Y^\alpha] - \EE_\QQ[Y^\alpha] \le \frac{\ell}{q^{|V(\alpha)|}} \, (|V(\alpha)|-1). \]
\end{lemma}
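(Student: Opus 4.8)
\textbf{Proof proposal for Lemma~\ref{lem:moment-diff}.}

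The plan is to first compute $\EE_{\MC(n,m)}[Y^\alpha]$ exactly for any connected $\alpha$ with $|\alpha| \ge 1$, and then specialize to $m = q$ and $m = q+\ell$. To do the computation, condition on the vertex labels. In $\MC(n,m)$, once labels are fixed, an edge $(i,j) \in \alpha$ contributes the deterministic value $+1$ if $i$ and $j$ share a label, and otherwise contributes an independent Rademacher variable (the ``remaining'' edges are i.i.d.\ fair coins). Hence $\EE[Y^\alpha \mid \text{labels}]$ is $1$ if every edge of $\alpha$ is monochromatic, and $0$ otherwise (the expectation of a product of independent, distinct Rademacher variables vanishes as soon as one of them appears). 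Since $\alpha$ is connected, ``every edge monochromatic'' is equivalent to all of $V(\alpha)$ receiving a common label, an event of probability $m^{-(|V(\alpha)|-1)}$ under the uniform labeling. Therefore
\[
\EE_{\MC(n,m)}[Y^\alpha] = m^{-(|V(\alpha)|-1)}.
\]

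With $v \coloneqq |V(\alpha)| \ge 2$ (an edge forces at least two vertices), this gives
\[
\EE_\PP[Y^\alpha] - \EE_\QQ[Y^\alpha] = \frac{1}{q^{\,v-1}} - \frac{1}{(q+\ell)^{\,v-1}}.
\]
Nonnegativity is immediate since $t \mapsto t^{-(v-1)}$ is decreasing on $(0,\infty)$. For the upper bound I would write the difference as $\int_q^{q+\ell} (v-1)\, t^{-v}\, dt$ and bound $t^{-v} \le q^{-v}$ throughout the interval of length $\ell$, obtaining $\le (v-1)\,\ell\, q^{-v} = \frac{\ell}{q^{|V(\alpha)|}}\,(|V(\alpha)|-1)$, exactly the claimed bound. (Equivalently, one can factor $a^{v-1} - b^{v-1}$ with $a = 1/q$, $b = 1/(q+\ell)$: the factor $a - b = \ell/(q(q+\ell)) \le \ell/q^2$ and each of the $v-1$ terms $a^i b^{v-2-i}$ is at most $q^{-(v-2)}$.)

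There is no real obstacle here; the only points requiring a line of care are (i) justifying that, conditioned on labels, the non-monochromatic edge variables are mutually independent Rademachers (immediate from the definition of $\MC$), and (ii) noting $v \ge 2$ so the exponent $v-1 \ge 1$ and the bound is meaningful. The connectivity hypothesis is used precisely to collapse ``all edges monochromatic'' into ``all vertices equi-labeled''; this is also what makes the exponent $|V(\alpha)|-1$ rather than something depending on the component structure, and is consistent with the multiplicative behavior recorded in Lemma~\ref{lem:multiplicative}.
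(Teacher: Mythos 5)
Your proposal is correct and follows essentially the same route as the paper: both compute $\EE_{\MC(n,m)}[Y^\alpha]=m^{-(|V(\alpha)|-1)}$ exactly using connectivity (all edges monochromatic iff all of $V(\alpha)$ shares a label) and then bound the difference by an elementary estimate. The only cosmetic difference is the final inequality — the paper applies Bernoulli's inequality $(1-x)^{m}\ge 1-mx$ to $\bigl(1-\tfrac{\ell}{q+\ell}\bigr)^{|V(\alpha)|-1}$, whereas you use the integral (or factoring) bound — and the two give the same conclusion.
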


\begin{proof}
Since $\alpha$ is connected, $\E_\PP[Y^\alpha]$ is the probability that all vertices of $\alpha$ are assigned the same label in $[q]$ (and similarly for $\E_\QQ[Y^\alpha]$), i.e.,
\begin{align*}
\EE_\PP[Y^\alpha] - \EE_\QQ[Y^\alpha] &= \left(\frac{1}{q}\right)^{|V(\alpha)|-1} - \left(\frac{1}{q+\ell}\right)^{|V(\alpha)|-1} \\
&= \left(\frac{1}{q}\right)^{|V(\alpha)|-1}\left[1 - \left(\frac{q}{q+\ell}\right)^{|V(\alpha)|-1}\right] \\
&= \left(\frac{1}{q}\right)^{|V(\alpha)|-1}\left[1 - \left(1 - \frac{\ell}{q+\ell}\right)^{|V(\alpha)|-1}\right] \\
&\le \left(\frac{1}{q}\right)^{|V(\alpha)|-1}\left[1 - \left(1 - \frac{\ell}{q+\ell} \, (|V(\alpha)|-1)\right)\right] \\
&= \left(\frac{1}{q}\right)^{|V(\alpha)|-1} \frac{\ell}{q+\ell} \, (|V(\alpha)|-1) \\
&\le \frac{\ell}{q^{|V(\alpha)|}} \, (|V(\alpha)|-1).
\end{align*}
\end{proof}

\begin{lemma}\label{lem:w-bound}
If $|\alpha| \ge 1$ then
\[ |\hat{w}_\alpha| \le \left(\frac{\sqrt \ell}{q}\right)^{|V(\alpha)|} (|\alpha|+1)^{|\alpha|}. \]
\end{lemma}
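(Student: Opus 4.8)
The plan is to prove Lemma~\ref{lem:w-bound} by strong induction on $|\alpha|$, using the recurrence~\eqref{eq:w-hat-recur} for $\hat w_\alpha$ together with the estimates in Lemmas~\ref{lem:M-diag}, \ref{lem:moment-diff}, and the multiplicative property Lemma~\ref{lem:multiplicative}. First I would reduce to the case that $\alpha$ is \emph{connected}: if $\alpha$ has connected components $\alpha_1,\dots,\alpha_t$, then by Lemma~\ref{lem:multiplicative} we have $|\hat w_\alpha| = \prod_i |\hat w_{\alpha_i}|$, and since $|V(\alpha)| = \sum_i |V(\alpha_i)|$ while $|\alpha| = \sum_i |\alpha_i|$, the bound for each component multiplies up to give (at least) the claimed bound for $\alpha$ — here one uses that $\prod_i (|\alpha_i|+1)^{|\alpha_i|} \le (|\alpha|+1)^{|\alpha|}$, which holds because each factor $|\alpha_i|+1 \le |\alpha|+1$ and the exponents sum to $|\alpha|$ (with a harmless check that a component which is a single edge, so $|\alpha_i|=1, |V(\alpha_i)|=2$, still satisfies $|\hat w_{\alpha_i}| \le (\sqrt\ell/q)^2 \cdot 2^1$, which follows from Lemma~\ref{lem:moment-diff} since $\hat w_{\alpha_i} = \EE_\PP[Y^{\alpha_i}] - \EE_\QQ[Y^{\alpha_i}]$ for a single edge).

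For the connected case I would argue by induction on $|\alpha|$. The base case $|\alpha|=1$ is exactly the single-edge computation above. For the inductive step, start from
\[ \hat w_\alpha = \left(\EE_\PP[Y^\alpha] - \EE_\QQ[Y^\alpha]\right) - \sum_{\emptyset \subsetneq \beta \subsetneq \alpha} \hat w_\beta R_{\beta\alpha}. \]
Bound the first term by Lemma~\ref{lem:moment-diff}: it is at most $(\ell/q^{|V(\alpha)|})(|V(\alpha)|-1) \le (\sqrt\ell/q)^{|V(\alpha)|}$ roughly (using $\ell \le q^{|V(\alpha)|-2} \cdot$ something — actually one just keeps $\ell/q^{|V(\alpha)|} \le (\sqrt\ell/q)^{|V(\alpha)|}/\ell^{|V(\alpha)|/2-1}$ and notes this is comfortably within budget since $|V(\alpha)|\ge 2$). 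For the sum, apply the inductive hypothesis to each $\hat w_\beta$ — but only to the connected components of $\beta$, via Lemma~\ref{lem:multiplicative} again, so that $|\hat w_\beta| \le (\sqrt\ell/q)^{|V(\beta)|}(|\beta|+1)^{|\beta|}$ where $|V(\beta)|$ counts vertices with at least one incident $\beta$-edge. The key point is that $R_{\beta\alpha} \le 1$ (it is a probability, by~\eqref{eq:R}), so each term contributes at most $(\sqrt\ell/q)^{|V(\beta)|}(|\beta|+1)^{|\beta|}$. Then crucially $|V(\beta)| \ge |V(\alpha)|$ is \emph{not} true in general, so one cannot simply factor out $(\sqrt\ell/q)^{|V(\alpha)|}$; instead I expect one needs $R_{\beta\alpha}$ to supply the missing powers of $\sqrt\ell/q$ when $\beta$ touches fewer vertices than $\alpha$.

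This is the crux and the main obstacle: controlling $R_{\beta\alpha} = \Pr_X(\alpha\setminus X = \beta \mid \beta\cap X = \emptyset)$ when $V(\beta) \subsetneq V(\alpha)$. For the event $\alpha\setminus X = \beta$ to hold, every edge of $\alpha\setminus\beta$ must be a clique edge of $X$; a clique edge on a new vertex of $V(\alpha)\setminus V(\beta)$ costs a factor $\approx 1/(q+\ell)$, and more precisely covering the $|V(\alpha)|-|V(\beta)|$ extra vertices by clique edges forces them into cliques, which should give $R_{\beta\alpha} \lesssim (C/q)^{|V(\alpha)|-|V(\beta)|}$ up to the combinatorial factor counting how the extra vertices attach (bounded by something like $(|\alpha|+1)^{|\alpha|-|\beta|}$). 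Substituting, each summand becomes at most $(\sqrt\ell/q)^{|V(\beta)|} (C/q)^{|V(\alpha)|-|V(\beta)|} (|\alpha|+1)^{|\alpha|}$, and since $\sqrt\ell \le \sqrt n \le \cdots$ one checks $(C/q)^{j} \le (\sqrt\ell/q)^{j}$ fails in general — so the honest route is to absorb the discrepancy into the polynomial factor $(|\alpha|+1)^{|\alpha|}$, whose slack (compared to the at most $2^{|\alpha|}$ subsets $\beta$, times the at most $(|\alpha|+1)^{|\alpha|-|\beta|+\cdots}$ ways the extra structure can arise) is exactly what makes the induction close. I would organize the bookkeeping so that the final inequality reads: $|\hat w_\alpha| \le (\sqrt\ell/q)^{|V(\alpha)|}\big[(\text{first-term contribution, }\le (|\alpha|+1)^{|\alpha|-1}\text{ worth}) + \sum_{\beta}(\text{inductive contribution})\big]$ and the bracket is $\le (|\alpha|+1)^{|\alpha|}$ by a counting argument over $\beta\subsetneq\alpha$; verifying that counting argument — matching the generous $(|\alpha|+1)^{|\alpha|}$ bound against both the number of $\beta$'s and the $R_{\beta\alpha}$-factors — is where the real work lies, and I would expect it to mirror the analogous step in~\cite{planted-planted}.
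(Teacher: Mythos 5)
Your overall architecture matches the paper's proof: induct on $|\alpha|$, use Lemma~\ref{lem:multiplicative} to reduce to connected $\alpha$, start from the recurrence~\eqref{eq:w-hat-recur}, control the first term with Lemma~\ref{lem:moment-diff}, and finish with a counting argument over $\beta \subsetneq \alpha$. The gap is exactly the step you yourself flag as the crux: you never prove a bound on $R_{\beta\alpha}$, and the hedged form you posit --- $R_{\beta\alpha} \lesssim (C/q)^{|V(\alpha)|-|V(\beta)|}$ ``up to a combinatorial factor counting how the extra vertices attach'' --- is both unproven and weaker than what is true. The correct statement needs no constant $C$ and no attachment-counting factor: for a fixed $\emptyset \subsetneq \beta \subsetneq \alpha$ with $\alpha$ connected, every vertex $i \in V(\alpha)\setminus V(\beta)$ is joined to $V(\beta)$ by a path of edges of $\alpha\setminus\beta$ (edges of $\beta$ lie inside $V(\beta)$, so the portion of any $\alpha$-path before it first hits $V(\beta)$ uses only $\alpha\setminus\beta$-edges); on the event $\alpha\setminus X=\beta$ all edges of $\alpha\setminus\beta$ are clique edges, so the label of $i$ is \emph{uniquely determined} by the labels on $V(\beta)$. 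Since labels are independent and uniform over $q+\ell$ values, \eqref{eq:R} gives $R_{\beta\alpha} \le (q+\ell)^{-(|V(\alpha)|-|V(\beta)|)} \le q^{-(|V(\alpha)|-|V(\beta)|)}$, with no extra factor: $\beta$ is fixed, so there is nothing to count, and the event pins each new vertex's label to a single value rather than merely ``to some clique.''

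With this clean bound, your worry that ``$(C/q)^j \le (\sqrt\ell/q)^j$ fails in general'' evaporates: here $C=1$ and $\ell\ge 1$, so $(\sqrt\ell/q)^{|V(\beta)|} q^{-(|V(\alpha)|-|V(\beta)|)} \le (\sqrt\ell/q)^{|V(\alpha)|}$ and the prefactor $(\sqrt\ell/q)^{|V(\alpha)|}$ factors out of every summand exactly. The remaining bracket is $|V(\alpha)|-1+\sum_{\emptyset\subsetneq\beta\subsetneq\alpha}(|\beta|+1)^{|\beta|} \le |V(\alpha)|-1+\sum_{m=1}^{|\alpha|-1}\binom{|\alpha|}{m}|\alpha|^m = |V(\alpha)|-1+(|\alpha|+1)^{|\alpha|}-1-|\alpha|^{|\alpha|} \le (|\alpha|+1)^{|\alpha|}$, i.e.\ the ``counting argument'' you defer is just the binomial theorem together with $|V(\alpha)|\le 2|\alpha|$; no slack-absorption scheme is needed. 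Note also that with your lossier form of the $R_{\beta\alpha}$ bound the induction would not close for the lemma as stated: the factors $C^{\Theta(|V(\alpha)|)}$ and $(|\alpha|+1)^{|\alpha|-|\beta|}$ compound through the recursion and force a strictly larger exponent than $|\alpha|$ (a weakened variant with exponent $C'|\alpha|$ might be salvageable and would still feed Proposition~\ref{prop:loglog}, but you have not carried out that bookkeeping either). So the missing estimate on $R_{\beta\alpha}$ is the load-bearing step, not a detail. Your reduction to connected $\alpha$ via Lemma~\ref{lem:multiplicative} and the single-edge base case are correct and agree with the paper.
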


\begin{proof}
Proceed by induction on $|\alpha|$. First consider the case where $\alpha$ is not connected. Write $\alpha$ as the union of two non-empty disjoint edge sets $\alpha_1, \alpha_2$ with no vertices in common. By Lemma~\ref{lem:multiplicative} and the induction hypothesis,
\begin{align*}
|\hat{w}_\alpha| &= |\hat{w}_{\alpha_1}| \cdot |\hat{w}_{\alpha_2}| \le \left(\frac{\sqrt \ell}{q}\right)^{|V(\alpha_1)|} (|\alpha_1|+1)^{|\alpha_1|} \cdot \left(\frac{\sqrt \ell}{q}\right)^{|V(\alpha_2)|} (|\alpha_2|+1)^{|\alpha_2|} \\
&\le \left(\frac{\sqrt \ell}{q}\right)^{|V(\alpha_1)|+|V(\alpha_2)|} (|\alpha_1|+|\alpha_2|+1)^{|\alpha_1|+|\alpha_2|} \\
&= \left(\frac{\sqrt \ell}{q}\right)^{|V(\alpha)|} (|\alpha|+1)^{|\alpha|}
\end{align*}
as desired.

Now consider the case where $\alpha$ is connected. Using~\eqref{eq:w-hat-recur} and Lemma~\ref{lem:moment-diff},
\[ |\hat{w}_\alpha| \le \left|\EE_\PP[Y^\alpha] - \EE_\QQ[Y^\alpha]\right| + \sum_{\emptyset \subsetneq \beta \subsetneq \alpha} |\hat{w}_\beta| \cdot |R_{\beta\alpha}|
\le \frac{\ell}{q^{|V(\alpha)|}} \, (|V(\alpha)|-1) + \sum_{\emptyset \subsetneq \beta \subsetneq \alpha} |\hat{w}_\beta| \cdot |R_{\beta\alpha}|. \]
Using the definition~\eqref{eq:R} and the connectivity of $\alpha$, we can deduce (for any $\emptyset \subsetneq \beta \subsetneq \alpha$)
\[ 0 \le R_{\beta\alpha} \le \left(\frac{1}{q+\ell}\right)^{|V(\alpha)| - |V(\beta)|} \le \left(\frac{1}{q}\right)^{|V(\alpha)| - |V(\beta)|}, \]
because once we condition on the labels in $V(\beta)$, each vertex in $V(\alpha) \setminus V(\beta)$ has at most one possible label that would allow the event $\alpha \setminus X = \beta$ to occur. (More formally, any vertex $i \in V(\alpha) \setminus V(\beta)$ is connected to some vertex $j \in V(\beta)$ by a path using edges from $\alpha \setminus \beta$. Since every edge on this path must be a clique edge in order for $\alpha \setminus X = \beta$ to occur, $i$ must have the same label as $j$.) Now using the above bounds and the induction hypothesis,
\begin{align*}
|\hat{w}_\alpha| &\le \frac{\ell}{q^{|V(\alpha)|}} \, (|V(\alpha)|-1) + \sum_{\emptyset \subsetneq \beta \subsetneq \alpha} \left(\frac{\sqrt\ell}{q}\right)^{|V(\beta)|}  (|\beta|+1)^{|\beta|} \cdot \left(\frac{1}{q}\right)^{|V(\alpha)| - |V(\beta)|} \\
&\le \left(\frac{\sqrt\ell}{q}\right)^{|V(\alpha)|} \left[|V(\alpha)|-1 + \sum_{\emptyset \subsetneq \beta \subsetneq \alpha} (|\beta|+1)^{|\beta|}\right] \qquad \text{since } |V(\alpha)| \ge 2 \text{ and } |V(\beta)| \le |V(\alpha)| \\
&= \left(\frac{\sqrt\ell}{q}\right)^{|V(\alpha)|} \left[|V(\alpha)|-1 + \sum_{m=1}^{|\alpha|-1} \binom{|\alpha|}{m} (m+1)^{m}\right] \\
&\le \left(\frac{\sqrt\ell}{q}\right)^{|V(\alpha)|} \left[|V(\alpha)|-1 + \sum_{m=1}^{|\alpha|-1} \binom{|\alpha|}{m} |\alpha|^{m}\right] \\
&= \left(\frac{\sqrt\ell}{q}\right)^{|V(\alpha)|} \left[|V(\alpha)|-1 + (|\alpha|+1)^{|\alpha|} - 1 - |\alpha|^{|\alpha|}\right] \qquad \text{by the Binomial theorem}\\
&\le \left(\frac{\sqrt\ell}{q}\right)^{|V(\alpha)|} (|\alpha|+1)^{|\alpha|},
\end{align*}
where the last step used $|V(\alpha)| \le 2|\alpha| \le |\alpha|^{|\alpha|} + 1$.
\end{proof}

\subsubsection{Putting it together}

The rest of the proof is similar to the low-degree analysis of planted clique; see Section~2.4 of~\cite{hopkins-thesis}.

\begin{proof}[Proof of Theorem~\ref{thm:testing-lower}]

For any $|\alpha| \le D$, we have from Lemma~\ref{lem:M-diag} that
\[ M_{\alpha\alpha} \ge 1 - \frac{|\alpha|}{q+\ell} \ge 1 - \frac{D}{q} = 1-o(1), \]
due to our assumptions on $q$ and $D$. Applying Proposition~\ref{prop:adv-bound},
\[ \Adv_{\le D}^2 \le \sum_{|\alpha| \le D} w_\alpha^2 = 1 + \sum_{1 \le |\alpha| \le D} \left(\frac{\hat w_\alpha}{M_{\alpha\alpha}}\right)^2 \le 1 + (1+o(1)) \sum_{1 \le |\alpha| \le D} \hat w_\alpha^2. \]
Since our goal (by Lemma~\ref{lem:adv-sep}) is to show $\Adv_{\le D} = 1+o(1)$, it remains to show
\[ \sum_{1 \le |\alpha| \le D} \hat w_\alpha^2 = o(1). \]
This follows from Proposition~\ref{prop:loglog} below, using the bound on $|\hat w_\alpha|$ from Lemma~\ref{lem:w-bound} together with the assumption $q^2 \ge \ell n^{1+\epsilon}$.
\end{proof}

\begin{lemma}\label{lem:count-graphs}
For integers $t \ge 2$ and $D \ge 1$, the number of graphs $\alpha \subseteq \binom{n}{2}$ such that $|\alpha| \le D$ and $|V(\alpha)| = t$, is at most $n^t \min\{2^{t^2}, t^{2D}\}$.
\end{lemma}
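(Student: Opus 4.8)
The plan is to count graphs $\alpha$ with exactly $t$ vertices and at most $D$ edges by first choosing the vertex set and then choosing the edge set on that vertex set. Choosing the set $V(\alpha) \subseteq [n]$ of size $t$ costs at most $\binom{n}{t} \le n^t$. Given the vertex set, every edge of $\alpha$ is one of the $\binom{t}{2} < t^2/2$ possible pairs, so I need to bound the number of ways to select the edges. There are two easy bounds to combine via the minimum. The first: the edge set is an arbitrary subset of the $\binom{t}{2}$ slots, so there are at most $2^{\binom{t}{2}} \le 2^{t^2}$ choices, giving the $n^t 2^{t^2}$ bound. The second: since $|\alpha| \le D$, each of the (at most $D$) edges can be specified by naming its two endpoints among the $t$ vertices, which costs at most $t^2$ per edge, hence at most $(t^2)^D = t^{2D}$ ways to list an ordered sequence of up to $D$ edges (and this over-counts the actual edge sets, so it is certainly an upper bound); this gives the $n^t t^{2D}$ bound.

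Putting these together, the number of such graphs is at most $n^t \min\{2^{t^2}, t^{2D}\}$, as claimed. A small care point: in the second bound one should really say there are at most $\sum_{j=0}^{D} (t^2)^j \le D \cdot t^{2D}$ ordered sequences, or alternatively pad to exactly $D$ edges with a ``null'' symbol so the count is $(t^2+1)^D$; either way this is crudely $\le t^{2D}$ once one notes we may assume $t \ge 2$ and absorb constants, but to be safe one can instead just bound the number of edge subsets of size $\le D$ directly by $\binom{\binom{t}{2}}{\le D} \le \binom{t^2}{D} \le t^{2D}$ when $D \le t^2$, and when $D > t^2$ the bound $2^{t^2} \le t^{2D}$ makes the minimum equal to $2^{t^2}$ anyway — so the two cases are consistent. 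This is entirely elementary; there is no real obstacle, just the minor bookkeeping of making sure the crude over-counts are genuinely $\le t^{2D}$ and $\le 2^{t^2}$ respectively and that the edge-cases ($D$ large vs.\ small, $t=2$) are handled by the $\min$.

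I expect the ``main obstacle'' here to be essentially nonexistent — the only thing to watch is not to write a bound like $(t^2)^D$ that is off by a polynomial factor and then claim it equals $t^{2D}$; the clean route is the subset-counting bound $\binom{t^2}{D}\le t^{2D}$ together with $2^{\binom t2}\le 2^{t^2}$, and to observe that whichever of $D$ vs.\ $t^2$ is larger, the stated minimum is dominated by the corresponding one of these two estimates.
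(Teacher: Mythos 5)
Your proposal is correct and takes essentially the same route as the paper: choose the $t$ vertices ($\binom{n}{t}\le n^t$ ways), then bound the number of edge sets either by $2^{\binom{t}{2}}\le 2^{t^2}$ or by encoding at most $D$ edges. For the bookkeeping point you flag, the paper's clean resolution is to pad over \emph{unordered} pairs, bounding the number of edge sets of size at most $D$ by $\bigl(\binom{t}{2}+1\bigr)^D\le (t^2)^D=t^{2D}$ since $\binom{t}{2}+1\le t^2$ for $t\ge 2$; your ordered-pair padding gives $(t^2+1)^D$, which is not quite $\le t^{2D}$, and the inequality $\binom{\binom{t}{2}}{\le D}\le\binom{t^2}{D}$ can fail when $D\ge\binom{t}{2}$, though in those regimes the $2^{t^2}$ term dominates the minimum, so your conclusion still stands.
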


\begin{proof}
The number of ways to choose $t$ vertices is $\binom{n}{t} \le n^t$. Once the vertices are chosen, we can upper-bound the total number of graphs with $\le D$ edges in two different ways: $2^{\binom{t}{2}} \le 2^{t^2}$ or $\left(\binom{t}{2}+1\right)^D \le (t^2)^D$.
\end{proof}

\begin{proposition}\label{prop:loglog}
Suppose there exist fixed constants $\delta > 0$ and $C > 0$ such that for $\alpha \subseteq \binom{n}{2}$ with $1 \le |\alpha| \le D$, we have a quantity $\phi_\alpha$ bounded by $|\phi_\alpha| \le n^{-\frac{1}{2}(1+\delta) \cdot |V(\alpha)|} (|\alpha|+1)^{C \cdot |\alpha|}$. If $D = D_n$ satisfies $D = o(\log n/\log \log n)^2$ then
\[ \sum_{1 \le |\alpha| \le D} \phi_\alpha^2 = o(1) \]
as $n \to \infty$.
\end{proposition}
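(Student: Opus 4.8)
The plan is to group the graphs $\alpha$ by their vertex count $t := |V(\alpha)|$, bound the number of graphs in each group with Lemma~\ref{lem:count-graphs}, and bound each summand using the hypothesis on $\phi_\alpha$. Since every vertex of $V(\alpha)$ is incident to an edge and $|\alpha| \le D$, we have $2 \le t \le 2|\alpha| \le 2D$; and since $|\alpha| \le \binom{t}{2}$, setting $m_t := \min\{D, \binom{t}{2}\}$, every $\alpha$ with $|V(\alpha)| = t$ and $1 \le |\alpha| \le D$ satisfies $(|\alpha|+1)^{2C|\alpha|} \le (m_t+1)^{2Cm_t}$, hence by hypothesis $\phi_\alpha^2 \le n^{-(1+\delta)t}(m_t+1)^{2Cm_t}$. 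Combining with Lemma~\ref{lem:count-graphs},
\[
\sum_{1 \le |\alpha| \le D} \phi_\alpha^2 \;\le\; \sum_{t=2}^{2D} n^t \min\{2^{t^2},\, t^{2D}\} \cdot n^{-(1+\delta)t} (m_t+1)^{2Cm_t} \;=\; \sum_{t=2}^{2D} n^{-\delta t}\,\Phi_t,
\]
where $\Phi_t := \min\{2^{t^2},\, t^{2D}\}\,(m_t+1)^{2Cm_t}$. It then suffices to prove $\log\Phi_t = o(t\log n)$ uniformly over $2 \le t \le 2D$: this makes each summand at most $n^{-\delta t/2}$ for $n$ large, so the sum is dominated by the geometric series $\sum_{t\ge 2} n^{-\delta t/2} = O(n^{-\delta}) = o(1)$.

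To establish $\log\Phi_t = o(t\log n)$, the first thing I would record is the elementary estimate $\sqrt D\log D = o(\log n)$ under the hypothesis $D = o(\log n/\log\log n)^2$: writing $D \le \epsilon_n (\log n/\log\log n)^2$ with $\epsilon_n \to 0$ gives $\sqrt D \le \sqrt{\epsilon_n}\,\log n/\log\log n$ and $\log D \le 2\log\log n$ for $n$ large, so $\sqrt D\log D \le 2\sqrt{\epsilon_n}\log n = o(\log n)$. Then I split into two cases. If $\binom{t}{2}\le D$ (so $t = O(\sqrt D)$ and $m_t = \binom{t}{2}\le t^2/2$), bound $\min\{2^{t^2},t^{2D}\}\le 2^{t^2}$ to obtain $\log\Phi_t = O(t^2\log t) = t\cdot O(t\log t) = t\cdot O(\sqrt D\log D) = t\cdot o(\log n)$. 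If instead $\binom{t}{2}>D$ (so $m_t = D$ and $t > \sqrt{2D}$), bound $\min\{2^{t^2},t^{2D}\}\le t^{2D}\le (2D)^{2D}$ to obtain $\log\Phi_t = O(D\log D) = \sqrt D\cdot O(\sqrt D\log D) \le t\cdot o(\log n)$, using $\sqrt D < t$. In both cases $\log\Phi_t = o(t\log n)$ uniformly, which finishes the argument (this last bookkeeping closely parallels the planted-clique low-degree analysis).

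The step I expect to be the main obstacle — and the place where the precise range of $D$ is forced — is the large-$t$ case, $t$ up to $\approx 2D$. There the naive count $2^{t^2}$ of labeled graphs on $t$ vertices can be as large as $2^{\Theta(D^2)}$, which $n^{-\delta t} = n^{-\Theta(D)}$ cannot absorb unless $D = O(\log n)$, far short of the claimed threshold. It is exactly for this reason that Lemma~\ref{lem:count-graphs} supplies the alternative bound $t^{2D} = 2^{O(D\log D)}$; balancing $2^{O(D\log D)}$ against $n^{-\delta\sqrt D}$ is precisely what pins down the requirement $\sqrt D\log D = o(\log n)$, equivalently $D = o(\log n/\log\log n)^2$. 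Everything else — the small-$t$ case and the elementary estimates — is routine.
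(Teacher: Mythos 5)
Your proposal is correct and follows essentially the same route as the paper: group the graphs by $t=|V(\alpha)|$, invoke Lemma~\ref{lem:count-graphs}, split at $\binom{t}{2}\lessgtr D$ (i.e.\ $t\approx\sqrt D$), and absorb the combinatorial factors into $n^{-\delta t}$ via the key estimate $\sqrt D\log D=o(\log n)$. The only difference is bookkeeping at the end --- you bound each summand uniformly by $n^{-\delta t/2}$ and sum a geometric series, whereas the paper bounds the first term of each sub-sum and applies a ratio test --- which is a cosmetic variation, not a different argument.
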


\begin{proof}

Using Lemma~\ref{lem:count-graphs} and the fact $|\alpha| \le \binom{|V(\alpha)|}{2} \le |V(\alpha)|^2$, 

\[ \sum_{1 \le |\alpha| \le D} \phi_\alpha^2 \le \sum_{2 \le t \le \sqrt{D}} n^t 2^{t^2} \cdot n^{-(1+\delta)t} (t^2+1)^{2C t^2} \;+\; \sum_{\sqrt{D} \le t \le 2D} n^t t^{2D} \cdot n^{-(1+\delta)t} (D+1)^{2CD}. \]

\noindent Consider the first sum on the right-hand side above. The initial term $t = 2$ is $O(n^2 \cdot n^{-2(1+\delta)}) = o(1)$, and the ratio between terms $t+1$ and $t$ is
\[ n^{-\delta} \cdot 2^{2t+1} \cdot ((t+1)^2+1)^{2C(2t+1)} \left(\frac{(t+1)^2+1}{t^2+1}\right)^{2Ct^2} \le t^{O(t)} n^{-\delta} \le \sqrt{D}^{O(\sqrt{D})} n^{-\delta} \le \frac{1}{2} \]
for sufficiently large $n$, using the assumption $D = o\left(\frac{\log n}{\log \log n}\right)^2$.
Now consider the second sum. The initial term $t = \left\lceil \sqrt{D} \right\rceil$ is at most
\[ n^{-\delta \sqrt{D}} (\sqrt{D}+1)^{2D} (D+1)^{2CD} \le n^{-\delta \sqrt{D}} (D+1)^{2(C+1)D} = o(1), \]
and the ratio between terms $t+1$ and $t$ is
\[ n^{-\delta} \cdot \left(\frac{t+1}{t}\right)^{2D} \le n^{-\delta} \left(1 + \frac{1}{\sqrt D}\right)^{2D} \le n^{-\delta} \cdot e^{O(\sqrt{D})} \le \frac{1}{2} \]
for sufficiently large $n$.
\end{proof}

\subsubsection{Proof of Proposition~\ref{prop:adv-bound}}
\label{sec:pf-adv-bound}

The proof is similar to the lower bound for planted clique in~\cite[Section~3.5]{SW-recovery}. We give the details here for convenience.

Any degree-$D$ polynomial $f: \{\pm 1\}^N \to \RR$ has a unique expansion $f(Y) = \sum_{\alpha \subseteq [N],\,|\alpha| \le D} \hat{f}_\alpha Y^\alpha$. Write
\[ \EE_\PP[f(Y)] = \sum_{|\alpha| \le D} \hat{f}_\alpha \EE_\PP[Y^\alpha] = \langle c,\hat f \rangle \]
where, recall, the vector $c = (c_\alpha)$ is defined by
\[ c_\alpha = \EE_\PP[Y^\alpha]. \]
By Jensen's inequality,
\[ \EE_\QQ[f(Y)^2] \ge \EE_Z \left(\EE_X f(X \vee Z)\right)^2 \eqqcolon \EE_Z g(Z)^2 = \|\hat{g}\|^2 \]
where
\begin{align*}
g(Z) &= \EE_X f(X \vee Z) \\
&= \sum_{|\alpha| \le D} \hat{f}_\alpha \EE_X (X \vee Z)^\alpha \\
&= \sum_{|\alpha| \le D} \hat{f}_\alpha \sum_{0 \subseteq \beta \subseteq \alpha} Z^\beta \Pr_X\{\alpha \setminus X = \beta\} \\
&= \sum_\beta Z^\beta \sum_{\alpha \supseteq \beta} \hat{f}_\alpha \Pr_X\{\alpha \setminus X = \beta\}.
\end{align*}
In other words, $\hat{g} = M \hat{f}$ where, recall, the matrix $M = (M_{\beta\alpha})$ is defined by
\[ M_{\beta\alpha} = \One_{\beta \subseteq \alpha} \Pr_X\{\alpha \setminus X = \beta\}. \]
Note that $M$ is upper triangular and (by assumption) has positive entries on the diagonal, so $M$ is invertible. We have now shown $\EE_\QQ[f]^2 \ge \|\hat g\|^2 = \|M\hat f\|^2$ and so
\[ \Adv_{\le D} = \sup_{f \in \RR[Y]_{\le D}} \frac{\EE_\PP[f]}{\sqrt{\EE_\QQ[f^2]}} \le \sup_{\hat f} \frac{\langle c,\hat f \rangle}{\|M \hat f\|} = \sup_{\hat g} \frac{c^\top M^{-1} \hat g}{\|\hat g\|}, \]
which has optimizer $\hat g = (c^\top M^{-1})^\top$, yielding
\[ \Adv_{\le D} \le \|c^\top M^{-1}\| \eqqcolon \|w\| \]
where $w$ is the solution to $w^\top M = c^\top$. Solving for $w$ using the upper-triangular structure of $M$ gives the recurrence
\begin{equation}\label{eq:w-recurrence}
w_\alpha = \frac{1}{M_{\alpha\alpha}} \left(c_\alpha - \sum_{\beta \subsetneq \alpha} w_\beta M_{\beta\alpha}\right),
\end{equation}
completing the proof.

\section{Refuting Colorability}\label{sec:ref-hardness}

\subsection{Upper Bound}

We restate the theorem for the reader's convenience.

\thmrefupper*

\begin{proof}
Let $A$ denote the $\{\pm 1\}$-valued adjacency matrix of the \emph{complement} graph, with $0$'s on the diagonal; if the graph is $q$-colorable then $A$ has value $1$ within each color class. For an integer $m \ge 1$ to be chosen later, consider the polynomial $f(X) = (n/q - 1)^{-2m}\, \Tr(A^{2m})$, which has degree $2m$ in the input variables $X \in \{\pm 1\}^{\binom{n}{2}}$.

First we let $X \in \mathcal{R}_q$ and aim to show $f(X) \ge 1$. Let $S \subseteq [n]$ be the largest color class, so $|S| \ge n/q$. Let $\One_S \in \{0,1\}^n$ denote the indicator vector for $S$. Letting $\lambda_{\max} = \lambda_1 \ge \lambda_2 \ge \cdots \ge \lambda_n$ denote the eigenvalues of $A$,
\[ \lambda_{\max} \ge \frac{\One_S^\top A \One_S}{\|\One_S\|^2} = \frac{|S|(|S|-1)}{|S|} = |S|-1 \ge \frac{n}{q}-1 \]
and
\[ \lambda_{\max}^{2m} \le \sum_{i=1}^n \lambda_i^{2m} = \Tr(A^{2m}). \]
Combining these yields $\Tr(A^{2m}) \ge (n/q - 1)^{2m}$ and so $f(X) \ge 1$.

It remains to show $\EE[f^2] = o(1)$ when $X \sim G(n,1/2)$. Let $Y$ be an $n \times n$ symmetric matrix where $\{Y_{ij} \,: i \le j\}$ are i.i.d.\ $\mathcal{N}(0,1)$. By direct expansion and comparison of Rademacher moments to Gaussian ones, $\EE[\Tr(A^{2m})^2] \le \EE[\Tr(Y^{2m})^2]$. Using $\|Y\|$ to denote the spectral norm of $Y$, the bound of~\cite[Lemma~2.2]{BvH} gives
\[ \EE[\Tr(Y^{2m})^2] \le \EE[n^2 \|Y\|^{4m}] \le n^2 (2\sqrt{n} + 2\sqrt{4m})^{4m}. \]
Putting it together,
\[ \EE[f^2] \le \left(\frac{n}{q}-1\right)^{-4m} n^2 (2\sqrt{n} + 2\sqrt{4m})^{4m} = n^2 \left(\frac{2q(\sqrt{n} + \sqrt{4m})}{n-q}\right)^{4m}, \]
which is $o(1)$ under the conditions of the theorem.
\end{proof}

\subsection{Lower Bound}
\label{sec:ref-lower}

We restate the theorem for the reader's convenience.

\thmreflower*

\noindent In light of Proposition~\ref{prop:strat-II}, our goal is to show $\Adv_{\le D}(\PP,\QQ) = 1+o(1)$ where $\QQ = G(n,1/2)$ (and $Y \sim \QQ$ is encoded by an element of $\{\pm 1\}^{\binom{n}{2}}$) and $\PP$ is the planted distribution defined in Definition~\ref{def:planting}. Our starting point is the well-known formula from Remark~\ref{rem:gen}:
\[ \Adv_{\le D}^2 = \sum_{|\alpha| \le D} \left(\EE_{Y \sim \PP}[Y^\alpha]\right)^2, \]
where $\alpha \subseteq \binom{n}{2}$. We identify $\alpha$ with the graph whose edge set is $\alpha$, and write $V(\alpha) \subseteq [n]$ for the vertex set, i.e., the vertices incident to at least one edge in $\alpha$. Our first step is to bound the coefficients $\lambda_\alpha \coloneqq \EE_{Y \sim \PP}[Y^\alpha]$.

\subsubsection{Bounding the coefficients}

\begin{lemma}[Bounding $\lambda_\al$]
\label{lem:lambda-bound}
For any graph $\al \subseteq \binom{n}{2}$ we have
\[ |\lambda_\alpha| \coloneqq \left|\EE_{Y \sim \PP}[Y^\alpha]\right| \le O(q^{-3/4})^{|V(\alpha)|} \]
where $O(\cdot)$ hides an absolute constant factor.
\end{lemma}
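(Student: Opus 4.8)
The plan is to exploit the two-layered structure of the planting in Definition~\ref{def:planting}. Each vertex $v$ carries a label $(a_v,b_v)\in[q]^2$; the $a$-labels control the planted independent sets and the $b$-labels the (near-)planted cliques, and the two coordinates are independent. So I would first condition on the $a$-labels. Writing $\chi\colon V(\alpha)\to[q]$ for the $a$-coloring, an edge $e=\{u,v\}\in\alpha$ has conditional mean $-1$ when $\chi(u)=\chi(v)$ (it is forced to be a non-edge) and, when $\chi(u)\ne\chi(v)$, conditional mean $\One[b_u=b_v]$, whose expectation over the independent $b$-labels is a power of $1/q$. Since the edges are conditionally independent given the labels, and a uniform random $q$-coloring makes a graph $H$ monochromatic with probability $q^{-\operatorname{rank}(H)}$ where $\operatorname{rank}(H)=|V(H)|-(\#\text{components of }H)$, this yields the exact identity
\[
\lambda_\alpha \;=\; \E_{\chi}\!\left[(-1)^{|F(\chi)|}\, q^{-\operatorname{rank}(\alpha\setminus F(\chi))}\right],
\qquad F(\chi)\coloneqq\{\,e\in\alpha:\ e\text{ is monochromatic under }\chi\,\},
\]
equivalently, grouping colorings by the induced partition $\pi$ of $V(\alpha)$ into color classes, $\lambda_\alpha=\sum_{\pi}\frac{q^{\underline{|\pi|}}}{q^{|V(\alpha)|}}(-1)^{|\alpha_\pi|}q^{-\operatorname{rank}(\alpha\setminus\alpha_\pi)}$, where $q^{\underline{|\pi|}}$ is a falling factorial and $\alpha_\pi$ is the set of within-block edges.

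Next I would reduce to connected $\alpha$. Because $\lambda_\alpha=\prod_i\lambda_{\alpha_i}$ over the connected components $\alpha_i$ of $\alpha$ (labels of disjoint vertex sets are independent) and the target bound $O(q^{-3/4})^{|V(\alpha)|}$ factors in exactly the same way, it suffices to bound $|\lambda_\alpha|$ for connected $\alpha$ on $t\coloneqq|V(\alpha)|$ vertices.

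The main estimate is a rank inequality followed by a counting bound. For connected $\alpha$ and any $\chi$ using $k$ distinct colors, I claim $\operatorname{rank}(\alpha\setminus F(\chi))\ge k-1$: contracting the color classes of $\chi$ keeps $\alpha$ connected and sends within-class edges to loops, so the crossing edges, contracted, form a connected multigraph on $k$ vertices; a spanning tree of that contraction lifts to $k-1$ crossing edges of $\alpha$ that form a forest (a lifted cycle would project to a nontrivial closed walk in a tree). Hence $q^{-\operatorname{rank}(\alpha\setminus F(\chi))}\le q^{1-k}$, and since there are $S(t,k)\,q^{\underline{k}}$ colorings with exactly $k$ colors (Stirling numbers of the second kind),
\[
|\lambda_\alpha|\;\le\; \sum_{k=1}^{t}\frac{S(t,k)\,q^{\underline{k}}}{q^{t}}\,q^{1-k}\;\le\; q^{1-t}\sum_{k=1}^{t}S(t,k)\;=\;B_t\,q^{1-t},
\]
the $t$-th Bell number times $q^{1-t}$. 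For $t\ge 4$ we have $q^{1-t}\le q^{-3t/4}$, so $|\lambda_\alpha|\le B_t\, q^{-3t/4}$, and $B_t\le t^{t}$ is swallowed by an $O(1)^{t}$ factor together with the leftover $q^{t/4-1}$ slack in the regime that matters (where $t=|V(\alpha)|\le 2|\alpha|\le 2D$ is $n^{o(1)}$ while $q\ge n^{2/3+\epsilon}$, so $B_t\ll q^{t/4}$), giving $|\lambda_\alpha|\le O(q^{-3/4})^{t}$. The cases $t\in\{2,3\}$ leave only $\alpha\in\{K_2,P_3,C_3\}$, and I would just evaluate the partition sum directly: $\lambda_{K_2}=-q^{-2}$, $\lambda_{P_3}=q^{-4}$, $\lambda_{C_3}=(5-6q)q^{-4}$, each well within $O(q^{-3/4})^{t}$; note these small cases genuinely use the cancellation between the all-singletons and all-one-block colorings, whose signs are opposite precisely because $K_2$ and $C_3$ have an odd number of edges.

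The step I expect to be the main obstacle is upgrading the crude estimate $|\lambda_\alpha|\le B_t q^{1-t}$ to the stated $O(1)^{t}q^{-3t/4}$ \emph{uniformly} in $\alpha$: the Bell-number factor is absorbed by the $q^{t/4}$ slack only when $|V(\alpha)|$ is small relative to $q$, which is all the application of Theorem~\ref{thm:ref-lower} needs, but not for arbitrarily large $\alpha$. A uniform statement would require a structure-sensitive refinement — bounding a partition's contribution by a power of $q^{-1}$ growing with how far the partition is from the all-one-block one, then summing via a geometric-series argument in the style of Proposition~\ref{prop:loglog}, together with the additional sign cancellation visible among the coarse partitions. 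Finally, I would note that the exponent $3/4$ is forced: for $\alpha$ a disjoint union of $4$-cycles, $\lambda_{C_4}=2q^{-3}+O(q^{-4})$, so $|\lambda_\alpha|=(2+o(1))^{|V(\alpha)|/4}q^{-3|V(\alpha)|/4}$ and the bound is tight up to the $O(1)^{|V(\alpha)|}$ factor.
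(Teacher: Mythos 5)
Your exact identity $\lambda_\alpha = \EE_{\chi}\bigl[(-1)^{|F(\chi)|} q^{-\mathrm{rank}(\alpha \setminus F(\chi))}\bigr]$, the reduction to connected $\alpha$, the rank inequality $\mathrm{rank}(\alpha\setminus F(\chi)) \ge k-1$, and the exact small cases are all correct (your values for $K_2$, $P_3$, $C_3$ match the paper's, which also handles $|V(\alpha)|\le 3$ by direct enumeration, using the same sign cancellation for the triangle). Moreover, after taking absolute values, the quantity you are bounding, $\EE_\chi\bigl[q^{-\mathrm{rank}(\alpha\setminus F(\chi))}\bigr]$, is precisely the quantity the paper bounds: the probability over the full labels $(a,b)$ that every edge of $\alpha$ is an independent-set edge or a clique edge. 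The genuine gap is the final counting step. Summing partition by partition costs a Bell-number factor $B_t$ with $t=|V(\alpha)|$, and $B_t = t^{(1-o(1))t}$ is not $O(1)^t$, so your bound $B_t\, q^{1-t}$ does not prove the lemma as stated: the statement asserts $|\lambda_\alpha| \le (Cq^{-3/4})^{|V(\alpha)|}$ for an absolute constant $C$, uniformly over all $\alpha \subseteq \binom{n}{2}$, and once $|V(\alpha)|$ grows past roughly $q^{1/4}$ the Bell factor overwhelms the $q^{t/4}$ slack. You flag this yourself; your fallback observation --- that $t^t q^{1-t}$ still suffices for Theorem~\ref{thm:ref-lower} because Proposition~\ref{prop:loglog} tolerates a $(|\alpha|+1)^{C|\alpha|}$ factor and $|V(\alpha)|\le 2|\alpha|$ --- is correct, but it salvages the theorem, not the lemma under review.

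The missing idea is to count admissible label assignments jointly rather than summing over partitions of the $a$-labels, and it is a small repair to your own argument: $\EE_\chi\bigl[q^{-\mathrm{rank}(\alpha\setminus F(\chi))}\bigr]$ equals the probability that every edge of $\alpha$ is $a$-monochromatic or $b$-monochromatic, and for connected $\alpha$ one explores $V(\alpha)$ along a spanning tree (BFS): the root's label is free ($q^2$ options), while each subsequent vertex is reached by an edge that must be monochromatic in $a$ or in $b$, leaving at most $2q$ admissible labels. Hence the probability is at most $(2q/q^2)^{|V(\alpha)|-1} = (2/q)^{|V(\alpha)|-1}$, which is exactly the paper's estimate and, for $|V(\alpha)|\ge 4$, gives $(2q^{-3/4})^{|V(\alpha)|}$; combined with your exact treatment of $|V(\alpha)|\in\{2,3\}$ this yields the stated bound with an absolute constant, uniformly in $\alpha$. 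Your $C_4$ computation showing the exponent $3/4$ is tight is a nice complement, consistent with the paper's remark that signed 4-cycles distinguish the planted distribution from $G(n,1/2)$ when $q \ll n^{2/3}$.
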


\begin{proof}
If $\alpha = \cup_i \alpha_i$ is the decomposition of $\alpha$ into connected components, we have $\lambda_\alpha = \prod_i \lambda_{\alpha_i}$ due to independence across components. It therefore suffices to prove the result in the case where $\alpha$ is connected.

Let $c: V(\alpha) \to [q] \times [q]$ denote the latent assignment of labels $(a,b)$ to vertices from the definition of $\PP$ (Definition~\ref{def:planting}). We have
\[ \lambda_\al = \EE_c \EE_{Y \sim \PP|c}[Y^{\al}] =
\sum_c \Pr[c] \cdot \E[Y^\al|c]. \]
Note that $\E[Y^\al|c] = 0$ unless every edge in $\alpha$ is either an independent set edge or clique edge in $c$, and in this case,
\[ \E[Y^\al|c] = (-1)^{\#\text{ ind-set edges}}. \]

As a result, one possible upper bound on $|\lambda_\alpha|$ is the probability over $c$ that every edge in $\alpha$ is either an ind-set edge or clique edge. We can bound this probability as follows. Recall we are assuming $\alpha$ is connected, and explore the vertices of $\alpha$ according to a breadth-first search. The first vertex's label is unconstrained. Each edge that leads to a new vertex must be an ind-set edge or clique edge, giving at most $2q$ possibilities for the new vertex's label. Since there are $q^2$ possible labels in total, we conclude
\begin{equation}\label{eq:generic-lambda}
|\lambda_\alpha| \le \left(\frac{2q}{q^2}\right)^{|V(\alpha)|-1} = \left(\frac{2}{q}\right)^{|V(\alpha)|-1}
\end{equation}
for any connected $\alpha$.

The bound~\eqref{eq:generic-lambda} implies the desired result $|\lambda_\alpha| \le O(q^{-3/4})^{|V(\alpha)|}$ provided $|V(\alpha)| \ge 4$, as in this case we have $|V(\alpha)|-1 \ge |V(\alpha)| - \frac{1}{4}|V(\alpha)| = \frac{3}{4}|V(\alpha)|$. For $|V(\alpha)| \le 3$ we will manually verify the result by checking all the possible graphs:
\begin{itemize}
    \item If $\alpha$ has no edges then $\lambda_\alpha = 1$.
    \item If $\alpha$ is a single edge, the cases to consider for $c$ are $\{(a,b),(a,b)\}$, $\{(a,b),(a,b')\}$, and $\{(a,b),(a',b)\}$ (where $a \ne a'$, $b \ne b'$). This gives
    \[ \lambda_\alpha = -\frac{1}{q^2} -\frac{1}{q}\left(1-\frac{1}{q}\right) + \frac{1}{q}\left(1-\frac{1}{q}\right) = -q^{-2}. \]
    \item If $\alpha$ is a length-2 path then conditioned on any label for the middle vertex, the two edges are independent. Reusing the calculation for the single edge, we have $\lambda_\alpha = (-q^{-2})^2 = q^{-4}$.
    \item If $\alpha$ is a triangle, we first claim that the only labelings $c$ that contribute to $\lambda_\alpha$ are those in which a label $(a,b)$ is repeated. This follows from the symmetry between $c$ and the reversed labeling $\overline{c}$ where each pair is reversed: $(a,b) \mapsto (b,a)$. If $c$ has no repeated labels, $c$ and $\overline{c}$ contribute the same term but with opposite signs, as every ind-set edge becomes a clique edge and vice versa. In light of this, the remaining cases to consider for $c$ are $\{(a,b),(a,b),(a,b)\}$, $\{(a,b),(a,b),(a,b')\}$, and $\{(a,b),(a,b),(a',b)\}$. This gives
    \[ \lambda_\alpha = -\frac{1}{q^4} - 3 \cdot \frac{1}{q^3}\left(1-\frac{1}{q}\right) - 3 \cdot \frac{1}{q^3}\left(1-\frac{1}{q}\right) = O(q^{-3}). \]
\end{itemize}
We have now verified $|\lambda_\alpha| \le O(q^{-3/4})^{|V(\alpha)|}$ for every connected $\alpha$. As discussed previously, this implies the result for all $\alpha$.
\end{proof}

\subsubsection{Putting it together}

We now combine the results from above in order to bound $\Adv_{\le D}$.

\begin{proof}[Proof of Theorem~\ref{thm:ref-lower}]
Due to our assumption $q \ge n^{2/3+\epsilon}$, Lemma~\ref{lem:lambda-bound} gives
\[ |\lambda_\alpha| \le O(n^{-\frac{3}{4}(\frac{2}{3}+\epsilon)})^{|V(\alpha)|} = O(n^{-\frac{1}{2} -\frac{3}{4}\epsilon})^{|V(\alpha)|} \le n^{-\frac{1}{2}(1+\epsilon) \cdot |V(\alpha)|} \]
for sufficiently large $n$. Using Proposition~\ref{prop:loglog}, we have for any $D = o(\log n/\log \log n)^2$,
\[ \Adv^2_{\le D} - 1 = \sum_{1 \le |\alpha| \le D} \lambda_\alpha^2 = o(1). \]
As discussed at the beginning of Section~\ref{sec:ref-lower}, this completes the proof.
\end{proof}

\section{Completeness of Quiet Planting}

In this section, we give a simple argument showing that the absence of a computationally quiet planted distribution implies the existence of a low-degree refutation algorithm, in high generality. Our proof is elementary and only needs a simple application of von Neumann's min-max principle. We restate the theorem for the reader's convenience.

\thmrefduality*

\begin{proof}
Let $\mathcal{P}$ denote the space of probability distributions on $\mathcal{R}$. Let $\mathcal{F}$ denote the space of degree-$D$ polynomials $f: \RR^N \to \RR$ such that $\EE_\QQ[f] = 0$ and $\EE_\QQ[f^2] \le 1$. Consider
\begin{equation}\label{eq:val1}
\mathrm{val}_n = \inf_{\PP \in \mathcal{P}} \sup_{f \in \mathcal{F}} \EE_\PP[f].
\end{equation}
By von Neumann's min-max principle (see below for discussion of the technical conditions required), the supremum and infimum can be exchanged:
\begin{equation}\label{eq:val2}
\mathrm{val}_n = \sup_{f \in \mathcal{F}} \inf_{\PP \in \mathcal{P}} \EE_\PP[f] = \sup_{f \in \mathcal{F}} \inf_{X \in \mathcal{R}} f(X).
\end{equation}
A degree-$D$ polynomial strongly (respectively, weakly) separates $\QQ$ and $\mathcal{R}$ if and only if the value of~\eqref{eq:val2} is $\omega(1)$ (resp., $\Omega(1)$). The negation of this statement is that $\mathrm{val}_n = O(1)$ (resp., $o(1)$) for an infinite subsequence of $n$, which from~\eqref{eq:val1} is equivalent to having $\PP_n$ defined on an infinite subsequence such that $\sup_{f \in \mathcal{F}} \EE_\PP[f] = O(1)$ (resp., $o(1)$). Now the result follows due to the identity $(\sup_{f \in \mathcal{F}} \EE_\PP[f])^2 + 1 = \Adv_{\le D}^2(\PP,\QQ)$; see Lemma~\ref{lem:adv-mean} below.

It remains to verify the technical conditions for the min-max principle. Formally we use the following variant, which is a special case of Sion's min-max theorem~\cite{sion-1,sion-2}.

\begin{theorem}
Let $\mathcal{P}$ be a compact convex subset of a linear topological space and $\mathcal{F}$ a convex subset of a linear topological space. If $\phi(x,y)$ is a continuous real-valued function on $\mathcal{P} \times \mathcal{F}$ with $\phi(x,\cdot)$ concave for all $x \in \mathcal{P}$, and $\phi(\cdot,y)$ convex for all $y \in \mathcal{F}$, then $\min_{x \in \mathcal{P}} \sup_{y \in \mathcal{F}} \phi(x,y) = \sup_{y \in \mathcal{F}} \min_{x \in \mathcal{P}} \phi(x,y)$.
\end{theorem}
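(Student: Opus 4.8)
The inequality $\sup_{y \in \mathcal{F}} \min_{x \in \mathcal{P}} \phi(x,y) \le \min_{x \in \mathcal{P}} \sup_{y \in \mathcal{F}} \phi(x,y)$ is weak duality and is immediate: for any $x_0, y_0$ we have $\min_x \phi(x,y_0) \le \phi(x_0,y_0) \le \sup_y \phi(x_0,y)$, and taking suprema over $y_0$ on the left and infima over $x_0$ on the right finishes it (the $\min$ over $x$ is attained since $\mathcal{P}$ is compact and $\phi(\cdot,y)$ is continuous). So the entire content is the reverse inequality, and I would prove it in two stages: (i) reduce to the case of finitely many $y$'s using compactness of $\mathcal{P}$; and (ii) handle the finite case by a separating-hyperplane argument in $\RR^m$ combined with concavity in $y$.

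\emph{Stage (i): reduction to finitely many $y$'s.} Set $v \coloneqq \sup_{y \in \mathcal{F}} \min_{x \in \mathcal{P}} \phi(x,y)$ and suppose for contradiction that $\min_x \sup_y \phi(x,y) \ge v + \epsilon$ for some $\epsilon > 0$. Then for every $x \in \mathcal{P}$ there is $y \in \mathcal{F}$ with $\phi(x,y) > v + \epsilon/2$, so the open sets $U_y \coloneqq \{x \in \mathcal{P} : \phi(x,y) > v + \epsilon/2\}$ (open because $\phi(\cdot,y)$ is continuous) cover the compact set $\mathcal{P}$. Pass to a finite subcover indexed by $y_1,\dots,y_m \in \mathcal{F}$; then $\max_i \phi(x,y_i) > v + \epsilon/2$ for all $x$, so $\min_x \max_i \phi(x,y_i) \ge v + \epsilon/2 > v$. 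Hence it suffices to show that $\min_{x \in \mathcal{P}} \max_{i} \phi(x,y_i) \le v$ for every finite set $\{y_1,\dots,y_m\} \subseteq \mathcal{F}$.

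\emph{Stage (ii): the finite case.} Put $g_i \coloneqq \phi(\cdot,y_i) \colon \mathcal{P} \to \RR$, each convex and continuous, and let $c \coloneqq \min_x \max_i g_i(x)$, which is attained. I would consider the set $K \coloneqq \{(g_1(x),\dots,g_m(x)) : x \in \mathcal{P}\} + \RR^m_{\ge 0} \subseteq \RR^m$, and check that it is convex (from convexity of the $g_i$ and of $\mathcal{P}$) and closed (sum of a compact image and a closed cone), and that $(c-\delta,\dots,c-\delta) \notin K$ for every $\delta > 0$ (else some $x$ would have $\max_i g_i(x) < c$). Strictly separating this point from $K$ gives a nonzero $\lambda \in \RR^m$; since $K$ is upward-closed, $\lambda$ is coordinatewise nonnegative, and after normalizing $\sum_i \lambda_i = 1$ one obtains $\sum_i \lambda_i g_i(x) \ge c - \delta$ for all $x$. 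Sending $\delta \to 0$ along a convergent subsequence of these simplex-valued $\lambda$'s yields a single $\lambda^\star$ in the simplex with $\sum_i \lambda_i^\star g_i(x) \ge c$ for all $x$. Then $\bar y \coloneqq \sum_i \lambda_i^\star y_i \in \mathcal{F}$ by convexity of $\mathcal{F}$, and concavity of $\phi(x,\cdot)$ gives $\phi(x,\bar y) \ge \sum_i \lambda_i^\star \phi(x,y_i) \ge c$ for every $x$; thus $v \ge \min_x \phi(x,\bar y) \ge c$, contradicting $c > v$ and completing the argument.

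\emph{Main obstacle.} The only genuinely nontrivial step is Stage (ii), and within it the separating-hyperplane argument together with the limiting step $\delta \to 0$ that extracts the convex combination $\lambda^\star$; this is where convexity of $\mathcal{P}$ (to make $K$ convex) and finiteness of the index set are used, while convexity of $\mathcal{F}$ enters only to form $\bar y$. I note that for the way the theorem is actually invoked in this paper one can shortcut all of this: there $\mathcal{P}$ is the simplex of distributions on the finite set $\mathcal{R}$, $\mathcal{F}$ is a finite-dimensional set of polynomials, and $\phi(\PP,f) = \EE_\PP[f]$ is bilinear, so the classical finite-dimensional von Neumann minimax theorem applies directly — but the argument above covers the general statement as worded.
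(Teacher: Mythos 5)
Your argument is correct, but it takes a different route from the paper: the paper does not prove this statement at all --- it simply invokes it as a special case of Sion's min--max theorem, citing the literature --- whereas you give a self-contained proof (weak duality; a compactness/finite-subcover reduction to finitely many $y$'s; then, in the finite case, separation of the point $(c-\delta,\dots,c-\delta)$ from the closed convex upward-closed set $K$, followed by a limit along simplex-valued multipliers and concavity in $y$ to form $\bar y$). Each step checks out: $K$ is convex because $\mathcal{P}$ is convex and each $g_i$ is convex, closed as a compact set plus a closed cone, and upward-closedness forces the separating functional into the nonnegative orthant; the extraction of $\lambda^\star$ by compactness of the simplex and the final use of concavity of $\phi(x,\cdot)$ are sound. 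What the two approaches buy: the citation is shorter and appeals to a strictly stronger theorem (Sion needs only quasi-convexity/quasi-concavity and semicontinuity, for which your separating-hyperplane step would not suffice), while your proof is elementary, uses only separate continuity of $\phi(\cdot,y)$ rather than joint continuity, and covers the statement exactly as worded --- and, as you note, the paper's actual application (bilinear $\phi(\PP,f)=\EE_\PP[f]$ over the simplex of distributions on a finite set) needs only the classical finite-dimensional von Neumann theorem. One cosmetic point: since the theorem writes $\min_{x}\sup_{y}$, you should remark that this outer minimum is attained because $x \mapsto \sup_y \phi(x,y)$ is lower semicontinuous on the compact set $\mathcal{P}$; your contradiction argument otherwise only addresses the value of the infimum.
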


\noindent In our setting, the linear topological spaces will simply be $\RR^d$ for some $d$. Recall that our choice of $\mathcal{P}$ is the space of probability distributions on a finite set $\mathcal{R} = \{r_1,r_2,\ldots,r_{|\mathcal{R}|}\}$. We can identify $\mathcal{P}$ with a compact convex subset of $\RR^{|\mathcal{R}|}$ by encoding a distribution $\PP$ as the vector of probabilities $(\PP(r_1),\ldots,\PP(r_{|\mathcal{R}|}))$. Recall that our choice of $\mathcal{F}$ is the space of degree-$D$ polynomials $f: \RR^N \to \RR$ such that $\EE_\QQ[f] = 0$ and $\EE_\QQ[f^2] \le 1$. Letting $\mathcal{X} = \supp(\QQ) \cup \mathcal{R} = \{x_1,\ldots,x_{|\mathcal{X}|}\}$, we can identify $\mathcal{F}$ with a convex subset of $\RR^{|\mathcal{X}|}$ (note that $\mathcal{F}$ is not required to be compact) by encoding a function $f: \RR^N \to \RR$ as the vector $(f(x_1),\ldots,f(x_{|\mathcal{X}|}))$. Finally, note that $\phi(\PP,f) \coloneqq \EE_\PP[f]$ is continuous, convex in $\PP$, and concave in $f$; in fact, it is linear in both variables. This justifies our earlier exchange of inf and sup, completing the proof.
\end{proof}

\begin{remark}\label{rem:compact}
Above we have assumed $\supp(\QQ)$ and $\mathcal{R}$ are finite to simplify the analytic conditions needed for the min-max principle, but these assumptions can be relaxed. For instance, one can alternatively assume that $\QQ_n$ is any distribution on $\RR^N$ with all moments finite and that $\mathcal{R}_n \subseteq \RR^N$ is compact. Since $\mathcal{R}$ is compact, the space $\mathcal{P}$ of probability distributions on $\mathcal{R}$ is compact in the weak-* topology.
\end{remark}

\begin{lemma}\label{lem:adv-mean}
$\sup_{f \in \mathcal{F}} \EE_\PP[f]^2 + 1 = \Adv_{\le D}^2(\PP,\QQ)$.
\end{lemma}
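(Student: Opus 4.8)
\textbf{Proof proposal for Lemma~\ref{lem:adv-mean}.}

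The plan is to reduce both sides to the same supremum over the same set of polynomials by a shift-and-scale argument, using the well-known variational characterization of $\Adv_{\le D}$. First I would recall the definition
\[ \Adv_{\le D}^2(\PP,\QQ) = \sup_{g \in \RR[Y]_{\le D}} \frac{\EE_\PP[g]^2}{\EE_\QQ[g^2]}, \]
where the supremum ranges over all degree-$D$ polynomials $g$ (excluding $g \equiv 0$). The key observation is that any such $g$ can be decomposed as $g = c \cdot \One + g_0$ where $c = \EE_\QQ[g]$ and $g_0 \coloneqq g - \EE_\QQ[g]$ satisfies $\EE_\QQ[g_0] = 0$, and moreover $\EE_\QQ[g^2] = c^2 + \EE_\QQ[g_0^2]$ by orthogonality of $g_0$ to constants under $\QQ$. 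So the problem becomes: maximize $(c + \EE_\PP[g_0])^2 / (c^2 + \EE_\QQ[g_0^2])$ over $c \in \RR$ and mean-zero $g_0$.

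Next I would carry out the inner optimization over $c$ for fixed $g_0$. Writing $m \coloneqq \EE_\PP[g_0]$ and $s^2 \coloneqq \EE_\QQ[g_0^2]$, we are maximizing $(c+m)^2/(c^2+s^2)$ over $c \in \RR$; elementary calculus (or Cauchy--Schwarz in the form $(c+m)^2 \le (c^2 + s^2)(1 + m^2/s^2)$, with equality at $c = s^2/m$) gives the maximum value $1 + m^2/s^2$. Therefore
\[ \Adv_{\le D}^2(\PP,\QQ) = \sup_{g_0 : \EE_\QQ[g_0] = 0} \left(1 + \frac{\EE_\PP[g_0]^2}{\EE_\QQ[g_0^2]}\right) = 1 + \sup_{g_0 : \EE_\QQ[g_0] = 0} \frac{\EE_\PP[g_0]^2}{\EE_\QQ[g_0^2]}. \]
Finally, the remaining supremum is scale-invariant in $g_0$, so we may normalize to $\EE_\QQ[g_0^2] \le 1$ (the optimum is attained on the boundary $\EE_\QQ[g_0^2] = 1$ unless $\EE_\PP[g_0] = 0$ for all such $g_0$, in which case both sides equal $0$); this is exactly the constraint defining $\mathcal{F}$. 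Hence $\sup_{g_0 : \EE_\QQ[g_0]=0} \EE_\PP[g_0]^2/\EE_\QQ[g_0^2] = \sup_{f \in \mathcal{F}} \EE_\PP[f]^2$, which yields the claimed identity. One minor edge case to handle is when $\EE_\QQ[g_0^2] = 0$: since $\QQ$ has finite support, this forces $g_0$ to vanish on $\supp(\QQ)$, but such $g_0$ contribute nothing new to $\EE_\PP$ beyond what bounded-variance polynomials achieve, or more simply one restricts attention to $g_0$ with $\EE_\QQ[g_0^2] > 0$ since the value $g_0 = 0$ gives ratio $0$ which is dominated.

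I do not anticipate a serious obstacle here; this is a routine computation. The only thing requiring mild care is making the two suprema range over genuinely the same set — in particular verifying that passing from ``mean-zero polynomials, arbitrary normalization'' to ``$\mathcal{F}$'' (mean-zero, second moment $\le 1$) does not change the sup, which follows from positive homogeneity of degree $2$ in the numerator and degree $2$ in the denominator of the ratio. One could also note that this lemma is essentially the standard fact that the squared norm of the low-degree likelihood ratio splits as $1$ (the constant component) plus the squared norm of its nonconstant part; the above is just an explicit rederivation tailored to the normalization conventions used for $\mathcal{F}$.
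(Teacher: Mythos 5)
Your proof is correct, but it takes a genuinely different route from the paper's. The paper proves the two inequalities separately by exhibiting explicit witnesses: given $f \in \mathcal{F}$ with $\EE_\PP[f] = a > 0$ it takes $g = f + 1/a$ to certify ``$\le$'', and given $g$ normalized to $\EE_\QQ[g^2] = 1$, $\EE_\PP[g] = b$ it centers and rescales, $f = (g - \Delta)/\sqrt{1-\Delta^2}$ with $\Delta = \EE_\QQ[g]$, to certify ``$\ge$'' (handling $\Delta = 1$ as the degenerate case where both sides blow up). You instead decompose every degree-$D$ polynomial as a $\QQ$-constant plus a $\QQ$-mean-zero part and optimize the constant coefficient exactly via Cauchy--Schwarz, obtaining the identity in one shot; this is essentially the ``norm of the low-degree likelihood ratio splits as $1$ plus the nonconstant part'' fact that the paper only invokes informally in its first paragraph, rederived variationally. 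Your version is arguably more transparent, since the equality falls out of a single computation rather than two matched inequalities, and it makes the role of the orthogonal decomposition explicit; the paper's version avoids any discussion of where the inner optimum is attained. Two minor points of care in your write-up: when $\EE_\PP[g_0] = 0$ the inner supremum over $c$ equals $1$ but is only approached as $c \to \infty$, which is harmless since you are computing a supremum anyway; and your treatment of $\EE_\QQ[g_0^2] = 0$ is slightly misstated --- if such a $g_0$ has $\EE_\PP[g_0] \ne 0$ then it does contribute something new, namely both sides of the identity become $+\infty$ (this is exactly the paper's $\Delta = 1$ case), which is still consistent with the claimed equality, so there is no gap, just a wording issue.
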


\begin{proof}
If the likelihood ratio $LR = d\PP/d\QQ$ exists, this fact follows from standard characterizations of these quantities as $L^2(\QQ)$-norms of projections of likelihoods (see Section~2.3 of~\cite{hopkins-thesis}); namely, the left-hand side is $\|LR^{\le D}-1\|_\QQ^2+1$ and the right-hand side is $\|LR^{\le D}\|_\QQ^2$. We also give a self-contained proof below.

Recalling the definition of $\Adv_{\le D}$, our goal is to show
\[ \sup_{f \in \mathcal{F}} \EE_\PP[f]^2 + 1 = \sup_{g \in \RR[Y]_{\le D}} \frac{\EE_\PP[g]^2}{\EE_\QQ[g^2]}. \]
Note that the value 1 is achievable on both sides by taking $f = 0$ or $g = 1$. To show ``$\le$,'' suppose we have $f \in \mathcal{F}$ such that $\EE_\PP[f] = a > 0$, achieving value $a^2+1$ on the left-hand side. Then $g = f + 1/a$ achieves the same value $a^2+1$ on the right-hand side.

To show ``$\ge$,'' suppose $g$ achieves value $b^2 > 1$ on the right-hand side, and scale $g$ so that $\EE_\QQ[g^2] = 1$ and $\EE_\PP[g] = b > 1$. Define $\Delta = \EE_\QQ[g]$ and note that $\EE_\QQ(g-\Delta)^2 = 1 - \Delta^2 \ge 0$ and $\EE_\PP(g-\Delta) = b - \Delta > 0$. If $\Delta = 1$ then the left-hand side is unbounded by taking $f$ to be an arbitrary multiple of $g-\Delta$. Otherwise set $f = (g-\Delta)/\sqrt{1-\Delta^2} \in \mathcal{F}$ and compute the left-hand side value
\[ \EE_\PP[f]^2 + 1 = \frac{(b-\Delta)^2}{1-\Delta^2} + 1 = b^2 + \frac{(b\Delta-1)^2}{1-\Delta^2} \ge b^2, \]
completing the proof.
\end{proof}

\appendix

\section{Planted $(q+1)$-coloring is not $q$-colorable}
\label{app:coloring}

Here we work with the complement graph and consider a partition into cliques rather than a coloring. Recall the multiple cliques model (Definition~\ref{def:MC}).

\begin{proposition}
If $1 \le q \le \Omega(n/\log n)$ then with probability $1-o(1)$, $\MC(n,q+1)$ does not admit a partition of the vertices into $q$ cliques.
\end{proposition}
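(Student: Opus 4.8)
The plan is to show that a sample $G \sim \MC(n,q+1)$ w.h.p.\ has $(q+1)$ disjoint cliques whose \emph{union of vertex sets} cannot be recolored/re-partitioned into only $q$ cliques. The natural obstruction is a \emph{clique-free independent-ish structure}: if we can exhibit $q+1$ vertices $v_1,\ldots,v_{q+1}$, one from each of the $q+1$ planted cliques, such that no two of them are adjacent in $G$, then these $q+1$ vertices form an independent set in $G$, and an independent set of size $q+1$ certainly cannot be covered by $q$ cliques (each clique in the partition can contain at most one vertex of an independent set). So the whole problem reduces to: w.h.p., the graph $\MC(n,q+1)$ contains an independent set of size $q+1$ using exactly one vertex from each planted clique.

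First I would set up the probabilistic estimate. Each of the $q+1$ planted cliques has size concentrated around $n/(q+1)$; since $q \le \Omega(n/\log n)$, each clique has size $m \ge \Omega(\log n)$, and a simple Chernoff bound shows every clique has $\ge n/(2(q+1)) =: m_0$ vertices w.h.p. Now I would reveal the cliques one at a time and greedily pick a vertex $v_i$ from clique $i$ that is non-adjacent (in $G$) to all of $v_1,\ldots,v_{i-1}$ already chosen. Conditioned on the labels, all non-clique edges are independent $\mathrm{Ber}(1/2)$, so a \emph{fixed} vertex in clique $i$ avoids all $i-1$ previously chosen vertices with probability $2^{-(i-1)} \ge 2^{-q}$. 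Since clique $i$ has $\ge m_0$ vertices and these events are independent across the $m_0$ candidates, the probability that \emph{no} valid $v_i$ exists in clique $i$ is at most $(1 - 2^{-q})^{m_0} \le \exp(-m_0 2^{-q})$. Union bounding over $i \in [q+1]$, the greedy procedure fails with probability at most $(q+1)\exp(-m_0 2^{-q})$.

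The catch is that this bound is only $o(1)$ when $m_0 2^{-q} = \omega(\log q)$, i.e.\ roughly $q \lesssim \log_2 n - \log_2\log n$, which is much weaker than $q \le \Omega(n/\log n)$. So the naive greedy argument does not reach the claimed range, and this is the main obstacle. To fix it, I would instead use the fact that we do \emph{not} need one vertex per clique — we only need an independent set of size $q+1$ somewhere in $G$, and $\MC(n,q+1)$ contains $G(n/(q+1), 1/2)$-like structure (the bipartite-between-classes edges) that is very rich. Concretely, pick a \emph{single} representative set $S$ by taking one vertex from each of the $q+1$ cliques (so $|S| = q+1$); the induced subgraph $G[S]$ is distributed exactly as $G(q+1, 1/2)$ (all pairs in $S$ lie in distinct cliques, so their edges are i.i.d.\ fair coins). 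It is classical that $G(q+1,1/2)$ has an independent set of size $\ge (1-o(1))\log_2(q+1)$ w.h.p., which is still too small. The real gain comes from averaging over \emph{many} choices of $S$: by a first-moment / second-moment argument over random transversals, or equivalently by directly estimating the probability that $\MC(n,q+1)$ fails to contain \emph{any} independent set hitting each clique at most once of size $q+1$, one shows this failure probability is $o(1)$ in the full range $q \le c\, n/\log n$. I would carry this out by the second moment method: let $Z$ count $(q+1)$-subsets that are independent and are transversals (one per clique); compute $\E Z = \binom{?}{}\cdot 2^{-\binom{q+1}{2}}\cdot(\text{clique-size factors})$ and $\E Z^2$, and show $\E Z^2 \le (1+o(1))(\E Z)^2$, so $Z > 0$ w.h.p.\ by Chebyshev — this is the step I expect to require the most care, as one must control the correlation between two transversals sharing vertices, and it is here that the hypothesis $q \le \Omega(n/\log n)$ (guaranteeing each clique is large enough to supply many independent candidates) is used.

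\textbf{Remark on a cleaner route.} Alternatively, and perhaps more simply: an independent set of size $q+1$ is a \emph{sufficient} witness, but so is any "fractional" obstruction — e.g., showing the Lovász theta function or just the independence number of $G$ exceeds $q$. Since $\MC(n,q+1)$ stochastically dominates (after deleting clique edges) an $\mathrm{Ber}(1/2)$ random graph on the $q+1$ cliques' worth of vertices, and since even $G(q+1,1/2)$ already fails to be $q$-clique-coverable once $q+1$ exceeds its clique-cover number $\approx (q+1)/\log(q+1)$... but this last inequality is false, so one genuinely does need the transversal-independent-set argument above rather than a black-box reduction to $G(m,1/2)$.
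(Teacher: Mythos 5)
Your proposed witness does not exist in most of the claimed range, so the approach cannot be completed. In $\MC(n,q+1)$ any independent set uses at most one vertex per planted clique, and all cross-clique edges are i.i.d.\ $\mathrm{Ber}(1/2)$; hence the expected number of independent sets of size $t$ is at most $\binom{n}{t}2^{-\binom{t}{2}}$, exactly as in $G(n,1/2)$, and the independence number is w.h.p.\ only about $2\log_2 n$. So once $q \ge (2+\epsilon)\log_2 n$ --- which is almost all of the range $q \le c\,n/\log n$ --- there is w.h.p.\ \emph{no} independent set of size $q+1$ at all, transversal or otherwise. The second-moment computation you defer cannot rescue this: the first moment of your count $Z$ of independent transversals is roughly $\bigl(\tfrac{n}{q+1}\bigr)^{q+1}2^{-\binom{q+1}{2}} = \bigl(\tfrac{n}{(q+1)2^{q/2}}\bigr)^{q+1}$, which is super-exponentially small for $q \gg \log n$, so by Markov $Z=0$ w.h.p.\ and no concentration argument can show $Z>0$. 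Your greedy estimate correctly flags the $q \approx \log_2 n$ barrier, but that barrier is information-theoretic for this particular obstruction, not an artifact of the greedy analysis; and as you yourself note, the ``cleaner route'' in your remark is also false. In short, the hard part of the proposition is precisely the regime $\log n \ll q \le cn/\log n$, where the graph is \emph{not} certified non-$q$-partitionable by any single independent set.

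The paper's argument uses a different obstruction: it shows that w.h.p.\ every clique of $G \sim \MC(n,q+1)$ is either small (size at most $\tfrac34\cdot\tfrac{n}{q+1}$) or is contained in a single planted class $S_j$ and occupies more than half of it. This uses three high-probability facts: the underlying $G(n,1/2)$ contains no $K_{m,m}$ with $m \ge (2+\epsilon)\log_2 n$ (so a large clique cannot straddle two classes substantially), the class sizes concentrate around $\tfrac{n}{q+1}$, and each vertex has at most $(\tfrac12+2\epsilon)\tfrac{n}{q+1}$ neighbors in any other class (which pins a large clique inside one class). Then a counting argument finishes: each class can host at most one ``large'' clique of a hypothetical partition $T_1,\ldots,T_q$, so one can build an injection $\phi:[q]\to[q+1]$ with $|T_i| \le |S_{\phi(i)}|$, and since some class is missed, $\sum_i |T_i| < \sum_j |S_j| = n$, contradicting that the $T_i$ partition the vertex set. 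If you want to salvage your write-up, you need an obstruction of this global, counting type (bounding how much of the graph $q$ cliques can cover) rather than a local independent-set witness.
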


\begin{proof}
Fix an absolute constant $\epsilon > 0$, to be chosen later. Assume $q \le c n /\log n$ for a constant $c = c(\epsilon) > 0$ to be chosen later. The proof hinges on 3 basic facts, which hold w.h.p.:
\begin{itemize}
    \item[(i)] $G(n,1/2)$ does not contain the complete bipartite graph $K_{m,m}$ as a subgraph, for $m \ge (2+\epsilon) \log_2 n$.
    \item[(ii)] Letting $S_1,\ldots,S_{q+1}$ denote the color classes of $\MC(n,q+1)$, we have $|S_i| \in (1 \pm \epsilon) \frac{n}{q+1}$ for all $i \in [q+1]$.
    \item[(iii)] In $\MC(n,q+1)$, any vertex $v \in S_i$ has at most $(1/2 + 2\epsilon)\frac{n}{q+1}$ neighbors in $S_j$, for $i \ne j$.
\end{itemize}
Standard arguments show that (i)--(iii) hold with probability $1-o(1)$, and we omit the details. The proof of (i) is a first moment calculation (compute the expected number of copies of $K_{m,m}$ and apply Markov's inequality), and the proof of (ii) and (iii) uses Bernstein's inequality along with a union bound.

Suppose $G \sim \MC(n,q+1)$. To complete the proof, it suffices to show that properties (i)--(iii) deterministically imply that $G$ has no partition into $q$ cliques (where property (i) applies to the underlying random graph $G' \sim G(n,1/2)$ used to generate $G$, before the $q+1$ cliques were added). Assume (i)--(iii) hold, and suppose for contradiction that $G$ admits a partition $V(G) = T_1 \sqcup T_2 \sqcup \cdots \sqcup T_q$ into cliques.

We first claim that for every $i \in [q]$, we either have (Case I) $|T_i| \le \frac{3}{4} \cdot \frac{n}{q+1}$ or (Case II) for some $j \in [q+1]$, $T_i \subseteq S_j$ and $|T_i| > \frac{1}{2}|S_j|$. To see this, note that if Case I fails then $|T_i| > \frac{3}{4} \cdot \frac{n}{q+1}$, and so to avoid violating property (i) there must exist $j$ such that $|T_i \cap S_j| \ge \frac{3}{4} \cdot \frac{n}{q+1} - (2+\epsilon) \log_2 n > (1/2 + 2\epsilon) \frac{n}{q+1}$. Now property (iii) implies $T_i \subseteq S_j$. Also, property (ii) implies $|T_i| > \frac{1}{2}|S_j|$. This proves the claim.

Using the above claim, we can now construct an injective map $\phi: [q] \to [q+1]$ such that $|T_i| \le |S_{\phi(i)}|$. First, for $i$ in Case II, set $\phi(i)$ to be the corresponding $j$; then for $i$ in Case I, set $\phi(i)$ to be any unused $j$ value. Since some $j \in [q+1]$ is not in the image of $\phi$, we have $\sum_{i \in [q]} |T_i| < \sum_{j \in [q+1]} |S_j| = n$, a contradiction.
\end{proof}

\bibliographystyle{alpha}
\bibliography{main}

\newcommand{\etalchar}[1]{$^{#1}$}
\begin{thebibliography}{RSWY22}

\bibitem[AKS98]{AKS98}
Noga Alon, Michael Krivelevich, and Benny Sudakov.
\newblock Finding a large hidden clique in a random graph.
\newblock {\em Random Structures \& Algorithms}, 13(3-4):457--466, 1998.

\bibitem[BB20]{secret-leakage}
Matthew Brennan and Guy Bresler.
\newblock Reducibility and statistical-computational gaps from secret leakage.
\newblock In {\em Conference on Learning Theory}, pages 648--847. PMLR, 2020.

\bibitem[BBH18]{BBH-reductions}
Matthew Brennan, Guy Bresler, and Wasim Huleihel.
\newblock Reducibility and computational lower bounds for problems with planted
  sparse structure.
\newblock In {\em Conference On Learning Theory}, pages 48--166. PMLR, 2018.

\bibitem[BBK{\etalchar{+}}21]{spectral-planting}
Afonso~S Bandeira, Jess Banks, Dmitriy Kunisky, Cristopher Moore, and
  Alexander~S Wein.
\newblock Spectral planting and the hardness of refuting cuts, colorability,
  and communities in random graphs.
\newblock In {\em Conference on Learning Theory}, pages 410--473. PMLR, 2021.

\bibitem[BEH{\etalchar{+}}22]{fp}
Afonso~S Bandeira, Ahmed {El Alaoui}, Samuel~B Hopkins, Alexander~S Wein,
  Tselil Schramm, and Ilias Zadik.
\newblock The {Franz--Parisi} criterion and computational trade-offs in high
  dimensional statistics.
\newblock In {\em 36th Conference on Neural Information Processing Systems
  (NeurIPS 2022)}, 2022.

\bibitem[BHK{\etalchar{+}}16]{BHKK+16}
Boaz Barak, Samuel~B. Hopkins, Jonathan Kelner, Pravesh Kothari, Ankur Moitra,
  and Aaron Potechin.
\newblock A nearly tight sum-of-squares lower bound for the planted clique
  problem.
\newblock In {\em 2016 IEEE 57th Annual Symposium on Foundations of Computer
  Science (FOCS)}, pages 428--437, 2016.

\bibitem[BKW20]{sk-cert}
Afonso~S Bandeira, Dmitriy Kunisky, and Alexander~S Wein.
\newblock Computational hardness of certifying bounds on constrained {PCA}
  problems.
\newblock In {\em 11th Innovations in Theoretical Computer Science Conference
  (ITCS)}, volume 151, 2020.

\bibitem[BKW22]{nonneg-pca}
Afonso Bandeira, Dmitriy Kunisky, and Alexander Wein.
\newblock Average-case integrality gap for non-negative principal component
  analysis.
\newblock In {\em Mathematical and Scientific Machine Learning}, pages
  153--171. PMLR, 2022.

\bibitem[BR13]{BR-reduction}
Quentin Berthet and Philippe Rigollet.
\newblock Complexity theoretic lower bounds for sparse principal component
  detection.
\newblock In {\em Conference on learning theory}, pages 1046--1066. PMLR, 2013.

\bibitem[BvH16]{BvH}
Afonso~S Bandeira and Ramon {van}~Handel.
\newblock Sharp nonasymptotic bounds on the norm of random matrices with
  independent entries.
\newblock {\em Annals of Probability}, 44(4):2479--2506, 2016.

\bibitem[CLR17]{CLR-reduction}
Tony Cai, Tengyuan Liang, and Alexander Rakhlin.
\newblock Computational and statistical boundaries for submatrix localization
  in a large noisy matrix.
\newblock {\em The Annals of Statistics}, 45(4):1403, 2017.

\bibitem[CO05]{lovasztheta}
Amin Coja-Oghlan.
\newblock The {Lov\'{a}sz} number of random graphs.
\newblock {\em Comb. Probab. Comput.}, 14(4):439–465, 2005.

\bibitem[FGR{\etalchar{+}}17]{sq-clique}
Vitaly Feldman, Elena Grigorescu, Lev Reyzin, Santosh~S Vempala, and Ying Xiao.
\newblock Statistical algorithms and a lower bound for detecting planted
  cliques.
\newblock {\em Journal of the ACM (JACM)}, 64(2):1--37, 2017.

\bibitem[FK03]{FeigeK03}
Uriel Feige and Robert Krauthgamer.
\newblock The probable value of the {Lovász--Schrijver} relaxations for
  maximum independent set.
\newblock {\em SIAM J. Comput.}, 32(2):345--370, 2003.

\bibitem[HKP{\etalchar{+}}17]{sos-detecting}
Samuel~B Hopkins, Pravesh~K Kothari, Aaron Potechin, Prasad Raghavendra, Tselil
  Schramm, and David Steurer.
\newblock The power of sum-of-squares for detecting hidden structures.
\newblock In {\em 58th Annual Symposium on Foundations of Computer Science
  (FOCS)}, pages 720--731. IEEE, 2017.

\bibitem[Hop18]{hopkins-thesis}
Samuel Hopkins.
\newblock {\em Statistical Inference and the Sum of Squares Method}.
\newblock PhD thesis, Cornell University, 2018.

\bibitem[HS17]{HS-bayesian}
Samuel~B Hopkins and David Steurer.
\newblock Efficient bayesian estimation from few samples: community detection
  and related problems.
\newblock In {\em 58th Annual Symposium on Foundations of Computer Science
  (FOCS)}, pages 379--390. IEEE, 2017.

\bibitem[HW21]{ld-counterexamples}
Justin Holmgren and Alexander~S Wein.
\newblock Counterexamples to the low-degree conjecture.
\newblock In {\em 12th Innovations in Theoretical Computer Science Conference
  (ITCS)}. Schloss Dagstuhl-Leibniz-Zentrum f{\"u}r Informatik, 2021.

\bibitem[HWX15]{HWX-reductions}
Bruce Hajek, Yihong Wu, and Jiaming Xu.
\newblock Computational lower bounds for community detection on random graphs.
\newblock In {\em Conference on Learning Theory}, pages 899--928. PMLR, 2015.

\bibitem[Jer92]{Jerrum}
Mark Jerrum.
\newblock Large cliques elude the metropolis process.
\newblock {\em Random Struct. Algorithms}, 3:347--360, 1992.

\bibitem[KM18]{hardness-of-Nash-welfare}
Pravesh~K. Kothari and Ruta Mehta.
\newblock Sum-of-squares meets nash: Lower bounds for finding any equilibrium.
\newblock In {\em Proceedings of the 50th Annual ACM SIGACT Symposium on Theory
  of Computing}, STOC 2018, page 1241–1248, New York, NY, USA, 2018.
  Association for Computing Machinery.

\bibitem[KM21]{KM21}
Pravesh~K. Kothari and Peter Manohar.
\newblock A stress-free sum-of-squares lower bound for coloring.
\newblock In {\em Proceedings of the 36th Computational Complexity Conference},
  CCC '21, Dagstuhl, DEU, 2021. Schloss Dagstuhl--Leibniz-Zentrum fuer
  Informatik.

\bibitem[Kom88]{sion-2}
Hidetoshi Komiya.
\newblock Elementary proof for {Sion's} minimax theorem.
\newblock {\em Kodai mathematical journal}, 11(1):5--7, 1988.

\bibitem[Ku{\v{c}}95]{kucera}
Lud{\v{e}}k Ku{\v{c}}era.
\newblock Expected complexity of graph partitioning problems.
\newblock {\em Discrete Applied Mathematics}, 57(2-3):193--212, 1995.

\bibitem[KWB22]{ld-notes}
Dmitriy Kunisky, Alexander~S Wein, and Afonso~S Bandeira.
\newblock Notes on computational hardness of hypothesis testing: Predictions
  using the low-degree likelihood ratio.
\newblock In {\em ISAAC Congress (International Society for Analysis, its
  Applications and Computation)}, pages 1--50. Springer, 2022.

\bibitem[RSWY22]{planted-planted}
Cynthia Rush, Fiona Skerman, Alexander~S Wein, and Dana Yang.
\newblock Is it easier to count communities than find them?
\newblock {\em arXiv preprint arXiv:2212.10872}, 2022.

\bibitem[Sio58]{sion-1}
Maurice Sion.
\newblock On general minimax theorems.
\newblock {\em Pacific J. Math.}, 8(4):171--176, 1958.

\bibitem[SW22]{SW-recovery}
Tselil Schramm and Alexander~S Wein.
\newblock Computational barriers to estimation from low-degree polynomials.
\newblock {\em The Annals of Statistics}, 50(3):1833--1858, 2022.

\bibitem[ZSWB22]{lll}
Ilias Zadik, Min~Jae Song, Alexander~S Wein, and Joan Bruna.
\newblock Lattice-based methods surpass sum-of-squares in clustering.
\newblock In {\em Conference on Learning Theory}, pages 1247--1248. PMLR, 2022.

\end{thebibliography}

\end{document}